\theoremstyle{plain}
\newtheorem{theorem}{Theorem}[section]
\newtheorem{proposition}[theorem]{Proposition}
\newtheorem{lemma}[theorem]{Lemma}
\newtheorem{corollary}[theorem]{Corollary}
\theoremstyle{definition}
\newtheorem{remark}[theorem]{Remark}
\newtheorem{example}[theorem]{Example}
\theoremstyle{remark}
\numberwithin{equation}{section}
\newcommand{\FF}{\mathbb{F}}
\newcommand{\NN}{\mathbb{N}}
\newcommand{\RR}{\mathbb{R}}
\newcommand{\cA}{\mathcal{A}}
\newcommand{\cC}{\mathcal{C}}
\newcommand{\cE}{\mathcal{E}}
\newcommand{\cF}{\mathcal{F}}
\newcommand{\cM}{\mathcal{M}}
\newcommand{\rmm}{\mathrm{m}}
\newcommand{\rmh}{\mathrm{h}}
\newcommand{\diff}{\mathrm{d}}
\newcommand{\dd}{\,\mathrm{d}}
\newcommand{\1}{\mathbf{1}}
\newcommand*{\EX}[2][]{E^{#1}\left [ #2 \right ]}
\newcommand*{\cEX}[3][]{E^{#1}\left[ #2 \,\middle\vert\, #3 \right]}
\newcommand*{\as}[1]{#1\text{-a.s.}}
\newcommand*{\ul}[1]{\underline{#1}}
\newcommand*{\mart}[3][]{\cM_{#1}^{#2} #3}
\begin{document}
\title{%
Strict Local Martingales and Optimal Investment in a Black--Scholes Model with a Bubble%
\footnote{The authors thank J\'er\^ome Detemple, David Hobson, Johannes Muhle-Karbe, R\'emy Praz, and, in particular, Martin Schweizer for stimulating discussions and comments. We would also like to thank an associate editor and the referees for helpful comments. Moreover, we gratefully acknowledge financial support by the Swiss Finance Institute and by the National Centre of Competence in Research ``Financial Valuation and Risk Management'' (NCCR FINRISK), Project D1 (Mathematical Methods in Financial Risk Management). The NCCR FINRISK is a research instrument of the Swiss National Science Foundation.}
}
\date{}
\author{%
  Martin Herdegen%
  \thanks{%
  Department of Statistics, University of Warwick, Coventry, CV4 7AL, UK, email
  \href{mailto:M.Herdegen@warwick.ac.uk}{\nolinkurl{M.Herdegen@warwick.ac.uk}}.
  }
  \and
  Sebastian Herrmann%
  \thanks{%
  Department of Mathematics, University of Michigan, 530 Church Street, Ann Arbor, MI 48109, USA, email
  \href{mailto:sherrma@umich.edu}{\nolinkurl{sherrma@umich.edu}}.
  }
}
\maketitle

\begin{abstract}
There are two major streams of literature on the modeling of financial bubbles: the strict local martingale framework and the Johansen--Ledoit--Sornette (JLS) financial bubble model. Based on a class of models that embeds the JLS model and can exhibit strict local martingale behavior, we clarify the connection between these previously disconnected approaches. While the original JLS model is never a strict local martingale, there are relaxations which can be strict local martingales and which preserve the key assumption of a log-periodic power law for the hazard rate of the time of the crash. We then study the optimal investment problem for an investor with constant relative risk aversion in this model. We show that for positive instantaneous expected returns, investors with relative risk aversion above one always ride the bubble.
\end{abstract}

\vspace{0.5em}

{\small
\noindent \emph{Keywords} Bubbles; Strict local martingales; JLS model; Optimal investment; Utility maximization; Power utility.

\vspace{0.25em}
\noindent \emph{AMS MSC 2010}
Primary,
91G10; %
Secondary,
49N15, %
49J15. %

\vspace{0.25em}
\noindent \emph{JEL Classification}
G11, %
C61. %
}

\section{Introduction}
\label{sec:introduction}

Financial bubbles \cite{SornetteKaizoji2010, Protter2013, ScherbinaSchlusche2013} are often associated with a disparity between the price of an asset and its ``fundamental value''. It has been argued in the mathematical finance literature that this form of mispricing can be captured very generally by modeling asset prices as processes that are \emph{strict local martingales} (i.e., local martingales that are not martingales) under some equivalent local martingale measure (ELMM); see Loewenstein and Willard~\cite{LoewensteinWillard2000a}, Cox and Hobson~\cite{CoxHobson2005}, Heston, Loewenstein, and Willard~\cite{HestonLoewensteinWillard2007}, Jarrow, Protter, and Shimbo~\cite{JarrowProtterShimbo2007, JarrowProtterShimbo2010}, Protter~\cite{Protter2013}, and the references therein. Another strand of the literature on financial bubbles originated from the idea of fitting asset prices to a so-called log-periodic power law in order to detect and predict the end of possible bubbles; see Bouchaud, Johansen, and Sornette~\cite{BouchaudJohansenSornette1996} and Feigenbaum and Freund \cite{FeigenbaumFreund1996}. This led to the development of the Johansen--Ledoit--Sornette (JLS) financial bubble model \cite{JohansenSornette1999.criticalcrashes, JohansenLedoitSornette2000}. However, the JLS model is a martingale by definition and does not mention strict local martingales at all.

This article has two objectives: (1) to clarify the connection between these previously disconnected modeling approaches and (2) to analyze how a rational investor would act in the presence of an asset price bubble of a generalized JLS type.

\paragraph{The Johansen--Ledoit--Sornette model.}
The JLS model proposes\footnote{The following specification is taken from \cite{SornetteWoodardYanZhou2013} (up to changes in notation); the original specification in \cite{JohansenLedoitSornette2000} is slightly different and in particular has no explicit Brownian component.} that the price process of a financial asset can be modeled as the sum of its ``fundamental value'' (which is not further specified) and a \emph{bubble component} $S = (S_t)_{t\in[0,T]}$ which has the dynamics
\begin{align}
\label{eqn:intro:JLS:dynamics:original}
\frac{\diff S_t}{S_{t-}}
&= \phi'(t) \dd t + \sigma \dd W_t - \delta \dd J_t,
\end{align}
where $\phi'$ is a deterministic function, $J_t = \1_{\lbrace t \geq \gamma \rbrace}$ jumps from $0$ to $1$ at the time $\gamma$ of the crash, the constant $\delta \in (0,1)$ is the relative loss of the bubble component at the time of the crash, and $T$ is the time horizon. The time of the crash $\gamma$ is a positive random variable independent\footnote{This assumption is not explicit in the JLS model but implicit as the postulated form of the hazard rate \eqref{eqn:intro:JLS:hazard rate:LPPL} does not depend on $W$.} of the Brownian motion $W$ with a distribution function $G$ that is sufficiently regular. It is \emph{assumed} that $S$ is a (true) martingale, which in turn determines $\phi'$ via $\phi'(t) = \delta \kappa^G(t)$, $t\in (0,T)$, where $\kappa^G = G'/(1-G)$ is the hazard rate of $\gamma$.

A key assumption is that the hazard rate of $\gamma$ follows a \emph{log-periodic power law} (LPPL)
\begin{align}
\label{eqn:intro:JLS:hazard rate:LPPL}
\kappa^G(t)
&= B'\vert T-t\vert^{m-1} + C'\vert T-t\vert^{m-1} \cos \left( \varpi \log(T-t) - \psi' \right),\quad t\in(0,T),
\end{align}
where $B',C',m,T,\varpi$, and $\psi'$ are suitable real parameters; we refer to \cite[Section~2.1]{SornetteWoodardYanZhou2013} for interpretations.\footnote{The parameters have to be chosen such that the hazard rate is always nonnegative; cf.~\cite{VonBothmerMeister2003}. This constraint was ignored in many of the early articles. The ``critical time'' $T>0$ is interpreted as the end of the bubble regime \cite{SornetteWoodardYanZhou2013}, and the crash can happen at any time before $T$.} The JLS model confines the parameter $m$ to the interval $(0,1)$. This condition is equivalent to having a positive probability that the bubble does \emph{not} burst strictly before $T$ and excludes strict local martingale dynamics for $S$ (Theorem~\ref{thm:JLS:strict local martingale}). However, the justification for $m>0$ given in \cite[Section~2.2]{SornetteWoodardYanZhou2013} is debatable; see the discussion in Section~\ref{sec:relaxed JLS}. This motivates the study of a generalized JLS model. 

\paragraph{Model class and main features.}
We embed the JLS model in a larger class by relaxing some of its assumptions: $G$ may be any distribution function in $C^2[0,T)$ with $G'>0$ on $[0,T)$, the relative loss $\delta$ may be a $[0,1]$-valued deterministic function of time with $\delta(T)=0$, and $S$ may have a constant instantaneous expected return $\mu \in \RR$. In particular, the probability that the bubble does \emph{not} burst before or at $T$ can be chosen to be zero or positive. Instead of assuming that $S$ is a martingale, we only require that $S$ be a \emph{local} martingale for $\mu = 0$. The main features of this model class are:

\begin{enumerate}
\item It is \emph{flexible} enough to include specifications such that $S$ becomes a strict local martingale under a large class of ELMMs. This allows us to analyze to what extent the JLS model can be embedded in the strict local martingale framework.

\item It is \emph{tractable} enough to permit a semi-explicit solution to a utility maximization problem despite the incompleteness of the model class induced by the jump. This allows us to analyze how a rational agent should behave in the presence of an asset price bubble of this type.
\end{enumerate}

\paragraph{Objective (1): The relaxed JLS model and strict local martingales.}
The JLS model is a martingale by definition. To meet our first objective, we thus consider the \emph{relaxed JLS model} which is defined as follows: we preserve the key assumption of a log-periodic power law \eqref{eqn:intro:JLS:hazard rate:LPPL} for the hazard rate of the jump time but allow the parameter $m$ to be any real number (not necessarily in $(0,1)$), allow $\delta$ to be time-dependent in $[0,1]$ (not necessarily a constant in $(0,1)$), and only require $S$ to be a \emph{local} martingale (not necessarily a martingale) under the physical measure. We then find that the relaxed JLS model is a strict local martingale if and only if $m \leq 0$ and the function $(1-\delta)\kappa^G$ is integrable on $(0,T)$ (Theorem~\ref{thm:JLS:strict local martingale} and Remark~\ref{rem:JLS:strict local martingale}). In this case, the bubble bursts almost surely before $T$ and $\limsup_{t \uparrow\uparrow T} \delta(t) = 1$, i.e., for every $\varepsilon > 0$, there is a positive probability that the bubble component loses a fraction $1-\varepsilon$ of its value when the crash occurs.

\paragraph{Objective (2): Optimal investment.}
We study the problem of maximizing expected utility from terminal wealth for a power utility investor in the model class described above, assuming that the asset's instantaneous expected return is positive.\footnote{Korn and Wilmott \cite{KornWilmott02} study a related optimal investment problem where the distribution of the jump time and the jump size are \emph{unknown} and the optimization follows a worst-case approach; see also \cite{Seifried2010, BelakChristensenMenkens2014} and the references therein for this approach.} We provide an explicit formula for the optimal strategy and the certainty equivalent of trading in the market in terms of the solution to an integral equation (or to a first-order ODE with a nonstandard terminal condition); see Theorems~\ref{thm:main result} and \ref{thm:certainty equivalent}. The optimal strategy can be decomposed into two parts (Theorem~\ref{thm:decomposition}): a \emph{myopic demand}, which optimizes the \emph{local} performance at each point in time, and a \emph{hedging demand}, which takes into account how the dynamics of the asset price change \emph{globally} over the investor's time frame. This decomposition allows us to conclude that investors with relative risk aversion above $1$ never sell the asset short. In other words, those investors ride the bubble instead of attacking it. This theoretical insight is in line with the empirical findings of \cite{BrunnermeierNagel2004} that hedge funds were heavily invested in the stocks of the dot-com bubble despite being aware of the presence of the bubble.\footnote{\cite{TeminVoth2004} draw the same empirical conclusion from data describing the trading activities of a well-informed bank riding the South Sea bubble in 1720.}

Based on numerical illustrations, we discuss the comparative statics of the optimal strategy and the certainty equivalent. Moreover, we find that the optimal strategy is not fundamentally different when the asset price process is a strict local martingale (as opposed to the situation where it is a true martingale) under a large class of ELMMs.

\paragraph{Default risk interpretation.}
Even though the underlying economic questions are completely different, from a purely mathematical perspective, the optimal investment problem could alternatively be viewed in the context of partial \emph{default risk}. This problem has recently been studied by \cite{LimQuenez2011} and \cite{JiaoPham2011}; here, $\gamma$ is interpreted as the time of default of the risky asset. In both articles, the optimal strategy is characterized in terms of a solution to a BSDE (with jumps). In fact, our setup can be seen as a special case of \cite{JiaoPham2011}. Note, however, that our method of solving the problem (convex duality) is different from theirs (dynamic programming and BSDEs) and our solution is more explicit than theirs (in cases comparable to our setup); see \cite[Section~4.3]{JiaoPham2011}. More importantly, the convex duality approach to utility maximization is naturally linked to ELMMs. It is therefore better suited than dynamic programming for studying the strict local martingale property of the asset price process.

\paragraph{Organization of the paper.}
The rest of the paper is organized as follows. Section~\ref{sec:model class} fixes the probabilistic setup and notation, describes the model class, and explains how the JLS model and its relaxation are embedded therein. Section~\ref{sec:ELMM and strict local martingales} contains the construction of a (sub-)class of ELMMs for our financial market and presents conditions under which the asset price is a strict local martingale under such an ELMM. The optimal investment problem is studied in Section~\ref{sec:optimal investment}. Appendix~\ref{sec:change of filtration} contains a technical result that allows us to switch between certain equivalent measures and filtrations. The integral equation associated with the candidate optimal strategy is analyzed in Appendix~\ref{sec:analytic results}, while the technical aspects of the verification of its optimality are deferred to Appendix~\ref{sec:verification}.

\section{Model class}
\label{sec:model class}

Fix a finite time horizon $T>0$, and let $(\Omega,\cF,P)$ be a probability space carrying a Brownian motion $W=(W_t)_{t \in[0,T]}$ and an independent random variable $\gamma$ taking values in $(0, T]$. Define the (raw) filtrations $\FF^W = (\cF^W_t)_{t \in [0,T]}$, $\FF^\gamma = (\cF^\gamma_t)_{t \in [0,T]}$, and $\FF = (\cF_t)_{t \in [0,T]}$ by $\cF^W_t = \sigma\left(W_u: 0 \leq u \leq t \right)$, $\cF^\gamma_t = \sigma\left(\1_{\lbrace\gamma \leq u \rbrace}: 0 \leq u \leq t \right)$, and $\cF_t = \sigma\left(\cF^W_t, \cF^\gamma_t\right)$. Note that $\FF^W$ and $\FF^\gamma$ are independent under $P$ and that $\gamma$ is a stopping time with respect to $\FF^\gamma$ and $\FF$. Unless otherwise stated, all probabilistic notions requiring a probability measure and/or a filtration (e.g., (local) martingale properties of processes) pertain to $P$ and/or $\FF$. 

We denote the distribution function of $\gamma$ under $P$ by $G$ and assume that $G \in C^2[0, T)$ and $G' > 0$ on $[0, T)$; note that the law of $\gamma$ (which we denote by $\diff G$) may have a point mass at $T$, in which case $\Delta G(T) > 0$. We recall that the \emph{hazard rate} of $\gamma$ (under $P$) is the function $\kappa^G:[0,T)\to(0,\infty)$ defined by 
\begin{align}
\label{eqn:kappa}
\kappa^G(t)
&= \left(-\log(1- G(t))\right)'
= \frac{G'(t)}{1 - G(t)}.
\end{align}
It describes the conditional probability of the jump occurring in the next instant given that the jump has not happened yet. The integrability of the hazard rate is related to the existence of a point mass of $\diff G$ at $T$ as follows.

\begin{proposition}
\label{prop:hazard rate}
The following are equivalent:
\begin{enumerate}
\item The hazard rate $\kappa^G$ is nonintegrable on $(0,T)$.
\item $G(T-) = 1$.
\item $\Delta G(T) = 0$.
\end{enumerate}
\end{proposition}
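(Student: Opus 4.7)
The plan is straightforward: exploit the identity $\kappa^G = -(\log(1-G))'$ given in \eqref{eqn:kappa}, together with the fact that $\gamma$ takes values in $(0,T]$ (so $G(0)=0$ and $G(T)=1$).

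First I would handle the easy equivalence (b) $\Leftrightarrow$ (c): since $\gamma$ takes values in $(0,T]$, one has $G(T) = 1$, hence $\Delta G(T) = G(T) - G(T-) = 1 - G(T-)$, which vanishes exactly when $G(T-) = 1$.

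Next, for (a) $\Leftrightarrow$ (b), I would integrate the antiderivative form of $\kappa^G$ from $0$ to any $s \in (0,T)$. Using $G \in C^2[0,T)$ and $G' > 0$ on $[0,T)$, together with $G(0) = 0$ (which follows from $\gamma > 0$), the fundamental theorem of calculus yields
\begin{equation*}
\int_0^s \kappa^G(t)\dd t \;=\; -\log(1-G(s)) + \log(1-G(0)) \;=\; -\log(1-G(s)).
\end{equation*}
Since $G$ is increasing on $[0,T)$, the left-hand side is monotone in $s$ and the monotone convergence theorem gives
\begin{equation*}
\int_0^T \kappa^G(t)\dd t \;=\; \lim_{s\uparrow T}\bigl(-\log(1-G(s))\bigr) \;=\; -\log(1-G(T-)),
\end{equation*}
with the convention $-\log 0 = +\infty$. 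Thus the integral is infinite if and only if $G(T-)=1$, which is exactly the equivalence (a) $\Leftrightarrow$ (b).

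I do not anticipate any real obstacle: the only mild points are checking $G(0)=0$ (from $\gamma \in (0,T]$) and invoking monotone convergence to pass to the limit $s \uparrow T$ under the integral. Combining the two equivalences closes the chain (a) $\Leftrightarrow$ (b) $\Leftrightarrow$ (c).
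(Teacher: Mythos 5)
Your proof is correct and essentially identical to the paper's: both reduce (b) $\Leftrightarrow$ (c) to $G(T)=1$, and both integrate $\kappa^G = -(\log(1-G))'$ over $(0,T)$ to obtain $\int_0^T \kappa^G(u)\dd u = -\log(1-G(T-))$. You merely spell out $G(0)=0$ and the monotone-convergence passage to the limit, which the paper leaves implicit.
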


\begin{proof}
As $\gamma$ is $(0,T]$-valued, $G(T) = 1$, so that the equivalence ``(b) $\Leftrightarrow$ (c)'' is trivial. Next, by the definition of $\kappa^G$, $\kappa^G(t) = -\frac{\diff}{\diff t}\log(1-G(t))$ for $t \in [0,T)$. Integrating both sides over $(0,T)$ yields
\begin{align*}
\int_0^T \kappa^G(u) \dd u
&= -\log(1-G(T-)) 
\end{align*}
(with $\log 0 := -\infty$) and proves the equivalence ``(a) $\Leftrightarrow$ (b)''.
\end{proof}

\subsection{Single jump local martingales}
\label{sec:single jump local martingales}

The asset price process in our model class is driven by the Brownian motion $W$ and a local martingale of finite variation which has a single jump at time $\gamma$. These \emph{single jump local martingales} play a major role in this paper. We introduce them here and collect some of their properties; we refer to \cite{HerdegenHerrmann2016.SJP} for a detailed study of the (local) martingale properties of this type of process.

For $F \in C^1[0, T)$, define the process $\mart{G}{F} = (\mart[t]{G}{F})_{t\in[0,T]}$ by
\begin{align}
\label{eqn:MGF}
\mart[t]{G}{F}
&= F(t) \1_{\lbrace t < \gamma \rbrace} + \cA^G F(\gamma) \1_{\lbrace t \geq \gamma \rbrace},
\end{align}
where  the function $\cA^G F:[0,T]\to\RR$ is given by
\begin{align}
\label{eqn:AGF}
\cA^G F(v)
&=
\begin{cases}
F(v) - \frac{F'(v)}{\kappa^G(v)}, &v\in[0,T),\\
F(v-)\1_{\lbrace \Delta G(T) > 0 \rbrace}, &v = T, \text{ if } \lim_{t \uparrow\uparrow T} F(t) \text{ exists in }\RR, \\
0, &v = T, \text{ if } \lim_{t \uparrow\uparrow T} F(t) \text{ does not exist.}
\end{cases}
\end{align}
Note that even though the function $F$ is only defined on the half-open interval $[0, T)$, the process $\mart{G}{F}$ is defined on the \emph{closed} interval $[0, T]$. Each trajectory $\mart[\cdot]{G}{F} (\omega)$ follows the deterministic function $F$ until just before the random time $\gamma(\omega)$, has a jump at time $\gamma$ (possibly of size $0$), and stays constant at $\cA^G F(\gamma(\omega))$ from time $\gamma(\omega)$ on.
The second and third lines in the definition \eqref{eqn:AGF} of $\cA^GF$ are only relevant if $\Delta G(T) > 0$ (otherwise $\gamma < T$ $\as{P}$). In this case, if $\mart{G}{F}$ is a local martingale, then the left limit $F(T-)$ exists in $\RR$ by Proposition~\ref{prop:local martingale property}~(b)~(i) below. This has an important implication: if $\gamma = T$, then by \eqref{eqn:MGF}--\eqref{eqn:AGF}, $\mart{G}{F}$ does not jump at all.

Under mild assumptions on $F$ and $G$, $\mart{G}{F}$ is a local $(P, \FF^\gamma)$-martingale, and so by the independence of $\FF^W$ and $\FF^\gamma$ under $P$ also a local $(P, \FF)$-martingale. The following proposition combines several results from \cite{HerdegenHerrmann2016.SJP} to provide easily checkable conditions on $F$ and $G$ for $\mart{G}{F}$ to be an integrable local martingale, a true martingale, or a square-integrable martingale with respect to the filtration $\FF^\gamma$. We stress that we may apply Proposition~\ref{prop:local martingale property} not only under $P$ but also under equivalent probability measures $Q \approx P$ on $(\Omega,\cF)$ as long as we replace $G$ by the distribution function of $\gamma$ under $Q$. Note, however, that unless $Q = P$, we can in general \emph{not} conclude that any (local) $(Q,\FF^\gamma)$-martingale is also a (local) $(Q,\FF)$-martingale. This is because $\FF^W$ and $\FF^\gamma$ can be dependent under $Q \neq P$. In this case, we have to resort to the technical ``change of filtration lemma'' (Lemma~\ref{lem:filtration} in Appendix~\ref{sec:change of filtration}) which allows us to pass from $\FF^\gamma$ to $\FF$ in certain situations.

\begin{proposition}
\label{prop:local martingale property}
Let $F \in C^1[0, T)$. 
\begin{enumerate}
\item The process $\mart{G}{F}$ is an integrable local $\FF^\gamma$-martingale if and only if $\int_0^T \vert \cA^G F(u) \vert G'(u) \dd u < \infty$. The latter condition holds if $F$ and $\cA^G F$ are bounded from below on $(0, T)$. This is automatically satisfied if $\mart{G}{F}$ is nonnegative.

\item Suppose that $\mart{G}{F}$ is a local $\FF^\gamma$-martingale.
\begin{enumerate}
\item If $\Delta G(T) > 0$, then $\mart{G}{F}$ is an $\FF^\gamma$-martingale and the limit $\lim_{t \uparrow \uparrow T} F(t)$ exists in $\RR$.
\item If $\Delta G(T) = 0$ and $\mart{G}{F}$ is integrable, then $\mart{G}{F}$ is an $\FF^\gamma$-martingale if and only if $\lim_{t\uparrow\uparrow T} F(t)(1-G(t)) = 0$.
\end{enumerate}
\item If $\int_0^T \left(\frac{F'(u)}{\kappa^G(u)}\right)^2 G'(u) \dd u < \infty$, then $\mart{G}{F}$ is a square-integrable $\FF^\gamma$-martingale.
\end{enumerate}
\end{proposition}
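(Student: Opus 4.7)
The plan is to exploit the elementary structure of the filtration $\FF^\gamma$. For $0 \leq s < t \leq T$, on $\{\gamma > s\}$ the regular conditional law of $\gamma$ given $\cF^\gamma_s$ is $\diff G$ restricted to $(s, T]$ and renormalized by $1 - G(s)$, while on $\{\gamma \leq s\}$ the random variable $\gamma$ is $\cF^\gamma_s$-measurable. Combined with the definition~\eqref{eqn:MGF} of $\mart{G}{F}$, this yields the explicit formula, for $0 \leq s < t < T$,
\begin{align*}
\cEX{\mart[t]{G}{F} \1_{\{\gamma > s\}}}{\cF^\gamma_s}
&= \frac{\1_{\{\gamma > s\}}}{1-G(s)}\left[F(t)(1-G(t)) + \int_s^t \cA^G F(u)\, G'(u)\dd u\right],
\end{align*}
whenever the integral is absolutely convergent. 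The key deterministic identity, immediate from $\kappa^G = G'/(1-G)$ and~\eqref{eqn:AGF}, is $\cA^G F(u)\, G'(u) = -[F(u)(1-G(u))]'$ on $[0,T)$, so that $\int_s^t \cA^G F(u)\, G'(u) \dd u = F(s)(1-G(s)) - F(t)(1-G(t))$ whenever the left-hand side is well defined. These two facts are the workhorses of the proof.

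For (a), taking $s = 0$ and absolute values yields $\EX{\vert \mart[t]{G}{F}\vert} = \vert F(t)\vert(1-G(t)) + \int_0^t \vert\cA^G F(u)\vert G'(u) \dd u$, so integrability at every $t$ is equivalent to $\int_0^T \vert\cA^G F(u)\vert G'(u) \dd u < \infty$. The local martingale property is established by localizing with $\tau_n = T - 1/n$: on $[0, T - 1/n]$ both $F$ and $\cA^G F$ are bounded (by continuity), the deterministic identity turns the conditional-expectation formula into the martingale equation, and the independence of $\FF^W$ and $\FF^\gamma$ under $P$ lets us pass from $\FF^\gamma$ to $\FF$. The sufficient criteria (boundedness from below, nonnegativity) reduce to controlling the positive part of $\mart[T]{G}{F}$ by the local martingale inequality applied to $\mart{G}{F}$ plus its lower bound.

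For (b)(i), on $\{\gamma = T\}$, which has positive probability when $\Delta G(T) > 0$, the process $\mart{G}{F}$ coincides with $F(\cdot)$ on $[0, T)$, so the càdlàg property of the local martingale forces $F(T-) = \lim_{t \uparrow\uparrow T} F(t)$ to exist in $\RR$; then $\cA^G F(T) = F(T-)$ by~\eqref{eqn:AGF}. The martingale property at $t = T$ follows by letting $t \uparrow T$ in the conditional-expectation formula and using the identity together with $1 - G(T-) = \Delta G(T)$ to absorb the point mass. For (b)(ii), the point-mass contribution vanishes and the formula at $t = T$ reduces the martingale identity on $\{\gamma > s\}$ to $\lim_{t \uparrow\uparrow T} F(t)(1-G(t)) = 0$.

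For (c), a parallel second-moment calculation, combined with expanding $(\cA^G F)^2 = (F - F'/\kappa^G)^2$ and applying the deterministic identity to $F^2$ instead of $F$, yields the clean formula
\begin{align*}
\EX{(\mart[t]{G}{F})^2}
&= F(0)^2 + \int_0^t \left(\frac{F'(u)}{\kappa^G(u)}\right)^2 G'(u) \dd u,
\end{align*}
so the hypothesis gives a uniform $L^2$-bound; the integrability in (a) then follows by Cauchy--Schwarz using $\int_0^T G'\dd u \leq 1$, and the boundary condition in (b)(ii) by dominated convergence. I expect the main subtlety throughout to be the careful treatment of the boundary at $t = T$, in particular reconciling the three-case definition~\eqref{eqn:AGF} of $\cA^G F(T)$ with the existence or non-existence of $F(T-)$ and with the presence or absence of a point mass $\Delta G(T)$.
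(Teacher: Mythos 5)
Your proposal takes a genuinely different, self-contained route: you compute the conditional expectation $\cEX{\mart[t]{G}{F}\1_{\{\gamma>s\}}}{\cF^\gamma_s}$ explicitly and exploit the telescoping identity $\cA^G F(u)\,G'(u)=-[F(u)(1-G(u))]'$, whereas the paper simply delegates parts (a) and (b) to the companion reference \cite{HerdegenHerrmann2016.SJP} and finishes (c) with a quadratic-variation calculation plus BDG. Your identity is exactly the right workhorse, the verification of the martingale equation on $[0,T)$ is correct, and the second-moment formula in (c) (which amounts to the same computation as the paper's $E[[\mart{G}{F}]_T]$, since the process is purely discontinuous) together with $L^2$-boundedness and martingale convergence is a clean alternative to BDG. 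The side remark about passing from $\FF^\gamma$ to $\FF$ via independence is out of place here — the proposition is stated purely for $\FF^\gamma$ — but harmless.

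There is, however, a genuine gap in part (b)(i), and it traces back to your choice of localizing sequence. You propose $\tau_n = T-1/n$. These are deterministic times with $P(\tau_n<T)=1$ for every $n$, so they never exhaust the horizon; under the standard convention that a local martingale on $[0,T]$ requires stopping times $\tau_n$ with $P(\tau_n<T)\to 0$ (equivalently, $\tau_n\uparrow\infty$ after extending constantly beyond $T$), $\tau_n=T-1/n$ is not a localizing sequence at all. More importantly, with your lax reading of "local martingale" the statement (b)(i) would actually be \emph{false}: take $F(t)=(T-t)\sin(1/(T-t))$ and any $G$ with $\Delta G(T)>0$. Then $F(T-)=0$ exists, the restriction of $\mart{G}{F}$ to $[0,T)$ is a martingale, yet $\int_0^T\vert\cA^G F\vert\,G'\dd u=\infty$, so $\mart[T]{G}{F}=\cA^G F(\gamma)$ is not integrable and the process is not a true martingale. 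Your argument — extract $F(T-)$ from the c\`adl\`ag property and then ``let $t\uparrow T$ in the conditional-expectation formula'' — silently assumes the integrability of $\mart[T]{G}{F}$ that needs to be proved; passing the limit through the conditional expectation requires $L^1$-convergence, which you have no control over.

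The missing ingredient is structural: in the filtration $\FF^\gamma$, an $\FF^\gamma$-stopping time $\tau$ with threshold below $T$ is equal to a constant $c<T$ on the whole event $\{\gamma>c\}$, hence $P(\tau<T)\geq 1-G(c)\geq\Delta G(T)$. Thus if $\Delta G(T)>0$, any localizing sequence with $P(\tau_n<T)\to 0$ must eventually satisfy $\tau_n\geq\gamma$ almost surely; since $\mart{G}{F}$ is constant after $\gamma$, this forces $\mart{G}{F}^{\tau_n}=\mart{G}{F}$, so the \emph{local} martingale hypothesis already implies that $\mart{G}{F}$ is a true martingale (and in particular integrable, which then yields $\int_0^T\vert\cA^G F\vert\,G'\dd u<\infty$ and the existence of $F(T-)$ by martingale convergence on $\{\gamma=T\}$). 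That observation, not a limit passage in the conditional-expectation formula, is the content of (b)(i). The same structural fact should replace $\tau_n=T-1/n$ in your proof of (a): when $\Delta G(T)=0$, the natural localizing times are $\tau_n=c_n\1_{\{\gamma>c_n\}}+T\1_{\{\gamma\leq c_n\}}$ with $c_n\uparrow\uparrow T$, which do satisfy $P(\tau_n<T)=1-G(c_n)\to\Delta G(T)=0$; when $\Delta G(T)>0$, one has to show the process is already a true martingale directly from the finiteness of $\int_0^T\vert\cA^G F\vert\,G'\dd u$.
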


\begin{proof}
(a): By \cite[Lemma~3.4]{HerdegenHerrmann2016.SJP}, $\mart{G}{F}$ is integrable if and only if $\int_0^T \vert\cA^G F(u)\vert G'(u) \dd u < \infty$. Moreover, in this case $\mart{G}{F}$ is automatically a local $\FF^\gamma$-martingale by \cite[Lemma~3.7]{HerdegenHerrmann2016.SJP}. The second assertion follows from the implication \hbox{``(b)~$\Rightarrow$~(a)''} of \cite[Lemma~2.5]{HerdegenHerrmann2016.SJP}.

(b): Part~(i) follows from (a) and \cite[Theorem~3.5~(b)--(c) and Lemma~2.6]{HerdegenHerrmann2016.SJP}; part~(ii) follows from \cite[Lemma~3.7~(b)]{HerdegenHerrmann2016.SJP}.

(c): As $\int_0^T G'(u) \dd u = 1 - \Delta G(T) \leq 1$, $\int_0^T \frac{F'(u)}{\kappa^G(u)} G'(u) \dd u < \infty$ by the hypothesis and Jensen's inequality. Thus, $\mart{G}{F}$ is an $H^1$-$\FF^\gamma$-martingale by \cite[Lemmas~2.6,~3.4, and Theorem~3.5]{HerdegenHerrmann2014.strict} if $\Delta G(T) > 0$, and by \cite[Lemmas~2.8 and 3.9]{HerdegenHerrmann2014.strict} if $\Delta G(T) = 0$. Moreover, as $\mart{G}{F}$ is purely discontinuous with a single jump at $\gamma$ on $\{\gamma < T\}$, its quadratic variation satisfies
\begin{align*}
\left[\mart{G}{F}\right]_T
&= \sum_{0 < v \leq T} \left(\Delta \mart{G}{F}_v\right)^2
= \left(\frac{F'(\gamma)}{\kappa^G(\gamma)}\right)^2 \1_{\{\gamma < T\}} \;\; \as{P}
\end{align*}
In particular,
\begin{align*}
E\left[\left[\mart{G}{F}\right]_T\right]
&= \int_0^T \left(\frac{F'(u)}{\kappa^G(u)}\right)^2 G'(u) \dd u
< \infty.
\end{align*}
Thus, by the Burkholder--Davis--Gundy inequality, $\mart{G}{F}$ is a square-integrable $\FF^\gamma$-martingale.
\end{proof}

\subsection{Financial market}
\label{sec:financial market}

We consider a financial market consisting of a positive riskless asset $B = (B_t)_{t\in[0,T]}$, which is taken as the numéraire and without loss of generality normalized to $1$, and a risky asset $S = (S_t)_{t\in[0,T]}$ whose dynamics (in units of the numéraire) are given by
\begin{align}
\label{eqn:S Pdynamics}
\diff S_t
&= S_{t-} (\mu \dd t + \sigma \dd W_t + \diff \mart{G}{\phi}_t),\quad S_0 = 1.
\end{align}
Here, $\mu \in \RR$, $\sigma > 0$, and $\phi \in C^1[0,T)$ satisfies
\begin{align}
\label{eqn:standing assumption}
0
&\leq \phi'
\leq \kappa^G \text{ on } [0,T).
\end{align}
Note that \eqref{eqn:standing assumption} and Proposition~\ref{prop:local martingale property}~(c) imply that $\mart{G}{\phi}$ is a square-integrable martingale. We may assume without loss of generality that $\phi(0) = 0$. To prevent possible confusion, we stress that $S$ and $\mart{G}{\phi}$ live on the \emph{closed} interval $[0, T]$ even though $\phi$ is only defined on the half-open interval $[0, T)$.

Note that the randomness in $\mart{G}{\phi}$ stems from $\gamma$, which is interpreted as the time when the bubble bursts or the crash occurs. The dynamics of the \emph{returns process} $R = (R_t)_{t \in [0, T]}$ of $S$, defined by $R_t = \mu t + \sigma W_t + \mart[t]{G}{\phi}$, can be summarized as follows. Prior to $\gamma$, $R$ consists of a drift $(\mu + \phi'(t)) \dd t$ and a random fluctuation $\sigma \dd W_t$. Further, if $\gamma < T$, then at time $\gamma$, there is a nonpositive jump $\Delta \mart{G}{\phi}_\gamma = - \delta(\gamma) \1_{\{\gamma < T\}}$ in $R$, where $\delta: [0, T) \to [0, 1]$ defined by
\begin{align}
\label{eqn:delta}
\delta(t)
&= \phi(t) - \cA^G \phi(t)
= \frac{\phi'(t)}{\kappa^G(t)}
\end{align}
describes the absolute size of the jump of $\mart{G}{\phi}$; if $\gamma = T$, then $\mart{G}{\phi}$ does not jump (with probability $1$).\footnote{In particular, the probability that the bubble does \emph{not} burst on the interval $[0,T]$ is $\Delta G(T)$, which we allow to be nonzero.} Finally, after $\gamma$, $R$ consists of a drift $\mu \dd t$ and a random fluctuation $\sigma \dd W_t$, i.e., it satisfies the same dynamics as the returns process of a standard Black--Scholes model. Put differently, compared to the returns process of a standard Black--Scholes model with parameters $\mu$ and $\sigma$, $R$ has a nonnegative extra drift $\phi'(t) \dd t$ prior to $\gamma$, and at time $\gamma$, there is a nonpositive jump of size $-\delta(\gamma) \1_{\{\gamma < T\}}$. This models---in an idealized way---a main empirical feature of a bubble, which is a strong upward trend followed by a sharp decline at bursting. For this reason, we call $\phi'$ the \emph{instantaneous pre-crash excess return}. Moreover, we call $\mu$ the \emph{instantaneous expected return}. Using $\delta$, we can reformulate \eqref{eqn:standing assumption} as
\begin{align*}
0
&\leq \delta
\leq 1 \text{ on } [0,T),
\end{align*}
which shows that the left inequality in \eqref{eqn:standing assumption} ensures that the instantaneous pre-crash excess return is nonnegative, whereas the right inequality ensures that the stock price is always nonnegative. If the right inequality is strict for all $t \in [0, T)$, the stock price is even positive.

\subsection{The relaxed Johansen--Ledoit--Sornette model}
\label{sec:relaxed JLS}

Recall the dynamics of the bubble component in the JLS model from  \eqref{eqn:intro:JLS:dynamics:original}:
\begin{align}
\label{eqn:relaxed JLS:original dynamics}
\frac{\diff S_t}{S_{t-}}
&= \phi'(t) \dd t + \sigma \dd W_t - \delta \dd J_t,
\end{align}
where $\phi'(t) = \delta \kappa^G(t)$, $t\in(0,T)$. Using our notation for single jump local martingales, we can combine the drift term and the jump term in \eqref{eqn:relaxed JLS:original dynamics} to arrive at\footnote{To be precise, we assume here that $S$ evolves like a geometric Brownian motion after the crash; the JLS model does not specify what happens after the crash.}
\begin{align}
\label{eqn:relaxed JLS:dynamics}
\frac{\diff S_t}{S_{t-}}
&= \sigma \dd W_t + \diff \mart{G}{\phi}_t,
\end{align}
where $\mart{G}\phi$ is a single jump local martingale as introduced in \eqref{eqn:MGF} and $\phi$ is the primitive of $\phi'$ with $\phi(0)=0$, i.e.,
\begin{align*}
\phi(t)
&= \int_0^t \phi'(u) \dd u
= \delta \int_0^t \kappa^G(u) \dd u,\quad t\in[0,T).
\end{align*}
We conclude that the JLS model is a special case of \eqref{eqn:S Pdynamics} with zero instantaneous expected return $\mu = 0$, a hazard rate satisfying \eqref{eqn:intro:JLS:hazard rate:LPPL}, and $\phi'$ chosen such that $\delta(t)$, the absolute size of the jump of $\mart{G}{\phi}$ if it happens at time $t\in[0,T)$, is a constant in $(0,1)$.

\paragraph{The relaxed JLS model.}
We call $S$ a \emph{relaxed JLS model} if its dynamics are of the form \eqref{eqn:relaxed JLS:dynamics} (i.e., \eqref{eqn:S Pdynamics} with $\mu = 0$),  $\phi$ satisfies \eqref{eqn:standing assumption}, and the hazard rate $\kappa^G$ of $\gamma$ follows an LPPL \eqref{eqn:intro:JLS:hazard rate:LPPL} on $(0,T)$. The common features and differences between the JLS model and its relaxation are the following:
\begin{itemize}
\item The relaxed JLS model \emph{keeps} the general structure of the JLS model: the returns process is composed of a time-dependent drift, a Brownian motion, and a single jump process.
\item The relaxed JLS model \emph{keeps} the key assumption of an LPPL for the hazard rate of the jump time.
\item The relaxed JLS model \emph{does not require} that $S$ be a martingale; it is, however, always a local martingale by construction.
\item The relaxed JLS model \emph{does not confine} the parameter $m$ to the interval $(0,1)$; instead, $m$ can be any real number.
\item The relaxed JLS model \emph{does not require} that the relative loss $\delta$ of $S$ at the time of the crash be a constant in $(0,1)$; instead, $\delta$ is in general a $[0,1]$-valued deterministic function of time and is determined by $\phi'$ and $\kappa^G$ via \eqref{eqn:delta}.
\end{itemize}

Theorem~\ref{thm:JLS:strict local martingale} below implies that $m\leq 0$ is a necessary prerequisite for the relaxed JLS model to be a strict local martingale. Let us briefly discuss the restriction $m\in(0,1)$ imposed by the original JLS model.

\paragraph{Discussion of the restriction $m\in(0,1)$.}
In \cite[Section~2.2]{SornetteWoodardYanZhou2013}, it is argued that $m$ should lie in the interval $(0,1)$. The authors state that $m < 1$ is necessary to obtain an accelerating hazard rate. While this is certainly true, Br\'ee and Joseph~\cite{BreeJoseph2013} point out that $m < 1$ should not be an \emph{a priori} restriction when fitting the LPPL \eqref{eqn:JLS:log price:LPPL} to data. A best fit with $m \geq 1$ should rather be used to reject the model.

Here, we are concerned with the restriction $m > 0$. It is argued in \cite{SornetteWoodardYanZhou2013} that $m > 0$ is necessary to ensure that the bubble component ``remains finite at all times, including $t_c$ [$=T$]'' (p.~4419). However, we claim that if $m \leq 0$, then $\gamma < T$ $P$-a.s. Indeed, if $m \leq 0$, then the hazard rate \eqref{eqn:intro:JLS:hazard rate:LPPL} is nonintegrable on $(0,T)$, and thus $G(T-) = 1$ by Proposition~\ref{prop:hazard rate}, so that $\gamma < T$ $P$-a.s. In words, the crash happens strictly before the ``critical time'' $T$ with probability~$1$. Hence, the bubble component stays finite at all times and the argument of \cite{SornetteWoodardYanZhou2013} does not justify eliminating the case $m \leq 0$ \emph{a priori}. The authors of \cite{SornetteWoodardYanZhou2013} also claim that the property of the JLS model that there is a positive probability that no crash occurs ``makes it rational for investors to remain invested, knowing that a bubble is developing and that a crash is looming [because \dots] there is a chance for investors to gain from the ramp-up of the bubble and walk away unscathed'' (p.~4419). However, even if a crash happens almost surely before time $T$, it can similarly be argued that it is rational for investors to ride the bubble, knowing that the bubble will surely burst before time $T$, as long as they reduce their position before time $T$. With this strategy, they simply bet on the event that the bubble only bursts after they have closed their position. In fact, our Theorem~\ref{thm:decomposition} shows that investors with relative risk aversion larger than $1$ follow such a strategy as long as the underlying asset has a positive instantaneous expected return.\footnote{Investors with risk aversion less than one may also invest in the bubble under some circumstances.}$^{,}$\footnote{It is well known that risk-averse agents (with a finite credit line) never invest in an asset with zero instantaneous expected return.} We emphasize that shorting the bubble is not an arbitrage opportunity in the case where the bubble bursts almost surely before time $T$ (after all, the bubble component is a local martingale). For instance, the naive strategy of holding a (constant) short position in the bubble leads to bankruptcy with positive probability because the bubble can grow arbitrarily large if it bursts sufficiently late.

\begin{remark}
Using the formulation \eqref{eqn:relaxed JLS:dynamics}, we can also rigorously show that the log-periodic power law \eqref{eqn:intro:JLS:hazard rate:LPPL} of the hazard rate carries over to another log-periodic power law for the logarithm of the conditional expectation of the bubble component at some time $t \in (0,T)$ given the event that the crash has not yet happened.\footnote{See, e.g., \cite{VonBothmerMeister2003, BreeJoseph2013, SornetteWoodardYanZhou2013} for a formal derivation.} Using the independence of $\gamma$ and $W$, the conditional expectation of $S_t$ given that $t<\gamma$ is computed as follows:
\begin{align*}
\cEX{S_t}{t < \gamma}
&= \frac{1}{1-G(t)}\EX{S_0 \cE_t(\sigma W + \mart{G}{\phi})\1_{\lbrace t < \gamma\rbrace}}
= \frac{S_0}{1-G(t)}\EX{\cE_t(\sigma W)\exp(\phi(t))\1_{\lbrace t < \gamma\rbrace}}\\
&= S_0 \exp(\phi(t)).
\end{align*}
Hence, the logarithm of the expected value of the bubble component given that the crash has not happened yet reads as
\begin{align*}
I(t)
&:= \log \cEX{S_t}{t < \gamma}
= \log S_0 + \phi(t)
= \log S_0 + \delta \int_0^t \kappa^G(u) \dd u.
\end{align*}
Substituting the LPPL form \eqref{eqn:intro:JLS:hazard rate:LPPL} of the hazard rate, using that $m \in (0,1)$, and integrating gives
\begin{align}
\label{eqn:JLS:log price:LPPL}
I(t)
&= A + B \vert T-t \vert^m + C \vert T-t \vert^m \cos\left( \varpi  \log(T-t) - \psi\right),
\end{align}
where $B = -\delta B'/m$, $C = -\delta C'/\sqrt{m^2 + \varpi^2}$, and $A$ and $\psi$ are constants depending on $A'$, $B'$, $C'$, $m$, $T$, $\varpi$, $\psi'$, and $S_0$ (cf.~equation (6) in \cite{SornetteWoodardYanZhou2013}).

Equation~\eqref{eqn:JLS:log price:LPPL} is at the root of the literature on log-periodic power laws in the context of financial bubbles. In 1996, Bouchaud, Johansen, and Sornette~\cite{BouchaudJohansenSornette1996} and Feigenbaum and Freund~\cite{FeigenbaumFreund1996} independently suggested that the log price of a financial asset prior to a large crash can be fitted by a log-periodic power law \eqref{eqn:JLS:log price:LPPL}.\footnote{We note that no distinction between the fundamental value and the bubble component was made in the early articles. Moreover, sometimes the price is fitted instead of the log price.} The main objective is then to obtain a prediction for the ``critical time'' $T$, which is interpreted as the ``most probable time for the crash'' \cite{JohansenSornette1999.largecrashes} (because the hazard rate explodes at $T$). This approach has been widely used (see \cite{SornetteWoodardYanZhou2013,FantazziniGeraskin2013} for an overview) and intensely debated in the literature (see in particular \cite{Feigenbaum2001, JohansenSornette2001, Feigenbaum2001.more} and also \cite{BreeJoseph2013}).
\end{remark}

\begin{remark}
The case of $m=0$ has already been suggested by Ausloos, Boveroux, Minguet, and Vandewalle~\cite{AusloosBoverouxMinguetVandewalle1998.crash1987, AusloosBoverouxMinguetVandewalle1998.crash1997, AusloosBoverouxMinguetVandewalle1999}. They propose to replace the LPPL \eqref{eqn:JLS:log price:LPPL} by
\begin{align*}
I(t)
&= A + B\log(T-t) + C\log(T-t) \cos\left( \varpi  \log(T-t) - \psi\right).
\end{align*}
The corresponding hazard rate
\begin{align*}
\kappa^G(t)
&= B' \vert T-t \vert^{-1} + C' \vert T-t \vert^{-1} \cos \left( \varpi \log(T-t) - \psi' \right)
\end{align*}
is nonintegrable on $(0,T)$, and hence $\gamma < T$ $P$-a.s.~by Proposition~\ref{prop:hazard rate}. To the best of our knowledge, the case $m<0$ has not been studied in the literature so far.
\end{remark}

\section{ELMMs and the strict local martingale property}
\label{sec:ELMM and strict local martingales}

We proceed to derive a (sub-)class of ELMMs for the financial market \eqref{eqn:S Pdynamics} and to provide conditions on the model parameters for $S$ being a strict local martingale under those ELMMs. As an application, we obtain necessary and sufficient conditions for the relaxed JLS model being a strict local martingale under the physical measure.

\subsection{Preliminary results for single jump processes}
\label{sec:single jump preliminaries}

We first construct probability measures $Q^\gamma \approx P$ under which certain single jump semimartingales are square-integrable martingales. Except for the statement on square-integrability, Theorem~\ref{thm:ELMM single jump martingale} is essentially an application of the more general result \cite[Theorem~4.2]{HerdegenHerrmann2014.strict} on the existence and characterization of ELMMs for single jump semimartingales. For the convenience of the reader and because many conditions need to be checked, we provide full details.

\begin{theorem}
\label{thm:ELMM single jump martingale}
Let $F, y \in C^1[0, T)$ be such that $0 \leq F' \leq \kappa^G$ and $\inf_{t \in [0, T)} y(t) > -1$. Moreover, if $\Delta G(T) > 0$, assume that
\begin{align}
\int_0^T \vert F'(u)y(u) \vert \dd u
< \infty
\quad\text{and}\quad
\int_0^T \kappa^G (u)(1 + y(u)) \dd u
< \infty.
\label{eqn:prop:ELMM}
\end{align}
Define the functions $\zeta : [0, T) \to (0, \infty)$ and $H: [0, \infty) \to [0, 1]$ by
\begin{align}
\label{eqn:prop:ELMM:zeta}
\zeta(t)
&= \exp\left(-\int_0^t \kappa^G (u) y(u) \dd u \right),\\
\label{eqn:prop:ELMM:H}
H(t)
&= 1- \exp\left(- \int_0^t \kappa^G (u) (1 + y(u)) \dd u \right) \1_{\{t < T\}}.
\end{align}
Then $\zeta$ is positive and $\mart{G}{\zeta}$ is a positive $(P, \FF^\gamma)$-martingale starting at $1$. Define the measure $Q^\gamma \approx P$ on $\cF^\gamma_T$ by $\frac{\diff Q^\gamma}{\diff P} = \mart[T]{G}{\zeta}$. Then $\gamma$ has distribution function $H$ under $Q^\gamma$, and for $t \in [0, T)$,
\begin{align}
\label{eqn:prop:ELMM:relations:AGzeta}
\cA^G \zeta(t)
&= \zeta(t) (1+y(t)), \\
\label{eqn:prop:ELMM:relations:zeta}
1 - H(t)
&= \zeta(t)(1 - G(t)),\\
\label{eqn:prop:ELMM:relations:H}
\kappa^H(t)
:=  \frac{H'(t)}{1 - H(t)}
&= \kappa^G(t) (1+y(t)).
\end{align}
Moreover,
\begin{align}
\label{eqn:prop:ELMM:Qdynamics}
\mart{G}{F} + \int_0^{\cdot} \1_{\{u \leq \gamma \}} F'(u) y(u) \dd u
&= \mart{H}{\left(\int_0^\cdot F' (u)(1 + y(u)) \dd u\right)}
\end{align}
is a square-integrable $(Q^\gamma, \FF^\gamma)$-martingale.
\end{theorem}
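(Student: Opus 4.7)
The plan is to establish the pointwise identities \eqref{eqn:prop:ELMM:relations:AGzeta}--\eqref{eqn:prop:ELMM:relations:H} first, then show that $\mart{G}{\zeta}$ is a positive $(P,\FF^\gamma)$-martingale starting at $1$ (so that $Q^\gamma$ is well-defined), identify the distribution of $\gamma$ under $Q^\gamma$, and finally verify the pathwise identity \eqref{eqn:prop:ELMM:Qdynamics} together with the square-integrable martingale property. For the identities I would just differentiate: since $\zeta' = -\kappa^G y \zeta$, \eqref{eqn:AGF} gives $\cA^G \zeta = \zeta - \zeta'/\kappa^G = \zeta(1+y)$; multiplying $\zeta(t)$ by $1-G(t) = \exp(-\int_0^t \kappa^G)$ yields \eqref{eqn:prop:ELMM:relations:zeta} on $[0,T)$; and differentiating $1-H$ gives \eqref{eqn:prop:ELMM:relations:H}.

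Next, $\mart{G}{\zeta}$ is strictly positive because $\zeta > 0$ and $\cA^G \zeta = \zeta(1+y) > 0$ (using $\inf y > -1$), with $\mart[0]{G}{\zeta} = \zeta(0) = 1$ since $\gamma > 0$ \as{P}. Proposition~\ref{prop:local martingale property}(a) therefore yields that $\mart{G}{\zeta}$ is an integrable local $(P,\FF^\gamma)$-martingale. If $\Delta G(T) > 0$, part~(b)(i) upgrades this to a true martingale; if $\Delta G(T) = 0$, I would apply part~(b)(ii), for which \eqref{eqn:prop:ELMM:relations:zeta} shows that the relevant limit equals $1 - H(T-) = \exp\bigl(-\int_0^T \kappa^G(1+y)\bigr)$, which vanishes because $\int_0^T \kappa^G = \infty$ (Proposition~\ref{prop:hazard rate}) and $1+y$ is bounded away from $0$. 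Hence $E_P[\mart[T]{G}{\zeta}] = 1$ and $Q^\gamma \approx P$.

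To identify the law of $\gamma$ under $Q^\gamma$, I would combine the $(P,\FF^\gamma)$-martingale property of $\mart{G}{\zeta}$ with the $\cF^\gamma_t$-measurability of $\{\gamma \leq t\}$ to obtain, for $t \in [0,T)$,
\[
Q^\gamma(\gamma \leq t)
= E_P\bigl[\mart[T]{G}{\zeta} \1_{\{\gamma \leq t\}}\bigr]
= E_P\bigl[\cA^G \zeta(\gamma) \1_{\{\gamma \leq t\}}\bigr]
= \int_0^t \zeta(u)(1+y(u)) G'(u) \dd u;
\]
by \eqref{eqn:prop:ELMM:relations:zeta}--\eqref{eqn:prop:ELMM:relations:H} the integrand equals $H'(u)$, so this is $H(t)$. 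With $\tilde F(t) := \int_0^t F'(u)(1+y(u)) \dd u$, the identity \eqref{eqn:prop:ELMM:Qdynamics} can then be checked pathwise: on $\{t < \gamma\}$ both sides equal $\tilde F(t)$, while on $\{\gamma \leq t < T\}$, \eqref{eqn:prop:ELMM:relations:H} gives $\tilde F'(\gamma)/\kappa^H(\gamma) = F'(\gamma)/\kappa^G(\gamma)$, so $\cA^H \tilde F(\gamma) = \cA^G F(\gamma) + \int_0^\gamma F'y$, matching the LHS. In the borderline case $\gamma = T$ (only possible when $\Delta G(T) > 0$), the two conditions in \eqref{eqn:prop:ELMM} ensure that $F(T-)$ and $\tilde F(T-)$ exist in $\RR$ and that $H(T-) < 1$, so $\Delta H(T) > 0$ and the identity collapses to $F(T-) + \int_0^T F'y = \tilde F(T-)$. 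For the square-integrable $(Q^\gamma,\FF^\gamma)$-martingale property, I would apply Proposition~\ref{prop:local martingale property}(c) to $\mart{H}{\tilde F}$ under $Q^\gamma$: since $\tilde F'/\kappa^H = F'/\kappa^G \leq 1$ and $H' = \zeta(1+y)G'$, the required integral is dominated by $\int_0^T \cA^G \zeta(u) G'(u) \dd u$, which is finite by the integrability of $\mart{G}{\zeta}$ established above.

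The main technical obstacle I anticipate is the boundary regime $\{\gamma = T\}$ when $\Delta G(T) > 0$: the two hypotheses in \eqref{eqn:prop:ELMM} are precisely what guarantees that $F(T-)$ and $\tilde F(T-)$ are finite and that $\Delta H(T) > 0$, which is what makes $\mart{G}{F}$ and $\mart{H}{\tilde F}$ unambiguously defined at $T$ and puts them in the regime where Proposition~\ref{prop:local martingale property}(b)(i) applies; the rest is essentially one careful application of Proposition~\ref{prop:local martingale property} per case.
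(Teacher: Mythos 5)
Your argument is correct, and it takes a genuinely different and more self-contained route than the paper's. The paper's proof outsources essentially all of the work to the ``removal of drift'' result \cite[Theorem~4.2]{HerdegenHerrmann2014.strict}, spending most of its effort verifying that theorem's long list of conditions (4.8)--(4.15) and (4.22)--(4.24) and then reading off the conclusions; your proof instead builds everything directly from Proposition~\ref{prop:local martingale property}. Concretely, you establish \eqref{eqn:prop:ELMM:relations:AGzeta}--\eqref{eqn:prop:ELMM:relations:H} by differentiation, get the local martingale property of $\mart{G}{\zeta}$ from Proposition~\ref{prop:local martingale property}~(a) via positivity, upgrade it to a true martingale by applying part~(b)(i) when $\Delta G(T)>0$ and part~(b)(ii) together with $1-H(T-)=\exp(-\int_0^T\kappa^G(1+y))=0$ when $\Delta G(T)=0$ (here you correctly use both $\int_0^T\kappa^G=\infty$ and $\inf(1+y)>0$), identify the law of $\gamma$ under $Q^\gamma$ by the direct computation $Q^\gamma(\gamma\leq t)=E_P[\cA^G\zeta(\gamma)\1_{\{\gamma\leq t\}}]=\int_0^t\zeta(1+y)G'\,\diff u=H(t)$, verify \eqref{eqn:prop:ELMM:Qdynamics} pathwise case by case, and obtain square-integrability from Proposition~\ref{prop:local martingale property}~(c) via the bound $\tilde F'/\kappa^H=F'/\kappa^G\in[0,1]$. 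All of these steps check out; I would only note two small points you glossed over, neither of which is a gap: the identity \eqref{eqn:prop:ELMM:Qdynamics} (and the paper's own proof) implicitly assumes the normalization $F(0)=0$, which is harmless since only increments matter; and to invoke Proposition~\ref{prop:local martingale property}~(c) under $(Q^\gamma,H)$ you should observe that $H\in C^2[0,T)$ with $H'>0$ on $[0,T)$, which follows from $H(t)=1-\exp(-\int_0^t\kappa^G(1+y))$ and the $C^1$-regularity of $\kappa^G$ and $y$. The trade-off between the two approaches is clear: the paper's is shorter on the page but requires the reader to trace through the companion paper's general theorem, whereas yours is longer but elementary and transparent, making visible exactly where each hypothesis (in particular the two integrability conditions in \eqref{eqn:prop:ELMM}, which you correctly identify as controlling the $\gamma=T$ boundary) is used.
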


\begin{proof}
We apply the more general ``removal of drift'' result \cite[Theorem~4.2]{HerdegenHerrmann2014.strict}. To this end, we define $A \in C^1[0,T)$ by $A(t) = \int_0^t F'(u) y(u) \dd u$ and declare that $0/0:=0$. Then $f := \frac{\diff F}{\diff G} = \frac{F'}{G'}$ and $a:=\frac{\diff A}{\diff G} = f y$ on $[0, T)$. Note that if $\Delta G(T) > 0$, then by \eqref{eqn:prop:ELMM},
\begin{align*}
\int_0^T \vert a(u) \vert G'(u) \dd u
&= \int_0^T \vert f(u) y(u) \vert G'(u) \dd u
= \int_0^T \vert F'(u) y(u) \vert \dd u
< \infty,\\
\int_0^T \vert f(u) \vert G'(u) \dd u
&= \int_0^T F'(u) \dd u \leq \int_0^T \kappa^G(u) \dd u
= -\log \Delta G(T)
< \infty,
\end{align*}
and so the assumptions in the first line of \cite[Theorem~4.2]{HerdegenHerrmann2014.strict} are satisfied. Moreover, clearly $\{f = 0\} \cap (0, T) \subset \{a = 0\}$, $\frac{a}{f} = y > - 1$ on $(0, T)$ and $\int_0^b \left\vert \frac{a(u)}{f(u)} \right\vert G'(u)\dd u < \infty$ for each $b \in (0,T)$, i.e., the conditions (4.8)--(4.10) (and trivially also (4.11)) in \cite{HerdegenHerrmann2014.strict} are fulfilled. 

We proceed to show that if $\Delta G(T) = 0$, then (4.13) and (4.22) for $h = 0$ in \cite{HerdegenHerrmann2014.strict} are satisfied. Indeed, the hypothesis $\inf_{t \in [0, T)} y(t) > -1$ together with the fact that $\int_0^T \kappa^G(u) \dd u = \infty$ gives
\begin{align*}
\int_0^T \left(\frac{a(u)}{f(u)}+ 1\right) \frac{G'(u)}{1- G(u)} \dd u
&=\int_0^T (1+y(u))\kappa^G(u) \dd u
= \infty.
\end{align*}

Next, we establish (4.14) and (4.15) in \cite{HerdegenHerrmann2014.strict} if $\Delta G(T) > 0$. Set $A(T) := \int_0^T F'(u) y(u) \dd u$, which is well defined by \eqref{eqn:prop:ELMM}. Then $\Delta A(T) = 0$ and so we have (4.14) in \cite{HerdegenHerrmann2014.strict}. Moreover, the identity $\frac{a}{f} = y$, \eqref{eqn:prop:ELMM} and the identity $\int_0^T \kappa^G(u) \dd u = -\log \Delta G(T) < \infty$ give 
\begin{align*}
\int_0^T \left\vert \frac{a(u)}{f(u)} \right\vert \frac{G'(u)}{1-G(u)} \dd u
&= \int_0^T \vert y(u) \vert \kappa^G(u) \dd u
< \infty.
\end{align*}
As $\frac{1}{1-G} \geq 1$ on $(0,T)$, the above yields $\int_0^T \left\vert \frac{a(u)}{f(u)} \right\vert G'(u) \dd u < \infty$, and we have condition (4.15) in \cite{HerdegenHerrmann2014.strict}.

Now, if we define $\zeta$ by (4.16) in  \cite{HerdegenHerrmann2014.strict} for $h = 0$, this simplifies to \eqref{eqn:prop:ELMM:zeta}, and the assertion about $\mart{G}{\zeta}$ follows from \cite[Theorem~4.2]{HerdegenHerrmann2014.strict}. If we define $H$ by \eqref{eqn:prop:ELMM:H}, then \eqref{eqn:prop:ELMM:relations:zeta} follows from the identity $1 - G(t) = \exp\left(-\int_0^t \kappa^G(t) \dd t\right)$, $t \in [0, T)$. Formula (4.23) in \cite{HerdegenHerrmann2014.strict} then shows that $\gamma$ has distribution function $H$ under $Q^\gamma$.\footnote{Note that $H$ is called $G^Q$ in  \cite{HerdegenHerrmann2014.strict} and that $G^Q(T) = Q[\gamma \leq T] = P[\gamma \leq T] = 1$ as $Q \approx P$.} 

Moreover, \eqref{eqn:prop:ELMM:relations:AGzeta} and \eqref{eqn:prop:ELMM:relations:H} are straightforward, and \eqref{eqn:prop:ELMM:Qdynamics} follows from assertion (4.24) in \cite{HerdegenHerrmann2014.strict}. Finally, note that the hypothesis $0 \leq F' \leq \kappa^G$ implies via \eqref{eqn:prop:ELMM:relations:H} that $0 \leq F'(1+y) \leq \kappa^H$, and so Proposition~\ref{prop:local martingale property}~(c) (with $P$ and $G$ replaced by $Q^\gamma$ and $H$, respectively) yields that $\mart{H}{\left(\int_0^\cdot F' (u)(1 + y(u)) \dd u\right)}$ is a square-integrable $(Q^\gamma, \FF^\gamma)$-martingale.
\end{proof}

It is decisive for our purposes to understand when the \emph{stochastic exponential} of the square-integrable $Q^\gamma$-martingale \eqref{eqn:prop:ELMM:Qdynamics} is a strict local martingale under $Q^\gamma$. To this end, we first provide a formula for the stochastic exponential of a single jump local martingale.

\begin{proposition}
\label{prop:stochastic exponential}
Let $F \in C^1[0, T)$ be such that $F(0) = 0$ and $0 \leq F' \leq \kappa^G$ {\normalfont($0 \leq F' < \kappa^G$)}. Then $\int_0^T \vert\cA^G (\exp \circ F)(u)\vert G'(u) \dd u < \infty$ and
\begin{align}
\label{eqn:prop:stochastic exponential}
\cE\left(\mart{G}{F} \right)
&= \mart{G}{(\exp \circ F)}
\end{align}
is a nonnegative (positive) local $(P,\FF^\gamma)$-martingale.
\end{proposition}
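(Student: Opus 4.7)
The plan has two stages: first, verify the integrability condition via Proposition~\ref{prop:local martingale property}(a); second, identify the stochastic exponential $\cE(\mart{G}{F})$ with $\mart{G}{(\exp \circ F)}$ by a direct pathwise computation using the explicit form of $\mart{G}{F}$.

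For the first stage, note that $\exp \circ F \in C^1[0,T)$ with derivative $F'(t) \exp(F(t))$, so the defining formula~\eqref{eqn:AGF} gives
\begin{align*}
\cA^G(\exp \circ F)(t)
&= \exp(F(t))\left(1 - \frac{F'(t)}{\kappa^G(t)}\right), \quad t \in [0,T).
\end{align*}
The hypothesis $0 \leq F' \leq \kappa^G$ forces this quantity into $[0, \exp(F(t))]$, so both $\exp \circ F$ and $\cA^G(\exp \circ F)$ are bounded below by $0$ on $(0,T)$. The second assertion of Proposition~\ref{prop:local martingale property}(a) then yields $\int_0^T \vert \cA^G(\exp \circ F)(u) \vert G'(u) \dd u < \infty$ and that $\mart{G}{(\exp \circ F)}$ is an integrable local $(P, \FF^\gamma)$-martingale.

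For the second stage, set $X := \mart{G}{F}$ and $Z := \mart{G}{(\exp \circ F)}$. Since $X$ is a local martingale of finite variation, its continuous local martingale part vanishes; together with $X_0 = F(0) = 0$ this reduces the Dol\'eans--Dade formula to $\cE(X)_t = \exp(X_t) \prod_{0 < s \leq t}(1+\Delta X_s) e^{-\Delta X_s}$. The verification proceeds pathwise. On $\lbrace t < \gamma \rbrace$, $X_t = F(t)$ and no jumps have occurred, so $\cE(X)_t = \exp(F(t)) = Z_t$. On $\lbrace \gamma \leq t \rbrace \cap \lbrace \gamma < T \rbrace$, $X_t = F(\gamma) - F'(\gamma)/\kappa^G(\gamma)$ and the single jump is $\Delta X_\gamma = -F'(\gamma)/\kappa^G(\gamma)$, so the exponents telescope and yield
\begin{align*}
\cE(X)_t
&= \exp\!\left(F(\gamma) - \tfrac{F'(\gamma)}{\kappa^G(\gamma)}\right) \left(1 - \tfrac{F'(\gamma)}{\kappa^G(\gamma)}\right) \exp\!\left(\tfrac{F'(\gamma)}{\kappa^G(\gamma)}\right)
= \cA^G(\exp \circ F)(\gamma) = Z_t.
\end{align*}
On $\lbrace \gamma = T \rbrace$ (relevant only if $\Delta G(T) > 0$), there is no jump and $X_T = F(T-)$; the left limit exists because $F$ is nondecreasing and bounded above by $\int_0^T \kappa^G(u) \dd u = -\log \Delta G(T) < \infty$, so $\cE(X)_T = \exp(F(T-)) = (\exp \circ F)(T-) = \cA^G(\exp \circ F)(T) = Z_T$.

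For the positivity clause, if $F' < \kappa^G$ strictly on $[0,T)$, then $1 - F'/\kappa^G > 0$ there, so $\cA^G(\exp \circ F) > 0$ on $[0,T)$; when $\Delta G(T) > 0$, passing to the limit yields $\cA^G(\exp \circ F)(T) = \exp(F(T-)) > 0$ as well, so $Z > 0$ pathwise. The main obstacle is the algebraic telescoping in the jump computation, where the factors $\exp(\pm F'(\gamma)/\kappa^G(\gamma))$ in the Dol\'eans--Dade formula need to be seen to cancel against the jump contribution of $X$ so as to produce $\cA^G(\exp \circ F)(\gamma)$; beyond this, the argument is routine bookkeeping.
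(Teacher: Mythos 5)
Your proof is correct and follows essentially the same approach as the paper: both rely on the Dol\'eans--Dade formula and on Proposition~\ref{prop:local martingale property}~(a) for the integrability. You have simply swapped the order (establishing integrability of $\mart{G}{(\exp\circ F)}$ directly from the explicit formula for $\cA^G(\exp\circ F)$ before verifying the identity, whereas the paper first observes $\Delta \mart{G}{F}\geq -1$ to get nonnegativity of $\cE(\mart{G}{F})$ and then uses the identity to transfer this) and have written out in full the pathwise calculation that the paper labels ``an easy calculation.''
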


\begin{proof}
First, note that the assumptions $0 \leq F' \leq \kappa^G$ {($0 \leq F' < \kappa^G$)} imply that $\Delta \mart{G}{F} \geq -1$ ($\Delta \mart{G}{F} > -1$). Therefore, by the formula for the stochastic exponential (see \cite[Theorem~II~37]{Protter2005}), $\cE\left(\mart{G}{F} \right)$ is nonnegative (positive). The identity \eqref{eqn:prop:stochastic exponential} is an easy calculation. Finally, the nonnegativity of $\mart{G}{(\exp \circ F)}$ implies that $\int_0^T \vert\cA^G (\exp \circ F)(u)\vert G'(u) \dd u < \infty$ by Proposition~\ref{prop:local martingale property}~(a), which then also shows that $\cE\left(\mart{G}{F} \right)$ is an integrable local $(P,\FF^\gamma)$-martingale.
\end{proof}

The next result provides a necessary and sufficient condition for $\cE(\mart{G}{F})$ to be a strict local $(P,\FF^\gamma)$-martingale. It also shows that this strict local martingale property persists under certain changes of measure provided that the process is transformed accordingly (so that it is driftless under the new measure). 

\begin{theorem}
\label{thm:stochastic exponential:strict local martingale}
Suppose that $\Delta G(T) = 0$, $F(0) = 0$, and $0 \leq F'(t) \leq \kappa^G(t)$. Then $\cE\left(\mart{G}{F}\right)$ is a strict local $(P,\FF^\gamma)$-martingale if and only if $\int_0^T (\kappa^G(u) - F'(u)) \dd u < \infty$. Moreover, suppose that $y \in C^1[0, T)$ satisfies
\begin{align}
\label{eqn:thm:stochastic exponential:strict local martingale}
\epsilon
&\leq 1 +y(t)
\leq C + \frac{C}{F'(t)} \1_{\{\kappa^G(t) < C F'(t)\}}, \quad t \in [0, T),
\end{align}
for some constants $\epsilon \in (0,1]$ and $C \geq 1$. Define $\zeta$, $H$, and $Q^\gamma$ as in Theorem~\ref{thm:ELMM single jump martingale}. Then $\cE\left(\mart{H}{\left(\int_0^\cdot F' (u)(1 + y(u)) \dd u\right)}\right)$ is a strict local $(Q^\gamma,\FF^\gamma)$-martingale if and only if $\cE\left(\mart{G}{F}\right)$ is a strict local $(P,\FF^\gamma)$-martingale.
\end{theorem}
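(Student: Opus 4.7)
My overall approach is to reduce both halves of the theorem to the martingale criterion from Proposition~\ref{prop:local martingale property}(b)(ii), which is applicable because $\Delta G(T) = 0$ (and, as I will verify, $\Delta H(T) = 0$ as well under $Q^\gamma$).

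For the first assertion, I would start from Proposition~\ref{prop:stochastic exponential} to rewrite $\cE(\mart{G}{F}) = \mart{G}{(\exp\circ F)}$ as a nonnegative, integrable local $(P,\FF^\gamma)$-martingale. Since $\Delta G(T)=0$, Proposition~\ref{prop:local martingale property}(b)(ii) says it fails to be a true martingale precisely when $\lim_{t\uparrow\uparrow T}\exp(F(t))(1-G(t)) \neq 0$. Using the identity $1-G(t) = \exp\bigl(-\int_0^t \kappa^G(u)\dd u\bigr)$ together with $F(0)=0$, this limit equals $\exp\bigl(-\int_0^T(\kappa^G(u)-F'(u))\dd u\bigr)$, which is nonzero iff $\int_0^T(\kappa^G(u)-F'(u))\dd u < \infty$. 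This gives the first equivalence.

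For the second assertion, I would set $\tilde F(t) := \int_0^t F'(u)(1+y(u))\dd u$, so that the process in question is $\cE\bigl(\mart{H}{\tilde F}\bigr)$ under $Q^\gamma$. I first need to check that the first part of the theorem is applicable under $Q^\gamma$ with $G$ replaced by $H$: the bound $\epsilon \le 1+y \le C + C/F' \1_{\{\cdot\}}$ together with $0 \le F' \le \kappa^G$ yields $0 \le \tilde F' \le \kappa^H$ via \eqref{eqn:prop:ELMM:relations:H}, and $\Delta H(T)=0$ follows from $\int_0^T\kappa^H(u)\dd u \ge \epsilon \int_0^T\kappa^G(u)\dd u = \infty$ (using Proposition~\ref{prop:hazard rate}). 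Applying the first part of the theorem under $Q^\gamma$, $\cE\bigl(\mart{H}{\tilde F}\bigr)$ is a strict local martingale iff $\int_0^T(\kappa^H(u)-\tilde F'(u))\dd u = \int_0^T (1+y(u))(\kappa^G(u)-F'(u))\dd u < \infty$.

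The core of the proof is therefore the equivalence
\[
\int_0^T (\kappa^G(u)-F'(u))\dd u < \infty
\quad\Longleftrightarrow\quad
\int_0^T (1+y(u))(\kappa^G(u)-F'(u))\dd u < \infty.
\]
The direction ``$\Leftarrow$'' is immediate from the lower bound $1+y\ge\epsilon>0$. The direction ``$\Rightarrow$'' is where the two-sided bound \eqref{eqn:thm:stochastic exponential:strict local martingale} is used in an essential way; I expect this to be the main obstacle, as the upper bound on $1+y$ blows up precisely where $F'$ is small. I would split the time interval into the sets $\{\kappa^G \ge CF'\}$ (where $1+y\le C$ gives $(1+y)(\kappa^G-F') \le C(\kappa^G-F')$ directly) and $\{\kappa^G < CF'\}$ (where the extra term $(C/F')(\kappa^G-F') = C\kappa^G/F' - C$ is bounded by $C(C-1)$ thanks to $\kappa^G/F' < C$). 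Combining these case bounds produces $(1+y)(\kappa^G-F') \le C(\kappa^G-F') + C(C-1)$ pointwise, and integrating over the finite interval $[0,T]$ closes the argument.
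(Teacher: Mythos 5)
Your proof is correct and follows essentially the same route as the paper: both reduce to Proposition~\ref{prop:local martingale property}(b)(ii) via Proposition~\ref{prop:stochastic exponential}, identify the strict local martingale property with finiteness of $\int_0^T(\kappa^G-F')\dd u$, and then establish the equivalence $\int_0^T(\kappa^G-F')\dd u<\infty \Leftrightarrow \int_0^T(1+y)(\kappa^G-F')\dd u<\infty$ from the two-sided pointwise bound $\epsilon(\kappa^G-F')\leq(1+y)(\kappa^G-F')\leq C(\kappa^G-F')+C(C-1)$. Your case split on $\{\kappa^G\geq CF'\}$ vs.\ $\{\kappa^G<CF'\}$ simply spells out the derivation of the upper bound that the paper states without elaboration, and your explicit checks that $\Delta H(T)=0$ and $0\leq\tilde F'\leq\kappa^H$ make precise the transfer to $(Q^\gamma,H)$ that the paper leaves implicit.
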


\begin{proof}
First, for $t \in [0, T)$, \eqref{eqn:kappa} and \eqref{eqn:prop:ELMM:relations:H} give
\begin{align}
\label{eqn:thm:stochastic exponential:strict local martingale:pf:G}
1- G(t)
&= \exp\left(-\int_0^t \kappa^G(t) \dd t\right),\\
\label{eqn:thm:stochastic exponential:strict local martingale:pf:H}
1- H(t)
&= \exp\left(-\int_0^t \kappa^H(t) \dd t\right) =  \exp\left(-\int_0^t \kappa^G(t)(1 + y(t)) \dd t\right).
\end{align}
Now, the first claim follows from Propositions~\ref{prop:stochastic exponential} and~\ref{prop:local martingale property}~(b)~(ii) and \eqref{eqn:thm:stochastic exponential:strict local martingale:pf:G} because
\begin{align*}
(\exp \circ F)(t)(1 - G(t))
&= \exp\left(-\int_0^t (\kappa^G(u)-F'(u)) \dd u\right), \quad t\in[0,T).
\end{align*}
For the second claim, note that integrability of $(\kappa^G(t)-F'(t))(1 + y(t))$ on $(0, T)$  is equivalent to integrability of $\kappa^G(t)-F'(t)$ on $(0, T)$ since by \eqref{eqn:thm:stochastic exponential:strict local martingale},
\begin{align*}
\epsilon (\kappa^G(t)-F'(t))
&\leq (\kappa^G(t)-F'(t))(1 + y(t))
\leq C (\kappa^G(t)-F'(t)) + C(C-1).
\end{align*}
Now, the second claim follows from the first one and Propositions~\ref{prop:stochastic exponential} and \ref{prop:local martingale property}~(b)~(ii) (with $Q$ and $H$ replaced by $P$ and $G$, respectively) and \eqref{eqn:thm:stochastic exponential:strict local martingale:pf:H} because
\begin{align*}
&\exp\left(\int_0^t F' (u)(1 + y(u)) \dd u\right)(1 - H(t))\\
&\;= \exp\left(-\int_0^t (\kappa^G(u)-F'(u))(1 + y(u))\dd u\right), \quad t\in[0,T).
\qedhere
\end{align*}
\end{proof}

\subsection{Equivalent local martingale measures}
\label{sec:ELMMs}

Combining the ``removal-of-drift'' result Theorem~\ref{thm:ELMM single jump martingale} for single jump semimartingales with Girsanov's theorem for Brownian motion allows us to construct a rich subclass of ELMMs for the financial market \eqref{eqn:S Pdynamics}.

\begin{theorem}
\label{thm:ELMM}
Let $y \in C^1[0, T)$ with $\inf_{t \in [0, T)} y(t) > -1$  be such that
\begin{align}
\label{eqn:thm:ELMM}
\int_0^T \left(\phi'(u) y(u)\right)^2  \dd u < \infty
\quad \text{and} \quad
\int_0^T \1_{\{\Delta G(T) > 0\}}\kappa^G (u) (1 + y(u)) \dd u < \infty.
\end{align}
Define the functions $\zeta : [0, T) \to (0, \infty)$ and $H: [0, \infty) \to [0, 1]$ and the process $Z = (Z_t)_{t \in [0, T]}$ by
\begin{align}
\label{eqn:thm:ELMM:zeta}
\zeta(t)
&= \exp\left(-\int_0^t \kappa^G (u) y(u) \dd u \right),\\
\label{eqn:thm:ELMM:H} 
H(t)
&= 1- \exp\left(- \int_0^t \kappa^G (u) (1 + y(u)) \dd u \right) \1_{\{t < T\}},\\
\label{eqn:thm:ELMM:Z}
Z_t
&= \cE_t\left ( -\int_0^\cdot \frac{1}{\sigma} \left(\mu- \phi'(u)y(u)\1_{\lbrace u \leq \gamma, u < T\rbrace}\right) \dd W_u \right ) \mart[t]{G}{\zeta}.
\end{align}
Then $Z$ is a positive $P$-martingale starting at $1$. Define the measure $Q \approx P$ on $\cF_T$ by $\frac{\diff Q}{\diff P} = Z_T$. Then $S$ is a local $Q$-martingale and satisfies the SDE
\begin{align}
\label{eqn:thm:ELMM:S Qdynamics}
\diff S_t
&= S_{t-} \left(\sigma \dd W^Q_t + \dd \mart{H}{\Big(\int_0^\cdot \phi' (u)(1 + y(u)) \dd u\Big)_t} \right),
\end{align}
where $W^Q = W +\int_0^\cdot \frac{1}{\sigma} \left(\mu- \phi'(u)y(u)\1_{\lbrace u \leq \gamma, u < T\rbrace}\right) \dd u$ is a $Q$-Brownian motion, $\gamma$ has distribution function $H$ under $Q$, and $\mart{H}{\left(\int_0^\cdot \phi' (u)(1 + y(u)) \dd u\right)}$ is a square-integrable $Q$-martingale.
\end{theorem}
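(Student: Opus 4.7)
The plan is to build $Z$ as the product of a Girsanov exponential for the Brownian part and the jump density $\mart{G}{\zeta}$ supplied by Theorem~\ref{thm:ELMM single jump martingale} (applied with $F = \phi$), to verify that $Z$ is a true $P$-martingale via a conditioning argument that exploits the independence of $\FF^W$ and $\FF^\gamma$ under $P$, and finally to invoke the change-of-filtration lemma (Lemma~\ref{lem:filtration}) in order to transfer the single-jump martingale property from $\FF^\gamma$ to $\FF$ under the new measure.

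First I would verify that Theorem~\ref{thm:ELMM single jump martingale} applies with $F = \phi$: the inequality $0 \leq \phi' \leq \kappa^G$ is the standing assumption~\eqref{eqn:standing assumption}, and in the case $\Delta G(T) > 0$ the condition $\int_0^T |\phi'(u)y(u)|\dd u < \infty$ follows from Cauchy--Schwarz combined with the first half of~\eqref{eqn:thm:ELMM}, while $\int_0^T \kappa^G(u)(1+y(u))\dd u < \infty$ is exactly the second half. Theorem~\ref{thm:ELMM single jump martingale} then provides $\mart{G}{\zeta}$ as a positive $(P,\FF^\gamma)$-martingale (hence, by independence of $\FF^W$ and $\FF^\gamma$, also a $(P,\FF)$-martingale), identifies the law of $\gamma$ under $Q^\gamma := \mart[T]{G}{\zeta}\cdot P|_{\cF^\gamma_T}$ as $H$, and yields, with $N := \mart{H}{\left(\int_0^\cdot \phi'(u)(1+y(u)) \dd u\right)}$, the identity $N = \mart{G}{\phi} + \int_0^\cdot \1_{\{u\leq\gamma\}} \phi'(u) y(u) \dd u$ as a square-integrable $(Q^\gamma,\FF^\gamma)$-martingale.

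Next I would set $\tilde M := -\int_0^\cdot \frac{1}{\sigma}(\mu - \phi'(u)y(u)\1_{\{u\leq\gamma,\,u<T\}}) \dd W_u$; the inequality $[\tilde M]_T \leq \frac{2T\mu^2}{\sigma^2} + \frac{2}{\sigma^2}\int_0^T (\phi'(u)y(u))^2 \dd u$ gives a deterministic bound, so Novikov's criterion shows that $\cE(\tilde M)$ is a uniformly integrable $(P,\FF)$-martingale. Since $\cE(\tilde M)$ is continuous and $\mart{G}{\zeta}$ has finite variation, their covariation vanishes, and integration by parts shows that $Z = \cE(\tilde M)\,\mart{G}{\zeta}$ is a positive $(P,\FF)$-local martingale. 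To promote it to a true martingale I would condition on $\cF^\gamma_T = \sigma(\gamma)$: given $\gamma$, the integrand of $\tilde M$ is deterministic and $W$ remains a Brownian motion (by independence of $\FF^W$ and $\FF^\gamma$ under $P$), whence $E[\cE_T(\tilde M)\,|\,\cF^\gamma_T] = 1$; combined with the $\cF^\gamma_T$-measurability of $\mart[T]{G}{\zeta}$, this yields $E[Z_T] = E[\mart[T]{G}{\zeta}] = 1$. The same argument shows that $Q|_{\cF^\gamma_T} = Q^\gamma$, so that $\gamma$ has distribution $H$ under $Q$.

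The continuous martingale part of $Z$ is $\cE(\tilde M)$ and $\langle W, \tilde M\rangle_t = -\int_0^t \frac{1}{\sigma}(\mu - \phi'(u)y(u)\1_{\{u\leq\gamma,\,u<T\}})\dd u$, so Girsanov's theorem identifies $W^Q$ as a $(Q,\FF)$-Brownian motion. Substituting $\dd W_t = \dd W^Q_t - \frac{1}{\sigma}(\mu - \phi'(t)y(t)\1_{\{t\leq\gamma,\,t<T\}})\dd t$ into~\eqref{eqn:S Pdynamics} and appealing to~\eqref{eqn:prop:ELMM:Qdynamics} then produces~\eqref{eqn:thm:ELMM:S Qdynamics}. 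The main remaining obstacle is that the identity for $N$ has so far been established only at the level of $(Q^\gamma,\FF^\gamma)$-martingales; I would invoke Lemma~\ref{lem:filtration} to upgrade $N$ to a square-integrable $(Q,\FF)$-martingale, which then shows that $S$ is a local $(Q,\FF)$-martingale and completes the proof.
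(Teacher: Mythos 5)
Your proof is correct and follows the same high-level route as the paper's: reduce the jump part to Theorem~\ref{thm:ELMM single jump martingale} with $F=\phi$, handle the Brownian part via Girsanov, and pass from $\FF^\gamma$ to $\FF$ under $Q$ by Lemma~\ref{lem:filtration}. Where you differ is in the intermediate verifications, and the differences are sensible. The paper establishes both the $(P,\FF)$-martingale property of $Z$ and the law of $\gamma$ under $Q$ by citing Lemma~\ref{lem:filtration}~(a)~(i) and (b)~(i); you instead give a direct argument (Novikov with a deterministic bound, integration by parts for the product, and conditioning on $\cF^\gamma_T$ to get $E[Z_T]=1$ and $Q|_{\cF^\gamma_T}=Q^\gamma$). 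In effect you are unfolding the proof of those two parts of Lemma~\ref{lem:filtration}, which rests on the same conditioning idea, so the content is equivalent; your version is slightly more self-contained at the cost of repeating work the lemma already packages. For Girsanov you apply it in one step $P\to Q$ by computing $\dd[Z,W]/Z_{-}$, whereas the paper factors it through the intermediate measure $Q^1$ and uses that $\mart{G}{\zeta}$ is purely discontinuous; both are fine, and yours is arguably cleaner. The step you do defer to Lemma~\ref{lem:filtration}~(b)~(ii) --- upgrading $\mart{H}{(\int_0^\cdot\phi'(1+y)\dd u)}$ from a $(Q^\gamma,\FF^\gamma)$- to a $(Q,\FF)$-martingale --- is exactly the step where that lemma is hardest to bypass, so that is the right place to invoke it. One small point worth spelling out if you write this up fully: the conditioning argument showing $E[\cE_T(\tilde M)\mid\cF^\gamma_T]=1$ implicitly uses a monotone-class or Fubini-type step to make ``given $\gamma=v$, the integrand is deterministic'' rigorous; the same ingredient appears inside the proof of Lemma~\ref{lem:filtration}~(a).
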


\begin{proof}
For convenience, define the function $j: [0, T]^2 \to \RR$ by $j(t,v) =  \frac{1}{\sigma}\left (\mu- \phi'(t)y(t)\1_{\lbrace t \leq v, t < T\rbrace}\right)$ and set $Z^1 := \cE\left ( -\int_0^\cdot j(u, \gamma) \dd W_u \right )$ and $Z^2 := \mart{G}{\zeta}$. 

First, $Z = Z^1 Z^2$ is a ($P$, $\FF$)-martingale by Lemma~\ref{lem:filtration}~(a)~(i) with $Y^1 = Z^1$ and $Y^2 = Z^2$, using that $Z^2$ is a positive $(P,FF^\gamma)$-martingale by Theorem~\ref{thm:ELMM single jump martingale}. Clearly, $Z_0 = Z^1_0 = Z^2_0 = 1$, and $Z^2$ is also a $(P,\FF)$-martingale by the independence of $\FF^W$ and $\FF^\gamma$ under $P$.

Second, define $Q^1 \approx P$ on $\cF_T$ by $\frac{\diff Q^1}{\diff P} = Z^1_T$. Clearly, $Q^1 \approx Q$ with $\frac{\diff Q}{\diff Q^1} = Z^2_T$. By Girsanov's theorem (from $P$ to $Q^1$), $W - (-\int_0^\cdot j(u, \gamma) \dd u) = W^Q$ is a $Q^1$-Brownian motion, and again by Girsanov's theorem (from $Q^1$ to $Q$) and the fact that $Z^2$ is purely discontinuous, ${W^Q - \int_0^\cdot \frac{1}{Z^2_u} \dd [Z^2,W^Q]_u = W^Q}$ is a local $Q$-martingale. By Lévy's characterization of Brownian motion, it is even a $Q$-Brownian motion. 

Third, define $Q^\gamma \approx P$ on $\cF_T$ (and on $\cF^\gamma_T$) by $\frac{\diff Q^\gamma}{\diff P} = Z^2_T$. Then $\gamma$ has distribution function $H$ under $Q^\gamma$ by Theorem~\ref{thm:ELMM single jump martingale} and also under $Q$ by Lemma~\ref{lem:filtration}~(b)~(i), applying the latter for $X^{2, Q} = \1_{\{\gamma \leq t\}}$ and $s = 0$.

Finally, $\mart{G}{\phi} + \int_0^{\cdot} \1_{\{u \leq \gamma\}}\phi'(u) y(u) \dd u = \mart{H}{\left(\int_0^\cdot \phi' (u)(1 + y(u)) \dd u\right)}$ is a square-integrable $(Q^\gamma, \FF^\gamma)$-martingale by Theorem~\ref{thm:ELMM single jump martingale}, and hence also a square-integrable $(Q, \FF)$-martingale by Lemma~\ref{lem:filtration}~(b)~(ii). Now, \eqref{eqn:thm:ELMM:S Qdynamics} follows from the definition of $W^Q$ and the dynamics of $S$ in \eqref{eqn:S Pdynamics}.
\end{proof}

Note that under $Q$ as in Theorem~\ref{thm:ELMM}, the stock price $S$ can be written as the product of a continuous stochastic exponential and a purely discontinuous single jump local martingale. The following technical corollary to Theorem~\ref{thm:ELMM} provides conditions under which the $Q$-martingale property of the single jump local martingale carries over to the $Q$-martingale property of the product.

\begin{corollary}
\label{cor:ELMM:structure}
Let $y$, $H$, $Q$, and $W^Q$ be as in Theorem~\ref{thm:ELMM}. Let $k: [0, T]^2 \to \RR$ be of the form $k(t,v) = k_1(t) + k_2(t) \1_{\{t \leq v, t < T\}}$, where $k_1, k_2 \in L^2[0, T]$, and let $\eta \in C^1[0, T)$ be such that $\int_0^T \vert \cA^H \eta(u) \vert H'(u) \dd u < \infty$. Then 
\begin{align}
\label{eqn:cor:ELMM:structure}
\cE \left(\int_0^\cdot k(u, \gamma) \dd W_u^Q \right) \mart{H}{\eta}
\end{align}
is a local $Q$-martingale. It is a $Q$-martingale if and only if $\mart{H}{\eta}$ is a $Q$-martingale.
\end{corollary}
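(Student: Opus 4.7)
Write $N := \cE(\int_0^\cdot k(u, \gamma)\dd W^Q_u)$ and $M := \mart{H}{\eta}$. The plan is to show first that $NM$ is a local $Q$-martingale by Itô's product rule, exploiting that $N$ is continuous while $M$ is purely discontinuous, and then to derive the equivalence of the true martingale property from a conditioning argument based on the $Q$-independence of $W^Q$ and $\gamma$. Since $k_1, k_2 \in L^2[0,T]$, $\int_0^T k(u, \gamma)^2 \dd u < \infty$ $Q$-a.s., so $\int_0^\cdot k(u,\gamma)\dd W^Q_u$ is a well-defined continuous local $Q$-martingale and $N$ is therefore a positive continuous local $Q$-martingale with $N_0 = 1$. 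Applying Proposition~\ref{prop:local martingale property}~(a) under $Q^\gamma$, with $G$ replaced by the law $H$ of $\gamma$ under $Q^\gamma$ (cf.\ Theorem~\ref{thm:ELMM single jump martingale}), together with the hypothesis $\int_0^T|\cA^H\eta(u)|H'(u)\dd u < \infty$, shows that $M$ is an integrable local $(Q^\gamma, \FF^\gamma)$-martingale; Lemma~\ref{lem:filtration}~(b)~(ii) then transfers this to $M$ being a local $(Q, \FF)$-martingale. Since $N$ is continuous and the quadratic variation of $M$ consists only of an atom at $\gamma$ (so the continuous martingale part of $M$ vanishes), the bracket $[N, M]$ vanishes identically. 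Itô's product rule gives
\begin{align*}
NM = N_0 M_0 + \int_0^\cdot N_{u-}\dd M_u + \int_0^\cdot M_{u-}\dd N_u,
\end{align*}
and local boundedness of $N_-$ and $M_-$ implies that both stochastic integrals are local $Q$-martingales. Hence $NM$ is a local $Q$-martingale.

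For the equivalence of the true martingale property, the key observation is that $W^Q$ and $\gamma$ are independent under $Q$. Indeed, in the construction of Theorem~\ref{thm:ELMM}, the factor $Z^1_T = \cE_T(-\int j(u,\gamma)\dd W)$ of $\diff Q/\diff P$ is, conditional on $\gamma = v$, the Wald exponential of a deterministic-integrand $W$-integral, so $W^Q = W + \int_0^\cdot j(u, v)\dd u$ is a Brownian motion under the conditional law; since the remaining factor $Z^2_T$ depends only on $\gamma$, the joint law of $(W^Q, \gamma)$ under $Q$ is the product of Wiener measure and the law $H$ of $\gamma$. Consequently, conditional on $\gamma = v$, $N$ coincides with $\cE(\int k(u,v)\dd W^Q_u)$, which is the Wald exponential of a Brownian motion with deterministic $L^2$ integrand and hence a true martingale on $[0,T]$. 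For fixed $0 \leq s \leq t \leq T$, $M_t$ is $\sigma(\gamma)$-measurable, and $Q$-integrable since $|M_t| \leq |\eta(t)| + |\cA^H\eta(\gamma)|$ and $E_Q[|\cA^H\eta(\gamma)|] \leq \int_0^T|\cA^H\eta(u)|H'(u)\dd u + |\cA^H\eta(T)|\Delta H(T) < \infty$. Writing $\cF_s = \cF^{W^Q}_s \vee \cF^\gamma_s$ and conditioning successively on $\cF_s \vee \sigma(\gamma)$ and on $\cF_s$ yields
\begin{align*}
E_Q[N_t M_t \mid \cF_s]
&= E_Q\bigl[M_t\,E_Q[N_t \mid \cF_s \vee \sigma(\gamma)] \bigm| \cF_s\bigr]\\
&= E_Q[M_t N_s \mid \cF_s] = N_s\, E_Q[M_t \mid \cF_s],
\end{align*}
using the pull-out property (for the $\sigma(\gamma)$-measurable $M_t$), the conditional martingale property of $N$ given $\gamma$, and the $\cF_s$-measurability of $N_s$. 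Since $N_s > 0$ $Q$-a.s., this identity immediately implies that $NM$ is a $Q$-martingale if and only if $M$ is.

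The main technical obstacle is the rigorous justification of the $Q$-independence of $W^Q$ and $\gamma$, and of the associated conditioning on $\sigma(\gamma)$ inside the filtration $\FF$. This independence is implicit in the Girsanov construction of $Q$ in Theorem~\ref{thm:ELMM} and can be formalized using Lemma~\ref{lem:filtration}. Once this is in place, the remaining steps are routine manipulations of stochastic integrals and conditional expectations.
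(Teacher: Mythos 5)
Your argument is correct, but it takes a genuinely different route from the paper's proof, and it is worth highlighting the contrast.

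The paper's proof never leaves the measure $P$: it sets $\tilde Z^1 := \cE\left(\int_0^\cdot k(u,\gamma)\dd W^Q_u\right)$, $\tilde Z^2 := \mart{H}{\eta}$, $Y^1 := Z^1\tilde Z^1$, $Y^2 := Z^2\tilde Z^2$, observes that $Y^1 = \cE\left(\int_0^\cdot (k-j)(u,\gamma)\dd W_u\right)$ $\as{P}$, and then uses Bayes' theorem to reduce both the local martingale claim and the martingale equivalence to statements about $Y^1Y^2$ and $Y^2$ under $P$; these are exactly what Lemma~\ref{lem:filtration}~(a)~(i) and~(ii) deliver. The whole point of that lemma is to circumvent the \emph{a priori} loss of independence between $\FF^W$ and $\FF^\gamma$ under $Q$.

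Your route is shorter in two places and sharper in one. For the local martingale part, the It\^o product rule under $Q$ (together with $[N,M]\equiv 0$ because $N$ is continuous and $M$ is purely discontinuous of finite variation) is cleaner than the paper's detour through $Y^1Y^2$ under $P$; it only needs $M$ to be a local $(Q,\FF)$-martingale, which you obtain from Proposition~\ref{prop:local martingale property}~(a) and Lemma~\ref{lem:filtration}~(b)~(iii) (you cite (b)(ii), which is the true-martingale equivalence; the local transfer is (b)(iii) --- a harmless slip). For the martingale equivalence you exploit an observation that the paper does not use and that is genuinely informative: although $\FF^W$ and $\FF^\gamma$ are in general dependent under $Q$, the Girsanov-shifted Brownian motion $W^Q$ \emph{is} $Q$-independent of $\gamma$, because the drift $j(\cdot,\gamma)$ removed from $W$ is a deterministic function of the pair $(t,\gamma)$; conditioning on $\gamma=v$ reduces the density factor $Z^1$ to a deterministic-integrand Dol\'eans exponential, so that $W^Q$ has the Wiener law regardless of $v$. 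Once this independence is established, the identity $\cF_s = \cF^{W^Q}_s\vee\cF^\gamma_s$ and a two-step conditioning argument give $E_Q[N_tM_t\mid\cF_s] = N_sE_Q[M_t\mid\cF_s]$, from which the equivalence follows; and the same identity at $s=0$ yields the integrability $E_Q[|N_tM_t|]=E_Q[|M_t|]$ needed in either direction.

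The one real gap is that you flag the $Q$-independence of $W^Q$ and $\gamma$ as the main technical point but do not prove it, and the claim that it ``can be formalized using Lemma~\ref{lem:filtration}'' is not quite right: that lemma gives conditional-expectation identities for $\cF^\gamma_T$-measurable quantities, not independence of $W^Q$ and $\gamma$. The independence should instead be derived directly from the structure of $Z=Z^1Z^2$: using the $P$-independence of $W$ and $\gamma$, condition on $\gamma=v$, observe that $Z^1_T$ becomes $\cE_T\left(-\int_0^\cdot j(u,v)\dd W_u\right)$ (a Wiener density under which $W+\int_0^\cdot j(u,v)\dd u$ is a Brownian motion), and that $Z^2_T$ becomes a deterministic function of $v$; the resulting conditional law of $W^Q$ given $\gamma=v$ is Wiener and does not depend on $v$, which is exactly the claimed independence. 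Adding this short Fubini--Girsanov computation closes the argument. Note also that this independence is a special feature of the ELMMs constructed in Theorem~\ref{thm:ELMM} (where the Girsanov kernel is deterministic in $(t,\gamma)$); the paper's Lemma~\ref{lem:filtration} approach does not rely on it and is in that sense more robust, but your observation gives a more transparent probabilistic explanation for why the corollary holds.
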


\begin{proof}
Let $j$, $Z$, $Z^1$, $Z^2$, and $Q^\gamma$ be as in the proof of Theorem~\ref{thm:ELMM}. Set $\tilde Z^1 := \cE \left(\int_0^\cdot k(u, \gamma) \dd W_u^Q \right)$, $\tilde Z^2 := \mart{H}{\eta}$, $Y^1 := Z^1 \tilde Z^1$, and $Y^2 := Z^2 \mart{H}{\eta}$. Then $\tilde Z^2 = \mart{H}{\eta}$ is a local $(Q^\gamma, \FF^\gamma)$-martingale by  Proposition~\ref{prop:local martingale property}~(a) (using that $\gamma$ has distribution function $H$ under $Q^\gamma$), and a short calculation gives $Y^1 = \cE \left(\int_0^\cdot (k-j)(u, \gamma) \dd W_u \right)$ $\as{P}$

We have to show that $\tilde Z^1 \tilde Z^2$ is a local $(Q, \FF)$-martingale,  and that $\tilde Z^1 \tilde Z^2$ is a $(Q, \FF)$-martingale if and only if $\tilde Z^2$ is a $(Q, \FF)$-martingale or, equivalently by Lemma~\ref{lem:filtration}~(b)~(ii), a $(Q^\gamma, \FF^\gamma)$-martingale. By Bayes' theorem, it suffices to show that $Y^1 Y^2 = Z \tilde Z^1 \tilde Z^2$ is a local $(P, \FF)$-martingale and that it is a $(P, \FF)$-martingale if and only if $Y^2 = Z^2 \tilde Z^2$ is a $(P, \FF^\gamma)$-martingale. Recalling that $Z^2_T = \frac{\diff Q^\gamma}{\diff P}$, Bayes' theorem yields that $Y^2 = Z^2 \tilde Z^2$ is a local $(P, \FF^\gamma)$-martingale, and Lemma~\ref{lem:filtration}~(a)~(ii) and (i) with $k$ replaced by $k-j$ completes the proof.
\end{proof}

\subsection{Strict local martingale conditions}
\label{sec:strict local martingale conditions}

We are now in a position to state our first main result. It gives a necessary and sufficient condition for the asset price $S$ to be a strict local martingale under certain ELMMs $Q$ constructed as in Theorem~\ref{thm:ELMM}.

\begin{theorem}
\label{thm:strict local martingale}
Let $y$ and $Q$ be as in Theorem~\ref{thm:ELMM}.
\begin{enumerate}
\item If $\Delta G(T) > 0$, then $S$ is always a $Q$-martingale.
\item If $\Delta G(T) = 0$, assume in addition that there exist constants $\epsilon \in (0, 1]$ and $C \geq 1$ such that
\begin{align}
\label{eqn:thm:strict local martingale}
\epsilon
&\leq 1 +y(t)
\leq C + \frac{C}{\phi'(t)} \1_{\{\kappa^G(t) < C \phi'(t)\}}, \quad t \in [0, T).
\end{align}
Then:
\begin{itemize}
\item $S$ is a $Q$-martingale if and only if $\int_0^T (\kappa^G(u) - \phi'(u)) \dd u = \infty$.
\item $S$ is a strict local $Q$-martingale if and only if $\int_0^T (\kappa^G(u) - \phi'(u)) \dd u < \infty$. Moreover, in this case, $\limsup_{t\uparrow\uparrow T} \delta(t) = 1$.
\end{itemize}
\end{enumerate}
\end{theorem}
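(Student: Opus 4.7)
\emph{Proof plan.} The idea is to multiplicatively decompose $S$ under $Q$ into a continuous Brownian exponential and a single jump martingale, and then to reduce the martingale questions to those in the previous subsection.

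Step 1: Decomposition of $S$. Using Theorem~\ref{thm:ELMM}, $S$ satisfies \eqref{eqn:thm:ELMM:S Qdynamics}, and since the continuous and (purely discontinuous, finite-variation) jump parts are orthogonal, the stochastic exponential factorises as
\begin{align*}
S
&= \cE(\sigma W^Q) \cdot \cE\!\left(\mart{H}{\psi}\right),
\qquad \psi(t) := \int_0^t \phi'(u)(1+y(u)) \dd u.
\end{align*}
Since $0 \le \phi' \le \kappa^G$ and $1+y > 0$, identity \eqref{eqn:prop:ELMM:relations:H} yields $0 \le \psi' = \phi'(1+y) \le \kappa^G(1+y) = \kappa^H$ with $\psi(0)=0$. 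Hence Proposition~\ref{prop:stochastic exponential} (applied with $P,G$ replaced by $Q^\gamma,H$) gives $\cE(\mart{H}{\psi}) = \mart{H}{(\exp\circ\psi)}$, which is a nonnegative local $(Q^\gamma,\FF^\gamma)$-martingale. Proposition~\ref{prop:local martingale property}~(a) then furnishes the integrability condition needed in Corollary~\ref{cor:ELMM:structure}, and that corollary (with $k_1\equiv\sigma$, $k_2\equiv 0$, $\eta = \exp\circ\psi$) shows that $S$ is a local $Q$-martingale, and moreover that
\begin{align*}
S \text{ is a } Q\text{-martingale}
\iff \mart{H}{(\exp\circ\psi)} \text{ is a } Q\text{-martingale}
\iff \mart{H}{(\exp\circ\psi)} \text{ is a } (Q^\gamma,\FF^\gamma)\text{-martingale}.
\end{align*}

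Step 2: Part (a). If $\Delta G(T) > 0$, then the conditions \eqref{eqn:thm:ELMM} give $\int_0^T \kappa^G |y|\dd u < \infty$, so $\zeta(T-) \in (0,\infty)$ exists. Relation \eqref{eqn:prop:ELMM:relations:zeta} yields $\Delta H(T) = \zeta(T-)\Delta G(T) > 0$. Proposition~\ref{prop:local martingale property}~(b)~(i) (applied under $Q^\gamma$ with distribution function $H$) then promotes the local $(Q^\gamma,\FF^\gamma)$-martingale $\mart{H}{(\exp\circ\psi)}$ automatically to a martingale. By Step~1, $S$ is a $Q$-martingale.

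Step 3: Part (b). Now $\Delta G(T) = 0$. Apply Theorem~\ref{thm:stochastic exponential:strict local martingale} with $F = \phi$ (whose hypotheses are precisely \eqref{eqn:standing assumption} together with \eqref{eqn:thm:strict local martingale}): since $\psi = \int_0^\cdot F'(1+y)\dd u$, the theorem states that $\cE(\mart{H}{\psi}) = \mart{H}{(\exp\circ\psi)}$ is a strict local $(Q^\gamma,\FF^\gamma)$-martingale if and only if $\int_0^T (\kappa^G(u) - \phi'(u))\dd u < \infty$. Combined with Step~1, this proves both dichotomies in part~(b): $S$ is a $Q$-martingale iff $\int_0^T (\kappa^G-\phi')\dd u = \infty$, and a strict local $Q$-martingale iff this integral is finite.

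Step 4: The limsup claim. Suppose $S$ is a strict local $Q$-martingale, so $(1-\delta)\kappa^G = \kappa^G - \phi'$ is integrable on $(0,T)$. Since $\Delta G(T)=0$, Proposition~\ref{prop:hazard rate} gives $\int_0^T \kappa^G(u)\dd u = \infty$. If we had $\limsup_{t\uparrow\uparrow T}\delta(t) < 1$, there would exist $\alpha < 1$ and $t_0 < T$ with $\delta \le \alpha$ on $(t_0,T)$, whence $(1-\delta)\kappa^G \ge (1-\alpha)\kappa^G$ on $(t_0,T)$ would force $\int_{t_0}^T (1-\delta)\kappa^G \dd u = \infty$, a contradiction. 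Since $\delta \le 1$ by \eqref{eqn:standing assumption}, we conclude $\limsup_{t\uparrow\uparrow T}\delta(t) = 1$.

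The main subtlety is the bookkeeping in Step~1, i.e., correctly invoking Corollary~\ref{cor:ELMM:structure} and the underlying change-of-filtration lemma so that the martingale property under $(Q,\FF)$ for $S$ is equivalent to that of $\mart{H}{(\exp\circ\psi)}$ under $(Q^\gamma,\FF^\gamma)$; once this link is in place, Theorem~\ref{thm:stochastic exponential:strict local martingale} does the analytic work.
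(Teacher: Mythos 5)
Your proposal is correct and follows the paper's own route essentially step by step: the multiplicative decomposition $S = \cE(\sigma W^Q)\,\cE\bigl(\mart{H}{\psi}\bigr)$, the reduction to the jump factor via Corollary~\ref{cor:ELMM:structure} and Lemma~\ref{lem:filtration}, Proposition~\ref{prop:stochastic exponential} to rewrite the jump factor as $\mart{H}{(\exp\circ\psi)}$, Proposition~\ref{prop:local martingale property}~(b)(i) for part~(a), Theorem~\ref{thm:stochastic exponential:strict local martingale} for part~(b), and the same contradiction argument for the $\limsup$ claim. The only cosmetic difference is in part~(a), where you verify $\Delta H(T)>0$ explicitly via $\Delta H(T)=\zeta(T-)\Delta G(T)$ (using the integrability condition in \eqref{eqn:thm:ELMM}), whereas the paper implicitly relies on $Q\approx P$ to transfer $\Delta G(T)>0$ to $\Delta H(T)>0$; both are valid.
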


\begin{proof}
First, by \eqref{eqn:thm:ELMM:S Qdynamics} and the fact that $\mart{H}{\left(\int_0^\cdot \phi' (u)(1 + y(u)) \dd u\right)}$ is purely discontinuous,
\begin{align}
\label{eqn:thm:strict local martingale:pf:S}
S
&= \cE(\sigma W^Q) \cE\left(\mart{H}{\left(\int_0^\cdot \phi' (u)(1 + y(u)) \dd u\right)}\right) \;\; \as{P}
\end{align}
Next, by Corollary~\ref{cor:ELMM:structure}, the right-hand side of  \eqref{eqn:thm:strict local martingale:pf:S} is a $Q$-martingale if and only if the second factor is. To this end, note that by Proposition~\ref{prop:stochastic exponential}, the second factor is of the form $\mart{H}{\eta}$ for $\eta = \exp \circ \left(\int_0^\cdot \phi' (u)(1 + y(u)) \dd u\right)$. Now, if $\Delta G(T) > 0$, then $\mart{H}{\eta}$ is a $Q$-martingale by Proposition~\ref{prop:local martingale property}~(b)~(i) (and Lemma~\ref{lem:filtration}~(b)~(ii) for the change of filtration). So we have (a). Otherwise, if $\Delta G(T) = 0$, Theorem~\ref{thm:stochastic exponential:strict local martingale} (and Lemma~\ref{lem:filtration}~(b)~(ii) for the change of filtration) shows that $\mart{H}{\eta}$ is a $Q$-martingale if and only if $\int_0^T (\kappa^G(u) - \phi'(u)) \dd u = \infty$. This gives both equivalences in (b).

Now, suppose that $\Delta G(T) = 0$ and that $\int_0^T (\kappa^G(u) - \phi'(u)) \dd u < \infty$. It remains to show that $\limsup_{t \uparrow\uparrow T} \delta(t) = 1$. By \eqref{eqn:standing assumption}, $\delta \leq 1$ on $(0,T)$, so it suffices to show that $\limsup_{t \uparrow\uparrow T} \delta(t) \geq 1$. Seeking a contradiction, suppose that there is $\varepsilon > 0$ such that $\limsup_{t\uparrow\uparrow T} \delta(t) \leq 1 -\varepsilon$. Then there is $t_0 \in (0,T)$ such that $\frac{\phi'(t)}{\kappa^G(t)}=\delta(t) \leq 1-\varepsilon$, $t \in (t_0,T)$, or, equivalently,
\begin{align}
\label{eqn:thm:JLS:strict local martingale:pf:30}
\kappa^G(t) - \phi'(t)
&\geq \varepsilon \kappa^G(t),\quad t\in(t_0,T).
\end{align}
Recall that $\kappa^G$ is nonintegrable on $(0,T)$ by Proposition~\ref{prop:hazard rate}. As $\kappa^G$ is continuous on $[0,T)$, it is also nonintegrable on $(t_0,T)$. But then by \eqref{eqn:thm:JLS:strict local martingale:pf:30}, also $\kappa^G - \phi'$ is nonintegrable on $(0,T)$. This is a contradiction.
\end{proof}

We illustrate Theorem~\ref{thm:strict local martingale} by giving an example where $S$ is a strict local $Q$-martingale.
\begin{example}
\label{ex:strict local martingale}
Let $\gamma$ be uniformly distributed on $[0,1]$, i.e., $T=1$, $G(t) = t$ on $[0,1]$, and let $\phi \in C^1[0,1)$ be given by $\phi(t) = -\log(1-t) - t$. Then $\phi'(t) = \frac{1}{1-t} - 1 = \kappa^G(t) - 1$ fulfills assumption \eqref{eqn:standing assumption}, and so by Theorem~\ref{thm:strict local martingale}, $S$ is a strict local martingale under any ELMM $Q$ corresponding to some $y \in C^1[0, T)$ with $\inf_{t \in [0, T)} y(t) > -1$ and satisfying \eqref{eqn:thm:ELMM} and \eqref{eqn:thm:strict local martingale} (e.g., $y\equiv 0$). Note that $\delta(t)=t$, the relative size of the jump of $S$ if it happens at time $t \in [0,1)$, increases linearly from $0$ to $1$: the later the bubble bursts, the larger the relative jump size at bursting.
\end{example}

\subsection{Strict local martingale characterization of the relaxed JLS model}
\label{sec:JLS and strict local martingales}

As an application of our first main result, Theorem~\ref{thm:strict local martingale}, we now provide necessary and sufficient conditions for the relaxed JLS model to be a strict local martingale under the physical measure. In that case, the relative jump size $\delta(t)$ must essentially converge to $1$ as $t \uparrow\uparrow T$. In other words, for each $\varepsilon > 0$, there is a positive probability that the bubble component loses a fraction $1-\varepsilon$ of its value at the time of the crash.

\begin{theorem}
\label{thm:JLS:strict local martingale}
Suppose that the hazard rate $\kappa^G$ is of the form
\begin{align}
\label{eqn:thm:JLS:strict local martingale:hazard rate}
\kappa^G(t)
&= B'\vert T-t \vert^{m-1} + C'\vert T-t \vert^{m-1} \cos \left( \varpi \log(T-t) - \psi' \right), \quad t \in [0,T),
\end{align}
for real parameters $B'$, $C'$, $m$, $\varpi$, and $\psi'$ with $\vert C'\vert < B'$ (so that $\kappa^G > 0$ on $[0,T)$). Then $S$ is a strict local martingale if and only if 
\begin{align}
\label{eqn:thm:JLS:strict local martingale:condition}
m \leq 0
\quad\text{and}\quad
\int_0^T \big(\kappa^G(u) - \phi'(u)\big)\dd u < \infty.
\end{align}
Moreover, in this case, $\limsup_{t \uparrow\uparrow T} \delta(t) = 1$.
\end{theorem}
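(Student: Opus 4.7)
My plan is to obtain this result as a direct specialization of Theorem~\ref{thm:strict local martingale} (applied with $Q = P$, i.e., with $y \equiv 0$), together with an analysis of the integrability of the LPPL hazard rate that translates $\Delta G(T) = 0$ into $m \leq 0$.

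First, I would verify that the trivial choice $y \equiv 0$ is admissible in Theorem~\ref{thm:ELMM} and satisfies the additional assumption~\eqref{eqn:thm:strict local martingale} in Theorem~\ref{thm:strict local martingale}(b). Indeed, $\inf y = 0 > -1$, and both conditions in \eqref{eqn:thm:ELMM} hold (the first since the integrand vanishes, the second since $\Delta G(T) > 0$ implies $\int_0^T \kappa^G \,\diff u < \infty$ by Proposition~\ref{prop:hazard rate}). For \eqref{eqn:thm:strict local martingale}, taking $C = \epsilon = 1$ reduces it to $1 \le 1 + (1/\phi'(t))\mathbf{1}_{\{\kappa^G(t) < \phi'(t)\}}$, which is trivially satisfied because the standing assumption $\phi' \le \kappa^G$ makes the indicator vanish. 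Under this choice, the ELMM $Q$ of Theorem~\ref{thm:ELMM} coincides with $P$, so the strict local martingale property of $S$ under $Q$ is exactly the strict local martingale property under $P$.

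Next, Theorem~\ref{thm:strict local martingale} tells us that $S$ is a strict local martingale if and only if both $\Delta G(T) = 0$ and $\int_0^T (\kappa^G(u) - \phi'(u))\,\diff u < \infty$ hold (in case (a), $\Delta G(T) > 0$, the process is automatically a true martingale). By Proposition~\ref{prop:hazard rate}, $\Delta G(T) = 0$ is equivalent to $\kappa^G$ being nonintegrable on $(0, T)$. The core observation is then the following envelope for the LPPL form \eqref{eqn:thm:JLS:strict local martingale:hazard rate}: since $|C'| < B'$,
\begin{equation*}
(B' - |C'|)\,|T-t|^{m-1} \;\leq\; \kappa^G(t) \;\leq\; (B' + |C'|)\,|T-t|^{m-1}, \qquad t \in [0, T),
\end{equation*}
with $B' - |C'| > 0$. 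Since $\int_0^T |T-t|^{m-1}\,\diff t = \int_0^T s^{m-1}\,\diff s$ is finite if and only if $m > 0$, the hazard rate $\kappa^G$ is nonintegrable on $(0, T)$ if and only if $m \le 0$. Combining these equivalences yields the claim: $S$ is a strict local martingale iff $m \le 0$ and $\int_0^T (\kappa^G - \phi')\,\diff u < \infty$, which is exactly \eqref{eqn:thm:JLS:strict local martingale:condition}.

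The final assertion $\limsup_{t \uparrow\uparrow T} \delta(t) = 1$ is then immediate, since it is already part of the conclusion of Theorem~\ref{thm:strict local martingale}(b) in the strict local martingale case. There is no real obstacle here; the only delicate point is the verification that the standing assumption $\phi' \le \kappa^G$ makes the indicator in \eqref{eqn:thm:strict local martingale} vacuous, so that $y \equiv 0$ is admissible in Theorem~\ref{thm:strict local martingale}(b) without any further hypothesis on $\phi'$ beyond what is already assumed throughout Section~\ref{sec:model class}.
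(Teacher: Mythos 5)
Your proposal is correct and follows essentially the same route as the paper: reduce the LPPL condition $m\le 0$ to nonintegrability of $\kappa^G$ (hence to $\Delta G(T)=0$ via Proposition~\ref{prop:hazard rate}), and then invoke Theorem~\ref{thm:strict local martingale} with $y\equiv 0$. The only additions you make are the explicit envelope bound $(B'-|C'|)|T-t|^{m-1}\le\kappa^G(t)\le(B'+|C'|)|T-t|^{m-1}$ justifying the first equivalence and the careful verification that $y\equiv 0$ satisfies \eqref{eqn:thm:ELMM} and \eqref{eqn:thm:strict local martingale} (noting also that $Q=P$ here since $\mu=0$ in the relaxed JLS model), which the paper leaves implicit.
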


\begin{proof}
In view of the form \eqref{eqn:thm:JLS:strict local martingale:hazard rate} for $\kappa^G$ and the property $\vert C'\vert < B'$, we first note that $m \leq 0$ is equivalent to $\kappa^G$ being nonintegrable on $(0,T)$. This together with Proposition~\ref{prop:hazard rate} shows that $m\leq 0$ if and only if $\Delta G(T) = 0$. Now, the assertions follow from Theorem~\ref{thm:strict local martingale} (with $y \equiv 0$).
\end{proof}

\begin{remark}
\label{rem:JLS:strict local martingale}
Recalling from \eqref{eqn:delta} that $\delta = \frac{\phi'}{\kappa^G}$, the second condition in \eqref{eqn:thm:JLS:strict local martingale:condition} can alternatively be formulated as $(1-\delta)\kappa^G$ being integrable on $(0,T)$.
\end{remark}

\section{Optimal investment}
\label{sec:optimal investment}

Throughout this section, we assume that $\mu > 0$ and $\phi \in C^2[0,T)$. We now analyze how a rational investor should act in the presence of an asset price bubble of the type described in Section~\ref{sec:financial market}. The optimal investment problem for a small investor is introduced in Section~\ref{sec:problem formulation}. The optimal strategy and the associated integral equation are heuristically derived in Section~\ref{sec:heuristic derivation}. Section~\ref{sec:existence and uniqueness} contains the corresponding rigorous existence and uniqueness result. A decomposition of the optimal strategy into its myopic and hedging demands as well as its economic interpretations are provided in Section~\ref{sec:myopic and hedging}. The certainty equivalent of trading in this market is computed in Section~\ref{sec:certainty equivalent}. Finally, Section~\ref{sec:numerical illustrations} presents numerical illustrations of the optimal strategy and the certainty equivalent.

\subsection{Problem formulation}
\label{sec:problem formulation}

We consider a small investor with initial capital ${x > 0}$, who can trade in the financial market described in Section~\ref{sec:financial market}. For any $\FF$-predictable, real-valued process $\pi = (\pi_t)_{t \in [0,T]}$ which is integrable with respect to the returns process  $R$, let $X^\pi =(X^\pi_t)_{t \in [0,T]}$ be the unique solution to the SDE
\begin{align}
\label{eqn:wealth process SDE}
\frac{\diff X^\pi_t}{ X^\pi_{t-}}
&= \pi_t \frac{\diff S_t}{S_{t-}}
= \pi_t \dd R_t, \quad X^\pi_0 = x.
\end{align}
We call $\pi$ an \emph{admissible strategy} if $X^\pi$ is positive. In this case, we can interpret $X^\pi$ as the \emph{wealth process} corresponding to a self-financing strategy for the market $(B,S)$ (with initial capital $x$) and $\pi_t$ as the \emph{fraction of wealth} invested in the stock at time $t$. We assume that the investor has a constant relative risk aversion with parameter $p > 0$. The corresponding utility function is given by
\begin{align*}
U(x)
&=
\begin{cases}
\frac{1}{1-p}x^{1-p} &\text{if } p \neq 1, \\
\log x & \text{if } p = 1,
\end{cases} \quad x > 0.
\end{align*}
The investor's goal is to maximize the expected utility $\EX{U(X^\pi_T)}$ over all admissible strategies $\pi$:
\begin{align}
\label{eqn:investment problem}
\EX{U(X^\pi_T)} \to \max_\pi!
\end{align}

We use the method of \emph{convex duality} both for the derivation and for the verification of the optimal strategy. Instead of the very deep general result of Kramkov and Schachermayer~\cite{KramkovSchachermayer1999} for general incomplete semimartingale models, we only use the following well-known elementary result giving a sufficient condition for optimality; cf.~the remark after \cite[Lemma~2.4]{Kallsen2000}.

\begin{proposition}
\label{prop:convex duality}
Let $\hat\pi = (\hat\pi_t)_{t\in[0,T]}$ be an admissible strategy, $\hat Q$ an equivalent local martingale measure (ELMM), and $\hat z > 0$. If
\begin{align*}
\text{\normalfont(OC1)} \quad U'(X^{\hat\pi}_T) = \hat z \frac{\diff \hat Q}{\diff P}
\qquad \text{and} \qquad
\text{\normalfont(OC2)} \quad \EX[\hat Q]{X^{\hat\pi}_T} = x,
\end{align*}
then $\hat\pi$ maximizes the expected utility $\EX{U(X^\pi_T)}$ over all admissible strategies $\pi$.
\end{proposition}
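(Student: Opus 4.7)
The plan is to use the standard tangent-line/concavity trick from convex duality, combined with the supermartingale property of admissible wealth processes under any ELMM.

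First I would check the key ingredient: for every admissible strategy $\pi$, the process $X^\pi$ is a positive $\hat Q$-supermartingale with $E^{\hat Q}[X^\pi_T] \leq x$. Since $S$ is a local $\hat Q$-martingale and $S>0$, the returns process $R = \int_0^\cdot \frac{\diff S_u}{S_{u-}}$ is also a local $\hat Q$-martingale; hence $X^\pi/x = \cE(\int_0^\cdot \pi_u \dd R_u)$ is a positive local $\hat Q$-martingale (as the stochastic exponential of a local martingale), and a positive local martingale is a supermartingale by Fatou's lemma. This gives $E^{\hat Q}[X^\pi_T] \leq x$.

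Next, I invoke concavity of $U$ in its tangent-line form: for all $y>0$,
\begin{align*}
U(y) &\leq U(X^{\hat\pi}_T) + U'(X^{\hat\pi}_T)\big(y - X^{\hat\pi}_T\big).
\end{align*}
Substituting $y = X^\pi_T$ and using (OC1) to replace $U'(X^{\hat\pi}_T)$ by $\hat z\, \frac{\diff \hat Q}{\diff P}$ yields
\begin{align*}
U(X^\pi_T) &\leq U(X^{\hat\pi}_T) + \hat z\, \frac{\diff \hat Q}{\diff P}\big(X^\pi_T - X^{\hat\pi}_T\big).
\end{align*}
Taking $P$-expectations (splitting the right-hand difference into positive and negative parts if needed, so no $\infty - \infty$ appears; if $E[U(X^\pi_T)] = -\infty$ the inequality is trivial), the Radon-Nikodym factor converts to a $\hat Q$-expectation:
\begin{align*}
\EX{U(X^\pi_T)} &\leq \EX{U(X^{\hat\pi}_T)} + \hat z\, \big(\EX[\hat Q]{X^\pi_T} - \EX[\hat Q]{X^{\hat\pi}_T}\big).
\end{align*}

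Finally, I apply the supermartingale bound $\EX[\hat Q]{X^\pi_T} \leq x$ together with (OC2) $\EX[\hat Q]{X^{\hat\pi}_T} = x$ and $\hat z > 0$, so the bracketed term is $\leq 0$, giving $\EX{U(X^\pi_T)} \leq \EX{U(X^{\hat\pi}_T)}$. The only nonroutine point is verifying that the supermartingale argument applies to all admissible $\pi$; this is immediate once one observes that positivity of $X^\pi$ reduces it to Fatou's lemma applied to the local $\hat Q$-martingale $X^\pi/x$. No integrability of $U(X^\pi_T)$ needs to be assumed a priori, since the inequality is either trivial ($-\infty \leq \ldots$) or the tangent-line bound together with $E^{\hat Q}[X^\pi_T]<\infty$ and finiteness of $E[U(X^{\hat\pi}_T)]$ (which follows from (OC1) and (OC2)) makes the manipulation legitimate.
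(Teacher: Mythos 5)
Your proof is correct and is the standard convex-duality argument: tangent-line bound from concavity of $U$, plus the fact that every admissible wealth process is a nonnegative local $\hat Q$-martingale (by Ansel--Stricker, since $\int\pi\dd R$ is bounded below) and hence a $\hat Q$-supermartingale, giving $\EX[\hat Q]{X^\pi_T}\leq x$. The paper itself does not supply a proof of this proposition; it labels it a ``well-known elementary result'' and cites the remark after Lemma~2.4 in Kallsen (2000), which is precisely the argument you give.
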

The ELMM $\hat Q$ appearing in the above result is also called the \emph{dual minimizer} corresponding to the problem \eqref{eqn:investment problem}. 

\subsection{Heuristic derivation of the optimal strategy}
\label{sec:heuristic derivation}

We proceed to derive heuristically a candidate optimal strategy $\pi$ for the investment problem \eqref{eqn:investment problem}. By virtue of Proposition~\ref{prop:convex duality}, we assume that a triplet $(\pi, Q, z)$ consisting of an admissible strategy $\pi$, an ELMM $Q$ for $S$ belonging to the class considered in Theorem~\ref{thm:ELMM}, and a number $z > 0$ satisfies the first optimality condition\footnote{Note that for the \emph{derivation} of the optimal strategy we do not need to consider the second optimality condition (OC2) in Proposition~\ref{prop:convex duality}. (OC2) is only needed for the \emph{verification}.}
\begin{align*}
\text{(OC1)} \quad U'(X^{\pi}_T)
&= z \frac{\diff Q}{\diff P}.
\end{align*}
We proceed in three steps; for ease of reading, we often drop arguments (in particular time) and do not carry out the tedious but otherwise straightforward calculations.

\emph{Step 1.} As $Q$ belongs to the class of ELMMs considered in Theorem~\ref{thm:ELMM}, there exists a nice function $y \in C^1[0,T)$ such that the density process $Z$ of $Q$ with respect to $P$ is given by
\begin{align}
\label{eqn:derivation:Z}
Z
&= \cE\left ( -\int_0^\cdot \frac{1}{\sigma} \left(\mu- \phi'y\1_{\lbrace u \leq \gamma, u < T\rbrace}\right) \dd W_u \right ) \mart{G}{\zeta},
\end{align}
where $\zeta(t) = \exp\left(-\int_0^t \kappa^G y \dd u \right)$. By (OC1), $X^{\pi}_T = (U')^{-1} (z Z_T)$, and so by  \eqref{eqn:derivation:Z}, after some algebra, 
\begin{align}
\label{eqn:derivation:Xpi:OC1}
X^{\pi}_T
&= x \cE_T \left( \int_0^\cdot \frac{1}{p \sigma} (\mu - \phi' y \1_{\lbrace u \leq \gamma, u < T \rbrace})\dd W^Q_u  \right) \times J_0  J(\gamma),
\end{align}
where $J_0 := \frac{1}{x} z^{-\frac{1}{p}} {\exp\left(\frac{1- p}{2 p^2 \sigma^2} \mu^2  T\right)}$ and the function $J:[0, T] \to (0, \infty)$ is defined by
\begin{align*}
J(v)
&=
\begin{cases}
\exp \left(\int_0^v  \frac{1 - p}{2 p^2 \sigma^2} \phi' y (\phi'y - 2\mu) + \frac{1}{p}\kappa^G y \dd u \right) (1+y(v))^{-\frac{1}{p}}  &\text{if } v < T,\\
\exp \bigg(\int_0^T  \frac{1 - p}{2 p^2 \sigma^2} \phi' y (\phi'y - 2\mu) + \frac{1}{p}\kappa^G y \dd u \bigg)  &\text{if } v = T.
\end{cases}
\end{align*}

\emph{Step 2.} By the SDE \eqref{eqn:wealth process SDE} for the wealth process $X^{\pi}$  and the dynamics \eqref{eqn:thm:ELMM:S Qdynamics} of $S$ under $Q$,
\begin{align}
\label{eqn:derivation:Xpi:MN}
X^{\pi}_T = x \cE_T\left(\int_0^{\cdot} \pi_u \sigma \dd W^Q_u \right) \times \cE_T\left(\int_0^{\cdot} \pi_u \dd \mart[u]{H}{\left(\int_0^\cdot \phi' (1 + y) \dd v\right)} \right),
\end{align}
where $H$ denotes the distribution function of $\gamma$ under $Q$. Comparing \eqref{eqn:derivation:Xpi:OC1} and \eqref{eqn:derivation:Xpi:MN}, we make the educated guess that the first and second factors as well as the integrands of the ``$\dd W^Q$-terms'' coincide. In particular, a comparison of the latter gives
\begin{align*}
\pi_t
&= \frac{1}{p \sigma^2}\left (\mu - \phi'(t)y(t)\1_{\lbrace t\leq\gamma, t< T\rbrace}\right ), \quad t \in [0, T],
\end{align*}
and it remains to determine the function $y$. As $\pi$ follows a deterministic function up to time $\gamma$, (a formal application of) Proposition~\ref{prop:stochastic exponential} gives
\begin{align*}
\cE_T\left(\int_0^{\cdot} \pi_u \dd \mart[u]{H}{\left(\int_0^\cdot \phi' (1 + y) \dd v \right)} \right)
&= \mart[T]{H}{ \xi},
\end{align*}
where $\xi(t) = \exp\big(\int_0^t \frac{1}{p \sigma^2}(\mu - \phi' y)\phi' (1 + y) \dd u \big)$, $t \in [0, T)$. Next, 
by \eqref{eqn:prop:ELMM:relations:H} and some algebra, we obtain
\begin{align}
\label{eqn:derivation:stoch exp jump}
X^{\pi}_T
&= x \cE_T\left(\int_0^{\cdot} \pi_u \sigma \dd W^Q_u \right) \times K(\gamma),
\end{align}
where the function $K:[0, T] \to (0, \infty)$ is defined by
\begin{align*}
K(v)
&=
\begin{cases}
\exp\left(\int_0^v \frac{1}{p \sigma^2}(\mu - \phi' y)\phi' (1 + y) \dd u\right) a(v,y(v),p) &\text{if } v < T,\\
\exp\Big(\int_0^T \frac{1}{p \sigma^2}(\mu - \phi' y)\phi' (1 + y) \dd u\Big) &\text{if } v = T,
\end{cases}
\end{align*}
and the function  $a: [0, T) \times [-1,\infty) \times (0, \infty) \to \RR$ is given by
\begin{align}
\label{eqn:a}
a(t,y,p)
&= 1 - \frac{1}{p \sigma^2} \frac{\phi'(t)}{\kappa^G(t)} (\mu-\phi'(t)y)
= 1- \delta(t) \frac{1}{p \sigma^2} (\mu-\phi'(t)y).
\end{align}

\emph{Step 3.} Equating the second factors on the right-hand sides of \eqref{eqn:derivation:Xpi:OC1} and \eqref{eqn:derivation:stoch exp jump} gives
\begin{align}
\label{eqn:derivation:K/J}
\frac{K(v)}{J(v)}
&= J_0, \quad v \in [0, T].
\end{align}
Using \eqref{eqn:derivation:K/J} for $v < T$ and $v = T$ and rearranging the terms yields
\begin{align}
\label{eqn:derivation:K/J:rewritten}
\frac{K(v)/K(T)}{J(v)/J(T)}
&= 1, \quad v \in [0, T).
\end{align}
Now, define the functions $b, m, n: [0, T) \times [-1,\infty) \times (0, \infty) \to \RR$ by
\begin{align}
\label{eqn:b}
b(t, y, p)
&= \Big(1 + \frac{1}{p} y\Big) a(t, y, p), \\
\label{eqn:m}
m(t,y,p)
&=(1+ y)^{\frac{1}{p}} a(t,y,p), \\
\label{eqn:n}
n(t,y,p)
&= -  \frac{1 - p}{2 p^2 \sigma^2} \left(\phi'(t) y\right)^2 + \kappa^G(t)\left(b(t, y,  p) -1\right).
\end{align}
Then after some algebra,\footnote{Note that $m$ arises from the quotient of the second factors of $K$ and $J$ and that $n$ stems from the difference of the integrands inside the exponentials of $K$ and $J$.}
\begin{align*}
\frac{K(v)/K(T)}{J(v)/J(T)}
&= m(v,y(v),p) \exp\Big(\int_v^T n(u,y(u),p) \dd u \Big), \quad v \in [0, T).
\end{align*}
Finally, plugging this into \eqref{eqn:derivation:K/J:rewritten} and rearranging the terms shows that $y$ satisfies the integral equation
\begin{align*}
m(v,y(v),p)
&= \exp\Big( - \int_v^T n(u,y(u),p) \dd u\Big),\quad v\in[0,T).
\end{align*}

\subsection{Existence and uniqueness of the optimal strategy}
\label{sec:existence and uniqueness}

We are now in a position to state our second main result. It shows that the candidate optimal strategy derived heuristically in Section~\ref{sec:heuristic derivation} exists and is indeed optimal for the utility maximization problem \eqref{eqn:investment problem}.

\begin{theorem}
\label{thm:main result}
Fix $p \in (0, \infty)$. There exists a unique function $\hat y \in C^1[0,T)$ with $\hat y > -1$ satisfying the integral equation
\begin{align}
\label{eqn:thm:main result:integral equation}
m(t,y(t),p)
&= \exp\Big( - \int_t^T n(u,y(u),p) \dd u\Big),\quad t\in[0,T).
\end{align}
The strategy $\hat\pi = (\hat\pi_t)_{t \in [0,T]}$ defined in terms of $\hat y$ by
\begin{align}
\label{eqn:thm:main result:trading strategy}
\hat\pi_t
&= \frac{1}{p \sigma^2}\left (\mu - \phi'(t) \hat y(t) \1_{\lbrace t \leq \gamma, t< T \rbrace}\right )
\end{align}
is admissible and maximizes the expected utility $\EX{U(X^\pi_T)}$ over all admissible strategies $\pi$. Moreover, $\hat y$ satisfies \eqref{eqn:thm:ELMM} and \eqref{eqn:thm:strict local martingale}.
\end{theorem}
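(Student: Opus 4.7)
The statement bundles four subclaims: (i) existence and uniqueness of $\hat y$ solving \eqref{eqn:thm:main result:integral equation}, (ii) $\hat y$ satisfies the hypotheses \eqref{eqn:thm:ELMM} and \eqref{eqn:thm:strict local martingale} of Theorem~\ref{thm:ELMM}, (iii) admissibility of $\hat\pi$, and (iv) optimality. I would treat (i) first by converting the integral equation into a first-order ODE. Differentiating \eqref{eqn:thm:main result:integral equation} in $t$ and using that the right-hand side equals $m(t,y,p)$ gives
\begin{align*}
\partial_y m(t,y,p)\, y'(t) &= n(t,y,p)\, m(t,y,p) - \partial_t m(t,y,p),
\end{align*}
evaluated at $y = y(t)$, while letting $t \uparrow T$ in \eqref{eqn:thm:main result:integral equation} supplies the nonstandard algebraic terminal condition $(1+y_T)^{1/p}\, a(T-,y_T,p) = 1$, which pins down a value $y_T > -1$. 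The plan is to solve the ODE backward from $y_T$ via Picard--Lindel\"of on a left-neighbourhood of $T$ and then extend to $[0,T)$ via a priori estimates that keep $\hat y$ in a compact subset of $(-1,\infty)$ and $a(t,\hat y(t),p)$ uniformly bounded away from $0$; uniqueness follows from these estimates together with local Lipschitz continuity of the right-hand side.

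For (ii) and (iii), given the bounds $\inf_t \hat y > -1$ and $\sup_t \hat y < \infty$ from (i) together with $0 \leq \phi' \leq \kappa^G$, the integrability conditions \eqref{eqn:thm:ELMM} follow at once (the first from square integrability of $\phi'\hat y$, the second from integrability of $\kappa^G(1+\hat y)$ on $\{\Delta G(T) > 0\}$), and the sandwich \eqref{eqn:thm:strict local martingale} follows from boundedness of $\hat y$ plus uniform positivity of $a(t,\hat y,p)$. Theorem~\ref{thm:ELMM} then yields an ELMM $\hat Q \approx P$ whose density process is \eqref{eqn:derivation:Z}. Admissibility of $\hat\pi$ reduces to $X^{\hat\pi}$ remaining positive at the jump, i.e., $1 + \hat\pi_\gamma\, \Delta R_\gamma = a(\gamma, \hat y(\gamma), p) > 0$ on $\{\gamma < T\}$, which is exactly the a priori bound from (i).

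For (iv), I would apply Proposition~\ref{prop:convex duality} with this $\hat Q$ and $\hat z > 0$ chosen so that $J_0 = K(T)/J(T)$ in the notation of Section~\ref{sec:heuristic derivation}. By design, \eqref{eqn:thm:main result:integral equation} is exactly the condition that the ratio $K(v)/J(v)$ is constant in $v$; reversing the computation of Section~\ref{sec:heuristic derivation} therefore delivers (OC1): $U'(X^{\hat\pi}_T) = \hat z\, \frac{\diff \hat Q}{\diff P}$. For (OC2), I would decompose $X^{\hat\pi}$ as in \eqref{eqn:derivation:Xpi:MN} into a continuous Dol\'eans exponential of $\sigma\int \hat\pi\,\dd W^{\hat Q}$ times a single-jump factor which, by Proposition~\ref{prop:stochastic exponential}, has the form $\mart{H}{\eta}$ for an explicit $\eta$. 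Corollary~\ref{cor:ELMM:structure} (whose $L^2$-hypothesis is exactly the first half of \eqref{eqn:thm:ELMM}) reduces the $\hat Q$-martingale property of the product to that of $\mart{H}{\eta}$, which in turn follows from Proposition~\ref{prop:local martingale property}~(b); its asymptotic requirement $\eta(t)(1-H(t)) \to 0$ as $t \uparrow T$ is exactly the limit of \eqref{eqn:thm:main result:integral equation}.

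The main obstacle is step (i). The integral equation has a nonstandard terminal structure---an algebraic constraint at $t=T$ rather than a prescribed interior value---and the nonlinearity $n(\cdot, y, p)$ can become singular as $t \uparrow T$ through $\kappa^G$, especially when $\Delta G(T) = 0$ and $\kappa^G$ is nonintegrable on $(0,T)$. Deriving a priori estimates strong enough to keep $\hat y$ bounded away from $-1$ and $a(t,\hat y,p)$ uniformly positive on all of $[0,T)$ is the analytic core of the result; every later step rests on these bounds, which is why the detailed analysis is deferred to Appendix~\ref{sec:analytic results}.
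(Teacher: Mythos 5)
Your overall architecture is correct: you correctly identify the four subclaims, route (iv) through Proposition~\ref{prop:convex duality} with (OC1) verified by reversing the derivation of Section~\ref{sec:heuristic derivation}, and (OC2) via the decomposition of $X^{\hat\pi}$, Corollary~\ref{cor:ELMM:structure}, and Proposition~\ref{prop:local martingale property}~(b)~(ii). This matches the paper's Lemmas~\ref{lem:wealth process}--\ref{lem:OC2}. The reading of admissibility as the requirement $a(\gamma,\hat y(\gamma),p)>0$ is also the right one.

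However, your treatment of step (i) has a genuine gap. You propose to extract a pointwise terminal value $y_T$ from the limit condition $(1+y_T)^{1/p}\,a(T-,y_T,p)=1$ and then run Picard--Lindel\"of backward from $(T,y_T)$. This works only if $a(T-,\cdot,p)$ is a well-defined function, i.e.~if the coefficients $\phi'(t)$ and $\kappa^G(t)$ (through $\delta(t)=\phi'(t)/\kappa^G(t)$) have limits as $t\uparrow\uparrow T$. In the case of interest $\Delta G(T)=0$---the only case where strict local martingale behavior can arise---$\kappa^G$ is nonintegrable and typically explodes at $T$, $\phi'$ may explode as well, and $\delta$ need not converge (the paper only guarantees $\limsup_{t\uparrow\uparrow T}\delta(t)=1$ in the strict local martingale regime). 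So the terminal condition is genuinely a \emph{limit} condition $\lim_{t\uparrow\uparrow T} m(t,y(t),p)=1$ rather than a boundary value, and there is no fixed point $(T,y_T)$ from which to integrate backward; the same issue undermines your uniqueness argument, since local Lipschitz continuity only yields uniqueness for a prescribed value at some $t_0<T$, not for a prescribed asymptotic behavior. The paper sidesteps this by constructing explicit backward upper and lower solutions $y^*,y_*$ via the implicit-function Lemma~\ref{lem:implicit function} (equations~\eqref{eqn:thm:integral equation:upper solutions}--\eqref{eqn:thm:integral equation:lower solutions}), both of which satisfy the limit condition, and then invoking the sub/supersolution theorem (Lemma~\ref{lem:general existence result}, from Walter) to trap a solution between them; uniqueness is then argued by contradiction using the strict monotonicity of $m$ and $n$ in $y$ and the fact that two distinct solutions to the ODE must be strictly ordered. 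You would need some comparable device to replace your Picard--Lindel\"of-from-$y_T$ step, and a separate global uniqueness argument in place of the Lipschitz one.
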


\begin{remark}
\label{rem:main result:statement}
When we speak about a solution $\hat y$ to the integral equation \eqref{eqn:thm:main result:integral equation}, we tacitly impose that $\int_0^T \vert n(u,\hat y(u),p) \vert \dd u < \infty$. Then \eqref{eqn:thm:main result:integral equation}, the definition of $m$ in \eqref{eqn:m}, and the requirement $\hat y > -1$ imply that $a(t, \hat y(t), p) > 0$, $t \in [0,T)$. Economically, the latter property means that the investor's wealth is positive after the bubble has burst. Indeed, on $\lbrace \gamma = t \rbrace$, as the stock loses a fraction $\delta(t)$ of its value at time $t$, the wealth at time $t$ is given by
\begin{align*}
(1- \hat \pi_t)X^{\hat \pi}_{t-} + \hat \pi_t X^{\hat \pi}_{t-}(1-\delta(t))
&= X^{\hat \pi}_{t-} \left ( 1 - \delta(t) \hat \pi_t \right)
= X^{\hat\pi}_{t-}a(t, \hat y(t), p).
\end{align*}
\end{remark}

\begin{proof}[Proof of Theorem~\ref{thm:main result}]
The idea is to construct a triplet $(\hat\pi, \hat Q, \hat z)$ which satisfies the assumptions of Proposition~\ref{prop:convex duality} and thereby yields an optimizer for the investment problem \eqref{eqn:investment problem}. We proceed in three steps: first, we construct a (unique) solution to the integral equation \eqref{eqn:thm:main result:integral equation}; second, we construct a triplet $(\hat\pi, \hat Q, \hat z)$; third, we verify that this triplet satisfies the conditions (OC1) and (OC2) of Proposition~\ref{prop:convex duality}.

\emph{Step 1.} Theorem~\ref{thm:integral equation} shows in full detail that \eqref{eqn:thm:main result:integral equation} has a unique solution $\hat y > -1$ satisfying \eqref{eqn:thm:ELMM} and \eqref{eqn:thm:strict local martingale}. Here, we only outline the main difficulties and ideas. By taking logarithms on both sides, differentiating with respect to $t$ and rearranging the terms, the integral equation \eqref{eqn:thm:main result:integral equation} is easily transformed into an ODE of the form
\begin{align}
\label{eqn:thm:main result:pf:ODE}
y'(t)
&= f(t,y(t)), \quad t \in [0,T).
\end{align}
It is important to note that since \eqref{eqn:thm:main result:integral equation} need not be defined for $t=T$, also $f$ need not be defined for $t = T$. However, formally letting $t \uparrow\uparrow T$ in \eqref{eqn:thm:main result:integral equation}, we find the ``terminal condition''
\begin{align}
\label{eqn:thm:main result:pf:terminal condition}
\lim_{t \uparrow\uparrow T} m(t,y(t),p)
&= 1.
\end{align}
The fact that this ``terminal condition'' both is implicit and can only be expressed as a limit renders the ODE nonstandard. Proving existence of a solution $\hat y$ to the ODE \eqref{eqn:thm:main result:pf:ODE} can, however, be reduced to finding a pair $(y^*,y_*)$ of so-called \emph{backward upper} and \emph{backward lower solutions} to \eqref{eqn:thm:main result:pf:ODE} (cf.~Lemma~\ref{lem:general existence result}). The construction of suitable $y^*$ and $y_*$ so that the solution $\hat y$ also satisfies \eqref{eqn:thm:main result:pf:terminal condition} is the main technical difficulty of this first step of the proof.

\emph{Step 2.} Now, we construct a triplet $(\hat\pi, \hat Q, \hat z)$ as follows. First, by Step 1, $\hat y$ satisfies the assumptions of Theorem~\ref{thm:ELMM} (note that \eqref{eqn:thm:strict local martingale} implies that ${\inf_{t \in [0,T)} \hat y(t) > -1}$), which yields an explicit ELMM $\hat Q$ for $S$. Second, we have to check that $\hat\pi$ defined in \eqref{eqn:thm:main result:trading strategy} is integrable with respect to the returns process $R$ and that it is admissible. The first assertion is clear from the fact that $\hat y$ satisfies \eqref{eqn:thm:ELMM}. For the second assertion, Lemma~\ref{lem:wealth process} identifies the wealth process $X^{\hat\pi}$ in terms of $\hat y$ and shows that it remains positive; the proof is mainly computational. Third, define $\hat z > 0$ via
\begin{align}
\label{eqn:zhat} 
\hat z^{-\frac{1}{p}}
&= x m(0,\hat y(0),p) \exp\left(- (1-p)\frac{\mu^2}{2 p^2 \sigma^2} T \right);
\end{align}
note that $m(0,\hat y(0),p) > 0$ as $\hat y$ solves \eqref{eqn:thm:main result:integral equation}.

\emph{Step 3.} The verifications of (OC1) and (OC2) are carried out in Lemmas~\ref{lem:OC1} and \ref{lem:OC2}, respectively. The major difficulty of this step of the proof is to show that the candidate wealth process $X^{\hat\pi}$ is a $\hat Q$-martingale (i.e., (OC2)). The proof of (OC1) is mainly computational.
\end{proof}

\subsection{Myopic and hedging demands of the optimal strategy}
\label{sec:myopic and hedging}

A frequent goal in the context of optimal investment problems is to understand the qualitative behavior of the optimal strategy. To this end, optimal strategies are often decomposed into the sum of a myopic demand and a hedging demand (see, e.g., \cite[Section~III]{AdlerDetemple1988}, \cite[Equation~(19)]{KimOmberg1996}, \cite[Equation~(14)]{ChackoViceira2005}, \cite[Corollary~3]{Liu2007}). In discrete time, the myopic demand is the optimal strategy of an investor who treats each period as if it were the last, irrespective of the conditional distribution of any future returns (cf.~Mossin~\cite{Mossin1968}). In a continuous-time setting, the myopic demand at time $t$ can be defined as the limit (if it exists) of the optimal strategy when the investment horizon $T-t$ goes to zero. One can show that in our setting, this corresponds to letting $T \downarrow t$ in the integral equation \eqref{eqn:thm:main result:integral equation} (as one would expect formally). So the solution to the limiting equation \eqref{eqn:thm:decomposition:y^m} below can be used to define the \emph{myopic demand} via \eqref{eqn:thm:decomposition:pi^m}. Then the \emph{hedging demand} is defined as the difference between the optimal strategy and the myopic demand (cf.~\eqref{eqn:thm:decomposition:pi^h}). The following theorem states interesting consequences of this decomposition.

\begin{theorem}
\label{thm:decomposition}
Fix $p \in (0, \infty)$. There exists a unique function $y^\rmm \in C^1[0,T)$ with $y^\rmm \geq 0$ satisfying the equation
\begin{align}
\label{eqn:thm:decomposition:y^m}
m(t,y^\rmm(t),p)
&= 1.
\end{align}
Let $\hat y$ be as in Theorem~\ref{thm:main result}. The processes $\pi^\rmm = (\pi^\rmm_t)_{t \in [0,T]}$ and $\pi^\rmh = (\pi^\rmh_t)_{t \in [0,T]}$ defined in terms of $y^\rmm$ and $\hat y$ by
\begin{align}
\label{eqn:thm:decomposition:pi^m}
\pi^\rmm_t
&= \frac{1}{p \sigma^2}\left (\mu - \phi'(t)  y^\rmm(t) \1_{\lbrace t \leq \gamma, t< T \rbrace}\right ),\\
\label{eqn:thm:decomposition:pi^h}
\pi^\rmh_t
= \hat \pi_t - \pi^\rmm_t
&= \frac{1}{p \sigma^ 2} \phi'(t)(y^\rmm(t) - \hat y(t)) \1_{\lbrace t \leq \gamma, t< T \rbrace}
\end{align}
are called the \emph{myopic demand} and the \emph{hedging demand} of the optimal strategy $\hat \pi$. 
\begin{enumerate}
\item The myopic demand satisfies
\begin{align}
\label{eqn:thm:decomposition:inequality pi^m}
0
&< \pi^\rmm
\leq \frac{\mu}{p \sigma^2},
\end{align}
where on $\lbrace t \leq \gamma, t< T \rbrace$, the right inequality is an equality if and only if $\phi'(t) = 0$.
\item The hedging demand satisfies 
\begin{align}
\label{eqn:thm:decomposition:inequality pi^h}
\pi^\rmh
\leq 0 \text{ for } p \in (0, 1),
\quad
\pi^\rmh
= 0 \text{ for } p = 1,
\quad \text{and} \quad
\pi^\rmh \geq 0 \text{ for } p > 1.
\end{align}
Moreover, if $\limsup_{t \uparrow \uparrow T} G'(t) < \infty$, then $\lim_{t \uparrow \uparrow T} \pi^h_t = 0$ $\as{P}$ 
\end{enumerate}
\end{theorem}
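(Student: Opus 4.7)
The plan is to first establish existence, uniqueness, and $C^1$-regularity of $y^\rmm$, then prove (a) and the easy $p = 1$ case of (b) by direct computation, and finally derive the sign of the hedging demand for $p \neq 1$ via a backward comparison between $\hat y$ and $y^\rmm$. For existence and uniqueness of $y^\rmm$, fix $t \in [0, T)$ and study $y \mapsto m(t, y, p)$ on $\{y > -1\}$. At $y = 0$, $m(t, 0, p) = 1 - \delta(t)\mu/(p\sigma^2) \leq 1$ with equality iff $\phi'(t) = 0$, while $m(t, y, p) \to \infty$ as $y \to \infty$ when $\phi'(t) > 0$. A direct calculation gives
$$\partial_y m(t, y, p) = \frac{1}{p}(1 + y)^{1/p - 1} a(t, y, p) + (1 + y)^{1/p} \frac{\delta(t)\phi'(t)}{p\sigma^2},$$
which is strictly positive on $\{a > 0\}$ when $\phi'(t) > 0$. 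So the intermediate value theorem together with strict monotonicity yields a unique $y^\rmm(t) \geq 0$ with $m(t, y^\rmm(t), p) = 1$ (the case $\phi'(t) = 0$ gives $y^\rmm(t) = 0$ trivially), and the implicit function theorem promotes this to $y^\rmm \in C^1[0, T)$ via $\phi, G \in C^2[0, T)$. For (a), the upper bound is immediate from $\phi'(t) y^\rmm(t) \geq 0$ with equality iff $\phi'(t) = 0$; positivity follows since $m(t, y^\rmm, p) = 1$ forces $a(t, y^\rmm, p) = (1 + y^\rmm)^{-1/p} > 0$, while $\mu \leq \phi'(t) y^\rmm(t)$ would force $a \geq 1$, hence $y^\rmm = 0$, contradicting $\mu > 0$. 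For $p = 1$, $b(u, y, 1) = (1 + y) a = m(u, y, 1)$, so $b(u, y^\rmm, 1) \equiv 1$ and $n(u, y^\rmm, 1) \equiv 0$ (the first term of $n$ vanishes since $1 - p = 0$); thus $y^\rmm$ solves the integral equation \eqref{eqn:thm:main result:integral equation}, and uniqueness of $\hat y$ gives $\hat y = y^\rmm$, whence $\pi^\rmh \equiv 0$.

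For $p \neq 1$, I view $y^\rmm$ as a backward sub- or super-solution of the ODE for $\hat y$. Implicit differentiation of $m(t, y^\rmm(t), p) = 1$ gives $(y^\rmm)'(t) = -\partial_t m / \partial_y m$, while differentiating \eqref{eqn:thm:main result:integral equation} yields $\hat y'(t) = f(t, \hat y(t))$ with
$$f(t, y) = \frac{n(t, y, p)\, m(t, y, p) - \partial_t m(t, y, p)}{\partial_y m(t, y, p)}.$$
Combining these, one obtains the key identity
$$f(t, y^\rmm(t)) - (y^\rmm)'(t) = \frac{n(t, y^\rmm(t), p)}{\partial_y m(t, y^\rmm(t), p)},$$
so the sign of $f(t, y^\rmm) - (y^\rmm)'$ equals that of $n(t, y^\rmm, p)$. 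Using $a(t, y^\rmm, p) = (1 + y^\rmm)^{-1/p}$, one has $b(t, y^\rmm, p) - 1 = (1 + y^\rmm/p)(1 + y^\rmm)^{-1/p} - 1$, whose sign equals that of $p - 1$ by Bernoulli's inequality applied to the exponent $1/p$ for $y^\rmm \geq 0$; combined with the first term $\frac{p - 1}{2 p^2 \sigma^2}(\phi' y^\rmm)^2$ of $n$, both summands share the sign of $p - 1$. The common terminal value $\hat y(T-) = y^\rmm(T-)$ follows from $\lim_{t \uparrow T} m(t, \hat y(t), p) = 1 = m(t, y^\rmm(t), p)$ and strict monotonicity of $m$ in $y$. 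Reversing time via $s = T - t$ and invoking the standard comparison principle for ODEs (using the Lipschitz bounds on $f$ available from the analysis in the proof of Theorem~\ref{thm:main result}) then yields $\hat y \geq y^\rmm$ for $p < 1$ and $\hat y \leq y^\rmm$ for $p > 1$, giving the claimed sign of $\pi^\rmh$.

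For the terminal limit $\lim_{t \uparrow T} \pi^\rmh_t = 0$ a.s., I split on $G(T-)$. If $G(T-) = 1$ then $P[\gamma = T] = \Delta G(T) = 0$, so $P$-a.s.~$\gamma < T$, and $\pi^\rmh_t = 0$ for every $t \in (\gamma, T)$ by the indicator $\1_{\{t \leq \gamma,\, t < T\}}$ in \eqref{eqn:thm:decomposition:pi^h}. If $G(T-) < 1$ then $1 - G(t) \geq 1 - G(T-) > 0$ for $t \in [0, T)$, and combined with $\limsup_{t \uparrow\uparrow T} G'(t) < \infty$ this forces $\kappa^G = G'/(1 - G)$ bounded near $T$; hence $\phi' \leq \kappa^G$ is also bounded, and the limit follows from $\hat y - y^\rmm \to 0$, which in turn follows from the terminal limit of $m$ and its strict monotonicity. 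The main obstacle throughout is the nonstandard ``terminal condition'' of the ODE for $\hat y$: it is only available implicitly as a limit of $m(t, \hat y(t), p)$, so the backward comparison must be formulated as a time-reversed initial value problem, paralleling Step~1 of the proof of Theorem~\ref{thm:main result}.
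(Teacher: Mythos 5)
Your treatment of the existence and nonnegativity of $y^\rmm$, of part (a), and of the $p=1$ case is correct and follows essentially the same route as the paper (the paper outsources the existence to Lemma~\ref{lem:implicit function}; you carry out the IVT/monotonicity argument directly, which is fine). Your key identity
\[
f(t,y^\rmm(t)) - (y^\rmm)'(t) = \frac{n(t,y^\rmm(t),p)}{\partial_y m(t,y^\rmm(t),p)},
\]
together with the Bernoulli-inequality analysis of $b(t,y^\rmm,p)-1$, correctly shows that $y^\rmm$ is a backward upper solution for $p>1$ and a backward lower solution for $p<1$; this is precisely the computation carried out in Step~3 of the proof of Theorem~\ref{thm:integral equation}.

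The gap is in how you convert this sub/supersolution property into the inequality $\hat y \lessgtr y^\rmm$. You invoke the ``standard comparison principle for ODEs'' after time reversal, anchored at a ``common terminal value $\hat y(T-)=y^\rmm(T-)$''. This does not work as stated: (i) neither left limit need exist (when $\Delta G(T)=0$ the hazard rate explodes at $T$ and $f(t,\cdot)$ can be unbounded near $T$, so nothing forces $\hat y$ or $y^\rmm$ to converge); (ii) the paper only establishes that $f$ is \emph{locally} Lipschitz in $y$ on $U$, which does not give the uniform Lipschitz bound up to $t=T$ that the comparison principle requires; and (iii) a comparison principle needs a genuine ordering at a concrete time, not at a singular endpoint expressed only as a limit. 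The paper deliberately avoids this: it obtains $y_*\leq\hat y\leq y^*$ from Lemma~\ref{lem:general existence result} (an \emph{existence} result between sub- and supersolutions, not a comparison theorem) combined with uniqueness of the solution to the \emph{integral} equation, and records the sandwiching as part of the statement of Theorem~\ref{thm:integral equation}. Since $y^\rmm = y_*$ for $p\leq 1$ and $y^\rmm=y^*$ for $p\geq 1$, part (b)'s sign claim is then immediate. You should simply cite Theorem~\ref{thm:integral equation} rather than attempt a fresh comparison argument.

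A second, smaller gap is in the terminal limit for $\Delta G(T)>0$. You assert $\hat y - y^\rmm \to 0$ ``from the terminal limit of $m$ and its strict monotonicity'', but strict monotonicity in $y$ alone does not give a uniform modulus: you need $\partial_y m(t,\cdot)$ bounded away from $0$ along the relevant segment as $t\uparrow T$. This is exactly what Corollary~\ref{cor:implicit function} establishes (in the form $\phi'(t)(y^*(t)-y_*(t))\to 0$, which is the quantity actually needed for $\pi^\rmh_t\to 0$, since the factor $\phi'$ must also be tracked). Your reduction to boundedness of $\phi'$ via $\phi'\leq\kappa^G$ is right, but the quantitative step linking the convergence of $m$-values to the convergence of $y$-values is missing and is nontrivial.

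In short: the sign computation via Bernoulli is correct and matches the paper's argument at the level of sub/supersolutions, but the passage from that to the pointwise inequality and to the terminal limit should be discharged by citing Theorem~\ref{thm:integral equation} and Corollary~\ref{cor:implicit function}, as the paper does, rather than via an (invalid) backward comparison from $T$.
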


\begin{proof}
The existence and uniqueness of $y^\rmm$ follow from Lemma~\ref{lem:implicit function}. To establish $y^\rmm \geq 0$, fix $t \in [0,T)$. By the definitions of $m$ and $a$ in \eqref{eqn:m} and \eqref{eqn:a},
\begin{align}
\label{eqn:cor:implicit function:constant 1:pf:10}
\begin{split}
1
&= m(t,y^\rmm(t), p)
= (1+y^\rmm(t))^{\frac{1}{p}} a(t,y^\rmm(t),p)\\
& = (1+y^\rmm(t))^{\frac{1}{p}} \left(1- \frac{1}{p \sigma^2}\frac{\phi'(t)}{\kappa^G(t)}(\mu - \phi'(t) y^\rmm(t)) \right).
\end{split}
\end{align}
If $y^\rmm(t) < 0$, then the right-hand side of \eqref{eqn:cor:implicit function:constant 1:pf:10} is strictly smaller than $1$, which is absurd. So $y^\rmm$ is nonnegative.

(a): Fix $t\in[0,T]$. To establish the first inequality in \eqref{eqn:thm:decomposition:inequality pi^m}, it suffices to consider the case $\phi'(t) > 0$ and $y^\rmm(t) > 0$ and $\lbrace t \leq \gamma, t< T \rbrace$; for otherwise the inequality is trivially satisfied as $\mu > 0$. In this case, by the definitions of $a$ and $m$ in \eqref{eqn:a} and \eqref{eqn:m}, the fact that $m(t, y^\rmm(t), p) = 1$, and $p > 0$, we obtain
\begin{align*}
\frac{\phi'(t)}{\kappa^G(t)} \pi^\rmm_t = \frac{1}{p \sigma^2} \frac{\phi'(t)}{\kappa^G(t)} \left(\mu - \phi'(t) y^\rmm(t)\right)
&= 1 - a(t, y^\rmm(t),p) = 1- (1+ y^\rmm(t))^{-\frac{1}{p}} > 0,
\end{align*}
and the inequality follows. The second inequality in \eqref{eqn:thm:decomposition:inequality pi^m} follows from the nonnegativity of $y^\rmm$. Finally, on $\lbrace t \leq \gamma, t< T \rbrace$, we have $\pi^\rmm_t = \frac{1}{p \sigma^2}\left (\mu - \phi'(t)y^\rmm(t) \right)$, and this is equal to $\frac{\mu}{p \sigma^2}$ if and only if $\phi'(t)y^\rmm(t) = 0$. By the nonnegativity of $y^\rmm$ and \eqref{eqn:cor:implicit function:constant 1:pf:10}, this is equivalent to $\phi'(t) = 0$.

(b): The inequalities \eqref{eqn:thm:decomposition:inequality pi^h} follow from Theorem~\ref{thm:integral equation} (noting that $y^\rmm = y_*$ for $p \leq 1$ and  $y^\rmm = y^*$ for $p \geq 1$). The second assertion is trivial if ${\Delta G(T) = 0}$. If $\Delta G(T) > 0$, Corollary~\ref{cor:implicit function} shows that $\lim_{t \uparrow \uparrow T} \phi'(t) (y^*(t) - y_*(t)) = 0$, and so a fortiori $\lim_{t \uparrow \uparrow T} \phi'(t) (y^\rmm(t) - \hat y (t)) = 0$ since $y_* \leq \hat y \leq y^*$ on $[0,T)$ by Theorem~\ref{thm:integral equation}. This completes the proof.
\end{proof}

A couple of comments are in order.

\begin{remark}
Theorem~\ref{thm:main result} shows that the optimal strategy $\hat\pi$ is generally given in terms of the solution to an integral equation (or an ODE). By contrast, to find the myopic demand of the optimal strategy, it suffices to solve an equation for each $t$.
\end{remark}

\begin{remark}
Our interpretation of the myopic demand in continuous time suggests that the hedging demand should approach $0$ at the time horizon $T$, and this holds true under a very mild technical assumption on $G$.
\end{remark}

The economic interpretation of the behavior of the hedging demand is as follows. After the bubble has burst, the model behaves like a Black--Scholes model with instantaneous expected return $\mu$ and instantaneous continuous variance $\sigma^2$. Before the crash, the instantaneous expected return is still $\mu$, but the total instantaneous variance of returns exceeds $\sigma^2$ due to the single jump component $\mart{G}{\phi}$. Hence, any risk-averse investor will favor the Black--Scholes market over our market (indeed, the certainty equivalent of trading in our market in Theorem~\ref{thm:certainty equivalent} below displays a discount with respect to the Black--Scholes certainty equivalent). The later the bubble bursts, the less time an investor has to invest in the Black--Scholes market. Consequently, it is favorable for the investor if the bubble bursts early and unfavorable if it bursts late or never.

Investors with high relative risk aversion ($p>1$) \emph{hedge against a late bursting of the bubble} with a nonnegative hedging demand $\pi^\rmh$. Indeed, in the favorable event that the bubble bursts early, they lose more money than if they had just invested myopically and profit from the (nonnegative) instantaneous pre-crash excess return $\phi'$ only for a short time. However, in the unfavorable event that the bubble bursts late or never, they profit significantly from the (nonnegative) instantaneous pre-crash excess return $\phi'$ by investing more than the myopic demand; this compensates them for the only small amount of time that remains to invest in the bubble-free market.

Investors with low relative risk aversion ($p<1$) \emph{speculate on an early bursting of the bubble} with a nonpositive hedging demand $\pi^\rmm$. Indeed, an early bursting of the bubble is favorable to them in two ways. First, as above, they can invest in the bubble-free market for a longer time period after the crash. Second, at the time of the crash, they lose less money (or even gain money in the case of a short position coming from a hedging demand that exceeds the myopic demand in absolute value) than if they had just invested myopically. However, if the bubble bursts late or never, their optimal strategy performs worse than the myopic demand, because they profit significantly less from the instantaneous pre-crash excess return $\phi'$.

In the limiting case of logarithmic utility ($p=1$), investors neither hedge against nor speculate on the timing of the crash; their optimal strategy equals the myopic demand, reflecting the well-known fact that log-investors behave myopically. Moreover, the equation $m(t,y(t),1) = 1$ reduces to a quadratic equation in $y(t)$, whose unique solution with $y > -1$ is given by $\hat y(t) = 0$ if $\phi'(t) = 0$ and
\begin{align*}
\hat y(t)
&= \frac{1}{2 \phi'(t)}\Bigg(\mu -\phi'(t) - \sigma^2\frac{\kappa(t)}{\phi'(t)} +  \sqrt{\left(\mu -\phi'(t) - \sigma^2\frac{\kappa(t)}{\phi'(t)}\right)^2 + 4 \mu \phi'(t)} \;\Bigg)
\end{align*}
if $\phi'(t) > 0$.

\subsection{Certainty equivalent}
\label{sec:certainty equivalent}

We proceed to calculate the certainty equivalent of the optimal strategy $\hat \pi$.
\begin{theorem}
\label{thm:certainty equivalent}
If $p = 1$, the certainty equivalent of trading in the market is
\begin{align}
U^{-1}\left (\EX{U(X^{\hat\pi}_T)}\right )
&= x \exp\left(\frac{\mu^2}{2 \sigma^2} T\right) \times \exp\Big(- \int_0^T \frac{\phi'(u)^2\hat y(u)^2}{2 \sigma^2}  (1 - G(u)) \dd u \Big) \notag\\
\label{eqn:thm:certainty equivalent:log utility}
&\quad\times \exp\Big(-\int_0^T \Big(\log(1 + \hat y (u)) - \frac{\hat y(u)}{1 + \hat y (u)}\Big)G'(u) \dd u\Big).
\end{align}
If $p \neq 1$, the certainty equivalent of trading in the market is
\begin{align}
\label{eqn:thm:certainty equivalent:power utility}
U^{-1}\left (\EX{U(X^{\hat\pi}_T)}\right )
&= x \exp\left (\frac{\mu^2}{2 p \sigma^2} T\right) \times m(0,\hat y(0),p)^{-\frac{p}{1-p}}.
\end{align}
\end{theorem}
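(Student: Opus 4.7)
The plan rests on the two optimality conditions from Proposition~\ref{prop:convex duality} that the triplet $(\hat\pi,\hat Q,\hat z)$ constructed in the proof of Theorem~\ref{thm:main result} satisfies: (OC1) $U'(X^{\hat\pi}_T) = \hat z\,\diff \hat Q/\diff P$ and (OC2) $E^{\hat Q}[X^{\hat\pi}_T] = x$. Together these identities will reduce the computation of the certainty equivalent to expressions involving only $\hat z$ and deterministic integrals of $\hat y$.

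The power case $p\neq 1$ should be quick. Multiplying (OC1) by $X^{\hat\pi}_T$ and taking $P$-expectations, (OC2) yields $E[(X^{\hat\pi}_T)^{1-p}] = \hat z\, x$. Since $U^{-1}(y) = ((1-p)y)^{1/(1-p)}$, the certainty equivalent equals $(\hat z x)^{1/(1-p)}$, and substituting the explicit value of $\hat z$ from \eqref{eqn:zhat} produces \eqref{eqn:thm:certainty equivalent:power utility} after simplification.

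For $p=1$, (OC1) reads $(X^{\hat\pi}_T)^{-1} = \hat z Z_T$, so $E[\log X^{\hat\pi}_T] = -\log\hat z - E[\log Z_T]$. By Theorem~\ref{thm:decomposition}~(b) the hedging demand vanishes at $p=1$, so $\hat y$ satisfies $m(t,\hat y(t),1) = 1$ pointwise; in particular, \eqref{eqn:zhat} gives $\hat z = 1/x$, so the certainty equivalent equals $x\exp(-E[\log Z_T])$. I would then decompose $\log Z_T$ using \eqref{eqn:thm:ELMM:Z} into a stochastic integral (whose expectation vanishes under the integrability bound \eqref{eqn:thm:ELMM} for $\hat y$), the compensator $-\tfrac{1}{2\sigma^2}\int_0^T(\mu - \phi'\hat y\1_{\{u\le\gamma,\,u<T\}})^2\,\diff u$, and $\log \mart[T]{G}{\zeta}$. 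The middle term is evaluated using the independence of $W$ and $\gamma$ together with $P[\gamma\ge u] = 1 - G(u)$, while the jump term should become $\int_0^T(\log(1+\hat y) - \hat y)\,G'\,\diff u$ after invoking \eqref{eqn:prop:ELMM:relations:AGzeta}, the identity $\kappa^G(1-G)=G'$, and a Fubini exchange applied to the double integral coming from $\log\zeta(\gamma) = -\int_0^\gamma \kappa^G\hat y\,\diff u$.

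The hard part will be reconciling these three contributions with the product form \eqref{eqn:thm:certainty equivalent:log utility}. The key algebraic identity is $\hat y\kappa^G\sigma^2 = (1+\hat y)\phi'(\mu-\phi'\hat y)$, obtained by rearranging $m(t,\hat y(t),1) = 1$ (equivalently, \eqref{eqn:thm:main result:integral equation} at $p=1$). Multiplying this identity by $\hat y(1-G)/((1+\hat y)\sigma^2)$ and using $\kappa^G(1-G) = G'$ shows that $\int_0^T \hat y\phi'(\mu-\phi'\hat y)(1-G)/\sigma^2\,\diff u = \int_0^T \hat y^2/(1+\hat y)\,G'\,\diff u$; this collapses the cross-term together with $\int_0^T \hat y\, G'\,\diff u$ into $\int_0^T \hat y/(1+\hat y)\,G'\,\diff u$, matching \eqref{eqn:thm:certainty equivalent:log utility}.
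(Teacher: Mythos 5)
Your argument for $p\neq 1$ is exactly the paper's: multiply (OC1) by $X^{\hat\pi}_T$, invoke (OC2), substitute the formula for $\hat z$ from \eqref{eqn:zhat}, and apply $U^{-1}$.

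For $p=1$ you take a genuinely different route. The paper works with the primal object $X^{\hat\pi}_T$ directly: it factors the wealth process via \eqref{eqn:wealth process SDE} into the continuous stochastic exponential $\cE_T(\int\hat\pi\,\diff(\mu u+\sigma W_u))$ and the jump factor $\cE_T(\int\hat\pi\,\diff\mart{G}{\phi}_u)$, computes $E[\log(\cdot)]$ of each factor separately, and collapses the jump contribution by using the formula for the stochastic exponential of $\mart{G}{\phi}$ together with $m(t,\hat y(t),1)\equiv 1$. You instead work with the dual object: (OC1) gives $\log X^{\hat\pi}_T = -\log\hat z - \log Z_T$, $\hat z=1/x$ at $p=1$, and you compute $E[\log Z_T]$ from the explicit density \eqref{eqn:thm:ELMM:Z} (stochastic integral has zero mean by the $L^2$ bound \eqref{eqn:thm:integral equation:integrability properties}; the compensator is evaluated via $P[u\le\gamma]=1-G(u)$; and $\log\mart[T]{G}{\hat\zeta}$ is handled via \eqref{eqn:prop:ELMM:relations:AGzeta} and Fubini on $\log\hat\zeta(\gamma)=-\int_0^\gamma\kappa^G\hat y\,\diff u$, which gives $-\int_0^T\hat y\,G'\,\diff u$ after using $\kappa^G(1-G)=G'$). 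The final reconciliation hinges on the pointwise identity $\hat y\kappa^G\sigma^2 = (1+\hat y)\phi'(\mu-\phi'\hat y)$, which you correctly derive by rearranging $m(t,\hat y(t),1)=1$; multiplying by $\hat y(1-G)/((1+\hat y)\sigma^2)$ and using $\kappa^G(1-G)=G'$ collapses the residual terms to $\int_0^T\frac{\hat y}{1+\hat y}G'\,\diff u$ as you claim, so the identity holds pointwise, not merely integrated. Both routes are correct. The paper's factorization buys the economic interpretation it highlights immediately after the theorem (the three factors are the Black--Scholes CE, the loss from deviating from the Merton proportion, and the loss from the jump), while your dual computation is arguably cleaner in that it runs both cases ($p=1$ and $p\neq 1$) through (OC1) and the density $Z_T$, at the cost of that factor-by-factor reading. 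One small bookkeeping point worth noting in a written-up version: if $\Delta G(T)>0$, the term $E[\log\hat\zeta(\gamma)]$ has an atom at $T$ with $\cA^G\hat\zeta(T)=\hat\zeta(T-)$, but the Fubini step still produces $\int_v^T G'\,\diff u + \Delta G(T) = 1-G(v)$, so the formula is unchanged.
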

The different factors in  \eqref{eqn:thm:certainty equivalent:log utility} have a clear economic interpretation. The first is the certainty equivalent of the Merton proportion $\frac{\mu}{\sigma^2}$ in the Black--Scholes model. It is shown in the proof below that the product of the first and the second factor is the certainty equivalent of the strategy $\hat \pi$ in the Black--Scholes model, so that the second factor alone describes the relative certainty equivalent loss due to trading the strategy $\hat \pi$ (instead of $\frac{\mu}{\sigma^2}$) in the Black--Scholes model. Finally, the third factor expresses the certainty equivalent loss due to the presence of the single jump component $\mart{G}{\phi}$.
 
In the case of general power utility, the first factor in \eqref{eqn:thm:certainty equivalent:power utility} is again the certainty equivalent of the Merton proportion $\frac{\mu}{p \sigma^2}$ in the Black--Scholes model, and the second one describes the combined relative certainty equivalent loss due to trading with the strategy $\hat \pi$ in the Black--Scholes model and due to the presence of the single jump component $\mart{G}{\phi}$.

\begin{proof}[Proof of Theorem~\ref{thm:certainty equivalent}]
First, assume that $p = 1$. By the definition of the wealth process and the fact that $(\mu t + \sigma W_t)_{t \in [0, 
T]}$ is a continuous semimartingale and $\mart{G}{\phi}$ a purely discontinuous martingale,
\begin{align}
\label{eqn:thm:certainty equivalent:pf:10}
X^\pi_T
&= x \cE_T\left(\int_0^\cdot \hat \pi_u \dd R_u\right)
= x \cE_T\left(\int_0^\cdot \hat \pi_u  \dd (\mu u + \sigma W_u) \right) \cE_T\left(\int_0^\cdot \hat \pi_u  \dd  \mart{G}{\phi_u} \right) \;\; \as{P}
\end{align}
We start by computing the expected value of the logarithm of the first factor on the right-hand side of \eqref{eqn:thm:certainty equivalent:pf:10}; this corresponds exactly to the utility an investor obtains from employing the strategy $\hat \pi$ in the standard  Black--Scholes model. As $\int_0^\cdot \sigma \hat \pi_u \dd W_u$ is a square-integrable martingale by the definition of $\hat \pi$ in \eqref{eqn:thm:main result:trading strategy} and \eqref{eqn:thm:integral equation:integrability properties}, a standard calculation gives
\begin{align*}
\EX{\log\left(\cE_T\left (\int_0^\cdot \hat\pi_u  \dd (\mu u + \sigma W_u) \right) \right)}
&=\EX{\frac{1}{2 \sigma^2}\int_0^T \left( \mu^2 - \phi'(u)^2 \hat y(u)^2 \1_{\{u \leq \gamma, u < T \}} \right) \dd u}\\
&= \frac{\mu^2}{2 \sigma^2} T  - \int_0^T \frac{\phi'(u)^2\hat y(u)^2}{2 \sigma^2} (1- G(u)) \dd u.
\end{align*}
To compute the expected value of the logarithm of the second factor, we first note that by the dynamics of $\mart{G}{\phi}$,
\begin{align*}
\int_0^T \hat \pi_u \dd \mart{G}{\phi}_u
&= \int_0^\gamma \bar \pi(u) \phi'(u) \dd u - \bar \pi(\gamma) \delta (\gamma) \1_{\{\gamma < T\}} \;\; \as{P},
\end{align*}
where $\bar\pi(u) := \frac{1}{p \sigma^2} \left ( \mu - \phi'(u)\hat y(u)\1_{\lbrace u < T \rbrace }  \right )$, $u\in[0,T]$.
So by the formula for the stochastic exponential,
\begin{align*}
\cE_T\left(\int_0^\cdot \hat \pi_u \dd \mart{G}{\phi}_ u\right)
&= \exp\left(\int_0^T \1_{\{u < \gamma\}} \bar \pi(u) \phi'(u) \dd u\right)\left(1 -  \bar \pi(\gamma) \delta (\gamma) \1_{\{\gamma < T\}}\right) \;\; \as{P}
\end{align*}
Thus, using the definitions of $\delta$, $a$, and $m$ in  \eqref{eqn:delta}, \eqref{eqn:a}, and \eqref{eqn:m}, the definition of $\bar\pi$, and the fact that $m(t,\hat y(t), 1) \equiv 1$ by \eqref{eqn:thm:decomposition:y^m} (since $\hat\pi = \pi^\rmm$ for $p=1$ by Theorem~\ref{thm:decomposition} (b)), for $v \in [0, T)$,
\begin{align*}
1- \bar \pi(v) \delta (v)
&= 1- \bar \pi(v) \frac{\phi' (v)}{\kappa^G(v)}
= a(v, \hat y(v),1)
= \frac{m(v, \hat y(v),1)}{1 + \hat y (v)}
= \frac{1}{1 + \hat y (v)}.
\end{align*}
The above together with the definition of $\kappa^G$ in \eqref{eqn:kappa} and the fact that $\log\left(1 -  \bar \pi(\gamma) \delta (\gamma) \1_{\{\gamma < T\}}\right) = \log \left(1 -  \bar \pi(\gamma) \delta (\gamma) \right) \1_{\{\gamma < T\}}$ gives
\begin{align*}
\EX{\log\left(\cE_T\left(\int_0^\cdot \hat \pi_u \dd \mart{G}{\phi}_ u\right)\right)}
&= \int_0^T \bar \pi(u) \phi'(u) (1 - G(u)) \dd u + \int_0^T \log\left(\frac{1}{1 + \hat y(u)}\right) G'(u) \dd u\\
&= \int_0^T \left(\bar \pi(u) \frac{\phi'(u)}{\kappa^G(u)} + \log\left(\frac{1}{1 + \hat y(u)}\right)\right) G'(u) \dd u\\
&= - \int_0^T \left( \log(1 + \hat y(u)) - \frac{\hat y(u)}{1 + \hat y(u)} \right) G'(u) \dd u.
\end{align*}
Putting everything together establishes \eqref{eqn:thm:certainty equivalent:log utility}.

Second, assume that $p \neq 1$. Then the optimality conditions (OC1) and (OC2) and the definition of $\hat z$ in \eqref{eqn:zhat} yield
\begin{align*}
\EX{U\left (X^{\hat\pi}_T\right )}
&= \frac{1}{1-p}\EX{\left (X^{\hat\pi}_T\right )^{1-p}}
= \frac{1}{1-p} \EX{X^{\hat\pi}_T U'\left(X^{\hat\pi}_T\right)}\\
&= \frac{1}{1-p} \EX{X^{\hat\pi}_T \hat z \frac{\diff \hat Q}{\diff P}}
= \hat z\frac{1}{1-p} \EX[\hat Q]{X^{\hat\pi}_T}
= \hat z \frac{1}{1-p} x\\
&= \frac{x^{1-p}}{1-p}\exp\left((1-p)\frac{\mu^2}{2 p \sigma^2} T \right )m(0,\hat y(0),p)^{-p}.\qedhere
\end{align*}
\end{proof}

\subsection{Numerical illustrations}
\label{sec:numerical illustrations}

\label{sec:illustrations}
In this section, we use numerical illustrations to answer the following four questions:
\begin{enumerate}
\item[(1)] How do shifts in the model parameters affect the optimal strategy and its myopic and hedging demands? 
\item[(2)] Can the optimal strategy involve short selling or investing more than the Merton proportion? 
\item[(3)] Does the optimal strategy distinguish fundamentally between whether or not the price process is a strict local martingale under the dual minimizer $\hat Q$?
\item[(4)] How big is the welfare loss of trading in our model in comparison to optimal investment in the Black--Scholes model? And how does the welfare loss depend on shifts in the model parameters?
\end{enumerate} 

Recall that after the bubble has burst, it is optimal to keep a constant fraction of wealth $\frac{\mu}{p \sigma^2}$ (the Merton proportion) in the stock. We thus focus on the optimal strategy \emph{before} the crash, and all plots of trading strategies show the optimal fraction of wealth invested in the stock as a function of time given that the bubble has not burst yet.

The time horizon is always $T = 1$. For questions (1) and (4), we use a cut-off exponential distribution for the jump time (in particular, with positive probability, the bubble does not burst on $[0,T]$) and a constant instantaneous pre-crash excess return $\phi'(t) = \alpha$ for different choices of $\alpha \in (0,1)$; cf.~the captions of Tables~\ref{table:illustrations:dependence on mu sigma alpha:p high}, \ref{table:illustrations:dependence on mu sigma alpha:p low}, and \ref{table:illustrations:rESRL}. To display effects corresponding to questions (2) and (3), we use other instantaneous pre-crash excess returns and/or the uniform distribution on $[0,T]$ for the jump time (under which the bubble almost surely bursts on $[0,T]$); cf.~Tables~\ref{table:illustrations:morethanmerton} and \ref{table:illustrations:strictlocalmartingale}.

\paragraph{(1) Comparative statics of the myopic and hedging demands.}

\begin{table}[!ht]
\begin{center}
\begin{tabular}{m{.3\textwidth}m{.3\textwidth}m{.3\textwidth}}
\toprule
\begin{center}
$\mu=0.05,0.1,0.2,0.3$
\end{center}&
\begin{center}
$\sigma=0.1,0.2,0.3,0.4$
\end{center}&
\begin{center}
$\alpha = 0.1, 0.2, 0.4, 0.8$
\end{center}\\
\midrule
\includegraphics[scale=0.98]{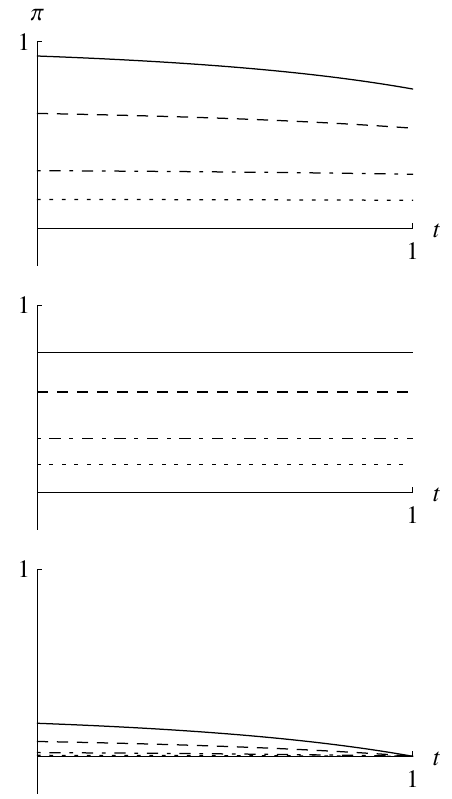}&
\includegraphics[scale=0.98]{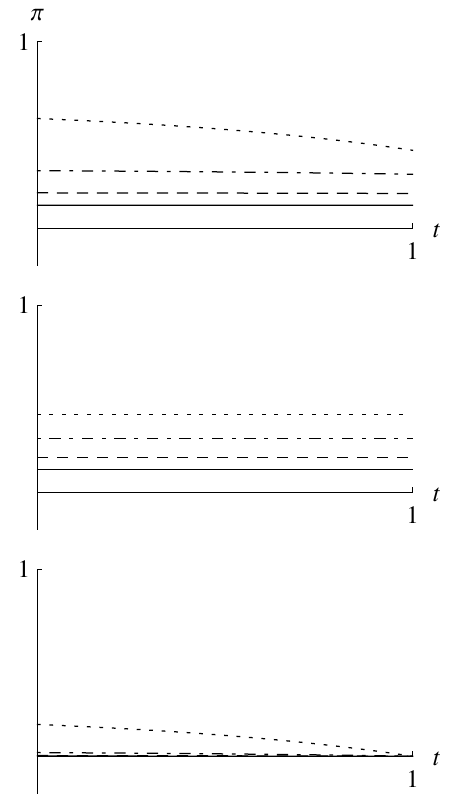}&
\includegraphics[scale=0.98]{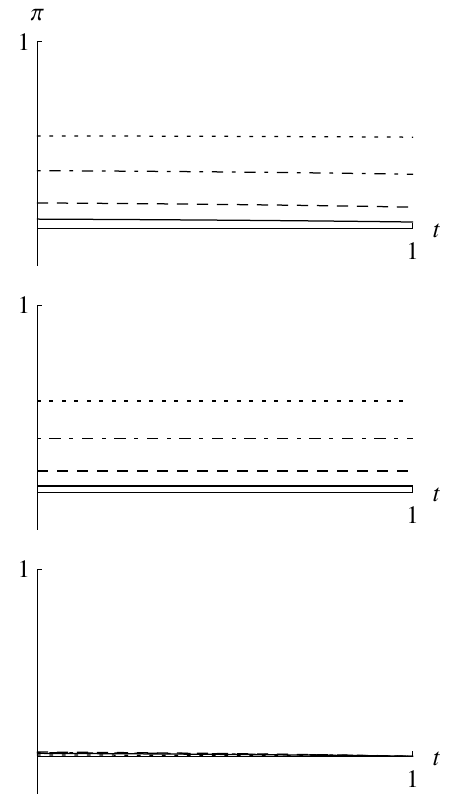}\\
\bottomrule
\end{tabular}
\caption{Optimal strategies (top row), myopic demands (middle row) and hedging demands (bottom row) for \emph{high relative risk aversion} ($p=4$); the line strength corresponds to the size of the parameter given in the head of each column (dotted lines represent the smallest value, etc.) with default parameters $\mu = 0.1$, $\sigma =0.2$, and $\alpha = 0.2$. The setup is $T=1$, $G(t)=1-\exp(-t)$, and $\phi'(t)=\alpha$; in particular, the relative jump size is $\delta(t) \equiv \alpha$.}
\label{table:illustrations:dependence on mu sigma alpha:p high}
\end{center}
\end{table}

\begin{table}[!ht]
\begin{center}
\begin{tabular}{m{.3\textwidth}m{.3\textwidth}m{.3\textwidth}}
\toprule
\begin{center}
$\mu=0.05,0.1,0.2,0.3$
\end{center}&
\begin{center}
$\sigma=0.1,0.2,0.3,0.4$
\end{center}&
\begin{center}
$\alpha = 0.1, 0.2, 0.4, 0.8$
\end{center}\\
\midrule
\includegraphics[scale=0.98]{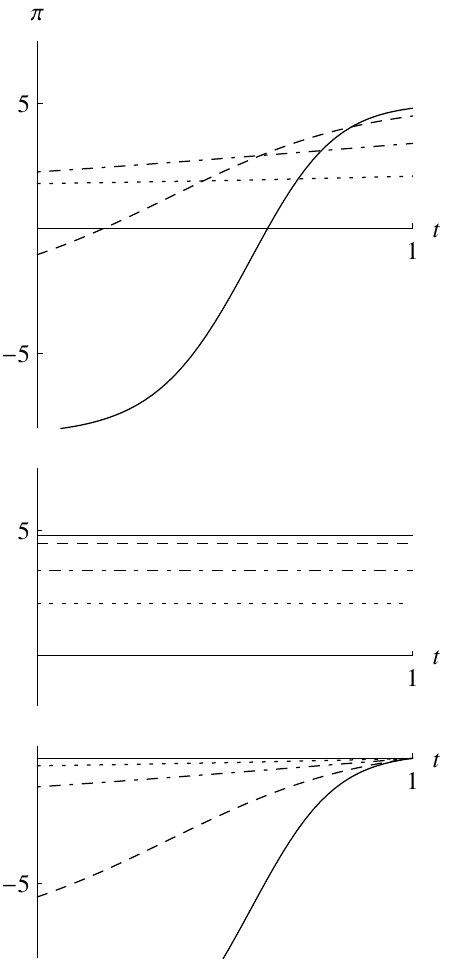}&
\includegraphics[scale=0.98]{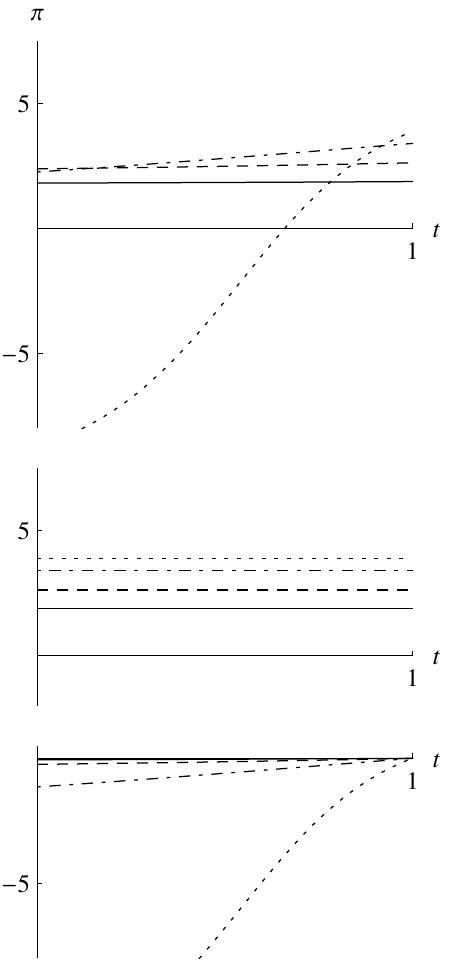}&
\includegraphics[scale=0.98]{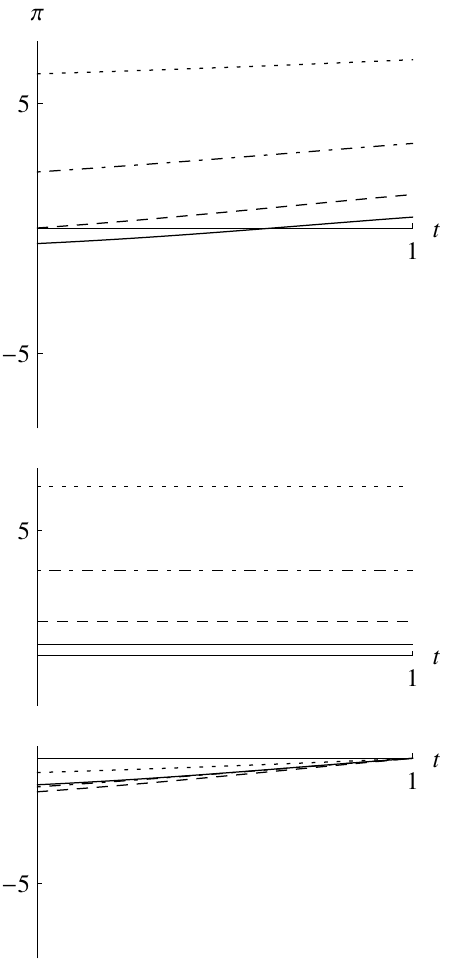}\\
\bottomrule
\end{tabular}
\caption{Same figures as in Table~\ref{table:illustrations:dependence on mu sigma alpha:p high} but for \emph{low relative risk aversion} ($p=0.25$).}
\label{table:illustrations:dependence on mu sigma alpha:p low}
\end{center}
\end{table}

Theorem~\ref{thm:decomposition} states that the sign of the hedging demand $\pi^\rmh$ is determined by the investor's relative risk aversion $p$. Thus, we provide illustrations for the cases of high ($p > 1$) and low ($p < 1$) risk aversion. The limiting case of logarithmic utility ($p=1$) always leads to a vanishing hedging demand and so the optimal strategy equals the myopic demand. It turns out that the qualitative behavior of the optimal strategy in this case closely resembles the behavior of the myopic demand of the optimal strategy in the case $p \neq 1$. We thus omit illustrations for the case $p=1$.

Tables~\ref{table:illustrations:dependence on mu sigma alpha:p high} and~\ref{table:illustrations:dependence on mu sigma alpha:p low} depict the optimal strategy before the crash as well as its decomposition into myopic and hedging demands for various choices of $\mu$, $\sigma$, and $\alpha$. Recall that $\alpha$ is a parameter describing the relative jump size of the stock price process $S$, for high ($p>1$) and low ($p<1$) risk aversion, respectively. The myopic demand is increasing in the instantaneous expected return $\mu$ and decreasing in the instantaneous continuous volatility $\sigma$ as well as in the relative jump size $\alpha$. Note that the myopic part is constant in Tables~\ref{table:illustrations:dependence on mu sigma alpha:p high} and~\ref{table:illustrations:dependence on mu sigma alpha:p low}. This is because equation \eqref{eqn:thm:decomposition:y^m} determining the myopic demand is independent of time $t$ for our choice of $G$ and $\phi'$. In general, the myopic demand need not be constant (cf.~Table~\ref{table:illustrations:morethanmerton}).

The qualitative behavior of the hedging demand, however, depends crucially on the relative risk aversion. In the case of high risk aversion ($p >1$), the hedging demand is always nonnegative and has the same monotonicity properties as the myopic demand. In the case of low risk aversion ($p<1$), the hedging demand is nonpositive and the monotonicity properties of the hedging demand are no longer in line with those of the myopic demand. Indeed, it is decreasing in $\mu$ (increasing in absolute value), increasing in $\sigma$ (decreasing in absolute value), and ``U-shaped'' in $\alpha$.

\paragraph{(2) Short selling and investing more than the Merton proportion.}

The optimal strategy includes short selling if and only if the investor's relative risk aversion $p$ is smaller than $1$ and the (nonpositive) hedging demand exceeds the (nonnegative) myopic demand in absolute value; cf.~the discussion after Theorem~\ref{thm:decomposition}. Table~\ref{table:illustrations:dependence on mu sigma alpha:p low} shows that short selling is amplified by ``good'' post-crash investment opportunities, i.e., low $\sigma$ and high $\mu$. For high relative risk aversion ($p>1$), the myopic and hedging demands are always nonnegative (by Theorem~\ref{thm:decomposition}); hence the optimal strategy never involves short selling.

\begin{table}[!ht]
\begin{center}
\begin{tabular}{m{.3\textwidth}m{.3\textwidth}m{.3\textwidth}}
\toprule
\begin{center}
optimal strategy
\end{center}&
\begin{center}
myopic demand
\end{center}&
\begin{center}
hedging demand
\end{center}\\
\midrule
\includegraphics[scale=0.98]{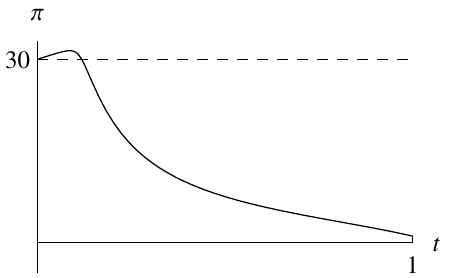}&
\includegraphics[scale=0.98]{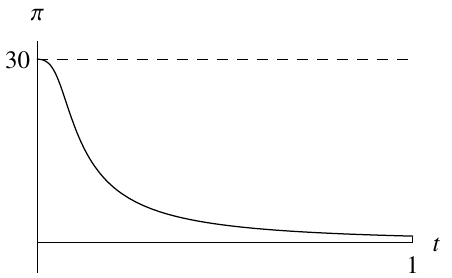}&
\includegraphics[scale=0.98]{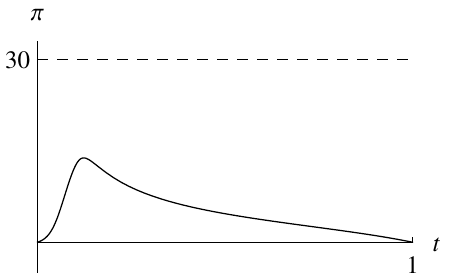}\\
\bottomrule
\end{tabular}
\caption{Under extreme circumstances, the optimal strategy before the crash (solid, left panel) may lie above the Merton proportion (dashed). The middle and right panels show the corresponding myopic and hedging demands, respectively. The setup is $T=1$, $G(t)=1-\exp(-t)$, and $\phi'(t) = 0.2 t$. The parameters are $\mu = 0.3$, $\sigma =0.05$, and $p = 4$.}
\label{table:illustrations:morethanmerton}
\end{center}
\end{table}

When $p>1$, the optimal strategy may lie above the Merton proportion (Table~\ref{table:illustrations:morethanmerton}). At first glance, this might be surprising as the instantaneous variance of our model is higher than in the corresponding Black--Scholes model due to the presence of the extra single jump component. However, on closer inspection, this effect can be explained by a combination of a high myopic demand at time $0$ and a hedging demand that is sufficiently increasing close to time $0$.

\paragraph{(3) Strict local martingales vs.~true martingales.}

\begin{table}[!ht]
\begin{center}
\begin{tabular}{m{.3\textwidth}m{.3\textwidth}m{.3\textwidth}}
\toprule
\begin{center}
optimal strategy
\end{center}&
\begin{center}
myopic demand
\end{center}&
\begin{center}
hedging demand
\end{center}\\
\midrule
\includegraphics[scale=0.98]{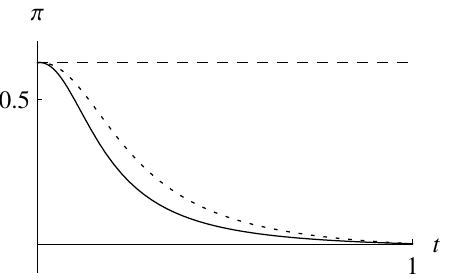}&
\includegraphics[scale=0.98]{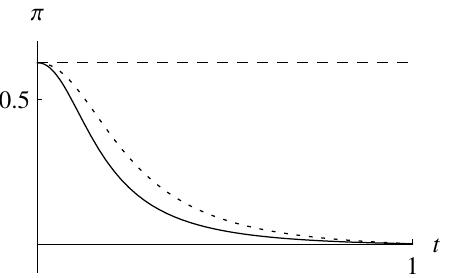}&
\includegraphics[scale=0.98]{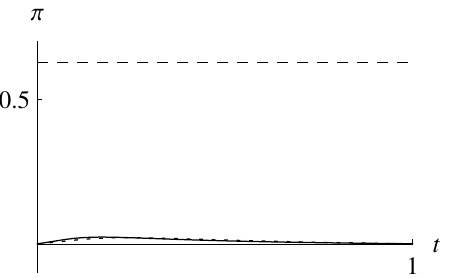}\\
\bottomrule
\end{tabular}
\caption{The optimal strategy does not distinguish qualitatively between $S$ being a strict local martingale or a true martingale under the dual minimizer $\hat Q$. The setup is $T=1$, $G(t)=t$, and $\phi'(t) = \alpha (\frac{1}{1-t} - 1)$; in particular, the relative jump size is $\delta(t)=\alpha t$. The solid lines correspond to $\alpha = 1$, for which $S$ is a strict local martingale under $\hat Q$, the dotted lines correspond to $\alpha = 0.7$, for which $S$ is a true martingale under $\hat Q$. The dashed lines represent the Merton proportion. The parameters are $\mu = 0.1$, $\sigma =0.2$, and $p = 4$.}
\label{table:illustrations:strictlocalmartingale}
\end{center}
\end{table}

The investor's optimal strategy does not seem to clearly distinguish between the asset price being a strict local martingale and a true martingale under the dual minimizer $\hat Q$. The solid lines in Table~\ref{table:illustrations:strictlocalmartingale} illustrate the optimal strategy and its decomposition into myopic and hedging demands in the case where $S$ is a strict local martingale under the dual minimizer $\hat Q$ (and in fact under any ELMM $Q$ obtained via Theorem~\ref{thm:ELMM} under the additional condition \eqref{eqn:thm:strict local martingale}). For $\alpha = 1$, the setup of Table~\ref{table:illustrations:strictlocalmartingale} coincides with Example~\ref{ex:strict local martingale}. However, for any $\alpha \in [0,1)$, the stock price process $S$ is a true martingale under $\hat Q$ (by Theorem~\ref{thm:strict local martingale}), and the dotted lines in Table~\ref{table:illustrations:strictlocalmartingale} depict the optimal strategy and its decomposition into myopic and hedging demands for $\alpha =0.7$. The graphs show that the qualitative behavior of the optimal strategies is quite similar. In fact, the optimal strategies converge (numerically) as $\alpha \uparrow 1$.

\paragraph{(4) Comparative statics of the welfare loss relative to the Black--Scholes model.}

\begin{table}[!ht]
\begin{center}
\begin{tabular}{m{.3\textwidth}m{.3\textwidth}m{.3\textwidth}}
\toprule
\begin{center}
$p=0.25$
\end{center}&
\begin{center}
$p=1$
\end{center}&
\begin{center}
$p=4$
\end{center}\\
\midrule
\includegraphics[scale=.98]{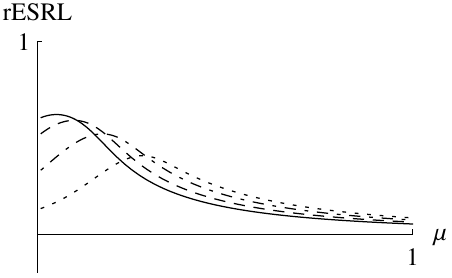}&
\includegraphics[scale=.98]{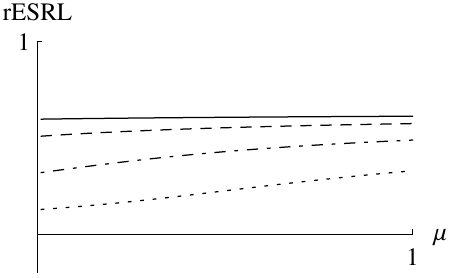}&
\includegraphics[scale=.98]{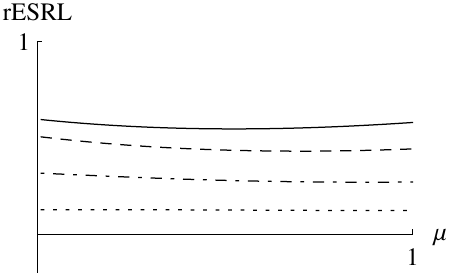}\\
\includegraphics[scale=.98]{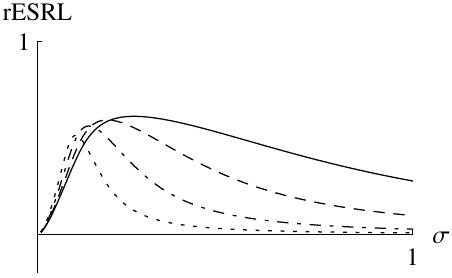}&
\includegraphics[scale=.98]{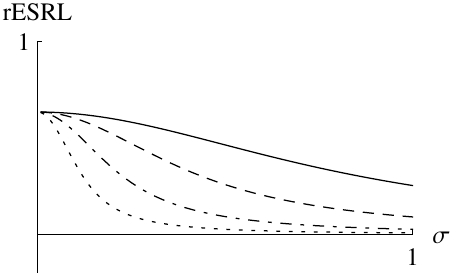}&
\includegraphics[scale=.98]{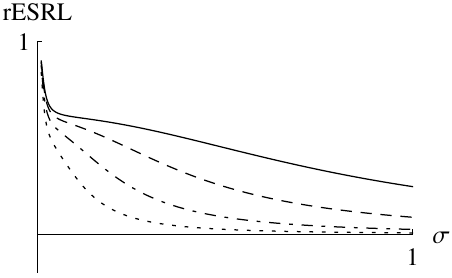}\\
\bottomrule
\end{tabular}
\caption{Dependence of the relative equivalent safe rate loss ($\mathrm{rESRL}$) on $\mu$ and $\sigma$ for $\alpha = 0.1$ (dotted), $\alpha = 0.2$ (dot-dashed), $\alpha = 0.4$ (dashed), and $\alpha =0.8$ (solid). The setup is $T=1$, $G(t)=1-\exp(-t)$, and $\phi'(t) = \alpha$. The parameters are $\sigma =0.2$ (top row) and $\mu = 0.1$ (bottom row).}
\label{table:illustrations:rESRL}
\end{center}
\end{table}

By Theorem~\ref{thm:certainty equivalent}, the addition of a single jump component to the Black--Scholes model reduces the certainty equivalent of trading in the market. We aim to analyze the influence of the model parameters on this welfare loss. A natural quantity to compare different markets is the \emph{equivalent safe rate}. If $\mathrm{CE}$ denotes the certainty equivalent of trading in some market with initial capital $x$ and time horizon $T$, then the \emph{equivalent safe rate} is defined as the unique solution $r:=\mathrm{ESR}$ to the equation $x e^{rT} = \mathrm{CE}$. In other words, the investor is indifferent between trading in this market and receiving a safe annualized return $r$ on his initial capital.\footnote{In a different setting, \cite{GerholdGuasoniMuhleKarbeSchachermayer2013} define the equivalent safe rate slightly differently: they look at the ``long-run'' equivalent safe rate, i.e., the limit as $T \uparrow \infty$.}

Let $\mathrm{CE}^{\mathrm{BS}} = x \exp\left ( \frac{\mu^2}{2 p \sigma^2}T  \right )$ denote the certainty equivalent of trading in a Black--Scholes market. The corresponding \emph{equivalent safe rate} is then given by
\begin{align*}
\mathrm{ESR}^{\mathrm{BS}}
&= \frac{1}{T}\log\left ( \mathrm{CE}^{\mathrm{BS}}/x \right )
= \frac{\mu^2}{2 p \sigma^2}.
\end{align*}
Denoting the certainty equivalent of trading in our market given in \eqref{eqn:thm:certainty equivalent:log utility} and \eqref{eqn:thm:certainty equivalent:power utility} by $\mathrm{CE}$, the corresponding equivalent safe rate is given by
\begin{align}
\label{eqn:ESR}
\mathrm{ESR}
&= \frac{1}{T} \log \left (\mathrm{CE}/x\right )\notag\\
&= \mathrm{ESR}^{\mathrm{BS}} - 
\begin{cases}
\frac{p}{1-p} \frac{1}{T} \log m(0,\hat y(0),p) & \text{if } p\neq 1,\\
\frac{1}{T} \int_0^T \left( \frac{\phi'(u)^2\hat y(u)^2}{2 \sigma^2 \kappa^G(u)} + \log(1 + \hat y (u)) - \frac{\hat y(u)}{1 + \hat y (u)}\right)G'(u) \dd u & \text{if } p= 1.
\end{cases}
\end{align}
In order to improve the comparability over different sets of parameters, we consider the \emph{relative equivalent safe rate loss} $\mathrm{rESRL} = 1 - \frac{\mathrm{ESR}}{\mathrm{ESR}^{\mathrm{BS}}}$ below; it is a relative measure for the incurred losses of trading in our market compared to trading in a Black--Scholes market. It follows from \eqref{eqn:ESR} that
\begin{align*}
\mathrm{rESRL}
&= \frac{2 \sigma^2}{\mu^2}\times
\begin{cases}
\frac{p^2}{1-p} \frac{1}{T} \log m(0,\hat y(0),p) & \text{if } p\neq 1,\\
\frac{1}{T}\int_0^T \left( \frac{\phi'(u)^2\hat y(u)^2}{2 \sigma^2 \kappa^G(u)} + \log(1 + \hat y (u)) - \frac{\hat y(u)}{1 + \hat y (u)}\right)G'(u) \dd u & \text{if } p= 1.
\end{cases}
\end{align*}

Table~\ref{table:illustrations:rESRL} illustrates the dependence of the $\mathrm{rESRL}$ on the model parameters $\mu$ and $\sigma$ as well as on the parameter $\alpha$ describing the relative jump size of the stock price process $S$. In the case $p=4$, the $\mathrm{rESRL}$ is increasing in $\alpha$ and decreasing in $\sigma$ while being almost constant in $\mu$. This is because the diffusive part becomes more and more dominant against the jump part with decreasing $\alpha$ or increasing $\sigma$, so that our model resembles more and more the Black--Scholes model.

In the case $p=0.25$, the dependencies are much less clear. On the one hand, if $\mu$ is sufficiently small and/or $\sigma$ is sufficiently large, then the $\mathrm{rESRL}$ is increasing in $\alpha$. The reason is that in this case, as observed above (cf.~Table~\ref{table:illustrations:dependence on mu sigma alpha:p low}), the optimal strategy does not involve short selling. Therefore, the higher $\alpha$, the higher the losses when the bubble bursts, so that the investor is better off with small jump sizes; this means that the $\mathrm{rESRL}$ is increasing in $\alpha$.

On the other hand, if $\mu$ is high enough and/or $\sigma$ is low enough, so that the optimal strategy involves a significant short position for a significant amount of time, then the investor's wealth is likely to increase when the bubble bursts. Under these circumstances, the investor prefers larger jump sizes; in other words, the $\mathrm{rESRL}$ is decreasing in $\alpha$.

An interesting observation is that for small $\sigma$, i.e., when the jump part dominates the diffusive part, investors with a small relative risk aversion lose only a small fraction of their $\mathrm{ESR}$ compared to an investment into a Black--Scholes market. On the contrary, investors with high relative risk aversion face huge losses in their $\mathrm{ESR}$ for small $\sigma$. This is due to short selling opportunities for investors with low relative risk aversion; cf.~the discussion after Theorem~\ref{thm:decomposition}.

\appendix
\section{Change of filtration}
\label{sec:change of filtration}

Recall from Section~\ref{sec:model class} that the (raw) filtrations $\FF^W = (\cF^W_t)_{t \in [0,T]}$, $\FF^\gamma = (\cF^\gamma_t)_{t \in [0,T]}$, and $\FF = (\cF_t)_{t \in [0,T]}$ are defined by $\cF^W_t = \sigma\left(W_u: 0 \leq u \leq t \right)$, $\cF^\gamma_t = \sigma\left(\1_{\lbrace\gamma \leq u \rbrace}: 0 \leq u \leq t \right)$, and $\cF_t = \sigma\left(\cF^W_t, \cF^\gamma_t\right)$, and that $\FF^W$ and $\FF^\gamma$ are independent under $P$. The key message of the following technical result is that (local) $\FF^\gamma$-martingales are (local) $\FF$-martingales not only under $P$ but also under certain equivalent measures $Q \approx P$, under which $\FF^W$ and $\FF^\gamma$ are no longer independent.

\begin{lemma}
\label{lem:filtration}
Let the function $k: [0, T]^2 \to \RR$ be of the form $k(t,v) = \hat k(t) + \check k(t) \1_{\{t \leq v, t < T\}}$, where $\hat k, \check k \in L^2[0, T]$. Set $Y^1 := \cE\left(\int_0^\cdot k(u, \gamma) \dd W_u \right)$ and let $Z^2$~be a positive $\FF^\gamma$-martingale with $Z^2_0 = 1$. 
\begin{enumerate}
\item Let  $Y^2$ be an $\FF^\gamma$-adapted càdlàg process.
\begin{enumerate}
\item The following are equivalent:
\begin{itemize}
\item [] $Y^2$ is an $\FF^\gamma$-martingale;
\item [] $Y^2$ is an $\FF$-martingale;
\item [] $Y^1 Y^2$ is an $\FF$-martingale.
\end{itemize}
\item If $Y^2$ is a local $\FF^\gamma$-martingale, then $Y^2$ and $Y^1 Y^2$ are local $\FF$-martingales.
\end{enumerate}
\item Define  $Q^\gamma, Q \approx P$ on $\cF_T$ by $\frac{\diff Q^\gamma}{\diff P} = Z^2_T$ and $\frac{\diff Q}{\diff P} = Y^1_T Z^2_T$.\footnote{Note that (a)~(i) with $Y^2 := Z^2$ shows that $Y^1 Z^2$ is a positive $\FF$-martingale with $Y^1_0 Z^2_0 = 1$.} Let $X^{2, Q}$ be an $\cF^\gamma_T$-measurable random variable and $Y^{2,Q}$ an $\FF^\gamma$-adapted càdlàg process.
\begin{enumerate}
\item $X^{2, Q}$ is $Q$-integrable if and only if it is $Q^\gamma$-integrable, and in this case,
\begin{align}
\label{eqn:lem:filtration:Q}
\cEX[Q]{X^{2,Q}}{\cF_s}
&= \cEX[Q^\gamma]{X^{2,Q}}{\cF^\gamma_s} \;\; P \text{-a.s.}, \quad s \in [0, T].
\end{align}
\item $Y^{2,Q}$ is a (square-integrable) $(Q^\gamma,\FF^\gamma)$-martingale if and only if it is a (square-integrable) $(Q,\FF)$-martingale.
\item  If $Y^{2,Q}$ is a local $(Q^\gamma,\FF^\gamma)$-martingale, then it is also a local $(Q,\FF)$-martingale. 
\end{enumerate}
\end{enumerate}
\end{lemma}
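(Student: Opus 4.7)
The central analytical input is the identity
\[
E\bigl[Y^1_t \bigm| \cF^\gamma_T \vee \cF^W_s\bigr] = Y^1_s \quad \as{P}, \qquad s \leq t,
\]
which I would establish first. Conditionally on $\cF^\gamma_T$, the integrand $k(\cdot,\gamma)$ is deterministic, and the bound $|k(u,\gamma)|^2 \leq 2(\hat k(u)^2 + \check k(u)^2)$ forces $\int_0^T k(u,\gamma)^2 \dd u$ to be dominated by a finite deterministic constant uniformly in $\gamma$; hence Novikov's criterion applies. Since the conditional law of $W$ given $\cF^\gamma_T$ is its unconditional law by independence of $\FF^W$ and $\FF^\gamma$ under $P$, $Y^1$ is a true $\FF^W$-martingale under $P(\cdot \mid \cF^\gamma_T)$, which gives the identity and, as special cases, $E[Y^1_t \mid \cF^\gamma_t] = 1$ and $E[|Y^1_t Y^2_t|] = E[|Y^2_t|]$ for any $\FF^\gamma$-adapted $Y^2$.

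For part (a), the tower property applied with the key identity yields
\[
E[Y^1_t Y^2_t \mid \cF_s] = Y^1_s E[Y^2_t \mid \cF^\gamma_s],
\]
while independence of $\cF^W_s$ and $\cF^\gamma_T$ under $P$ gives
\[
E[Y^2_t \mid \cF_s] = E[Y^2_t \mid \cF^\gamma_s]
\]
for $\FF^\gamma$-adapted $Y^2$. The three conditions in (a)(i) then all reduce to $E[Y^2_t \mid \cF^\gamma_s] = Y^2_s$: the equivalence ``$\FF^\gamma$-martingale $\Leftrightarrow$ $\FF$-martingale'' follows from the second identity, and the equivalence with ``$Y^1 Y^2$ is an $\FF$-martingale'' from the first identity and strict positivity of $Y^1$. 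For (a)(ii), if $(\tau_n)$ localizes $Y^2$ as an $\FF^\gamma$-martingale, then $(\tau_n)$ are also $\FF$-stopping times, $Y^{2,\tau_n}$ is an $\FF^\gamma$-martingale, and by (a)(i) both $Y^{2,\tau_n}$ and $Y^1 Y^{2,\tau_n}$ are $\FF$-martingales; the pathwise identity $(Y^1 Y^2)^{\tau_n} = (Y^1 Y^{2,\tau_n})^{\tau_n}$ together with Doob's optional stopping then yields the claim.

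For part (b), the footnote observation (an instance of (a)(i) with $Y^2 := Z^2$) makes $Y^1 Z^2$ a positive $\FF$-martingale, so that $Q$ is well defined. For (b)(i), since $|X^{2,Q}| Z^2_T$ is $\cF^\gamma_T$-measurable, the tower step
\[
E^Q[|X^{2,Q}|] = E\bigl[|X^{2,Q}| Z^2_T \cdot E[Y^1_T \mid \cF^\gamma_T]\bigr] = E[|X^{2,Q}| Z^2_T] = E^{Q^\gamma}[|X^{2,Q}|]
\]
gives the equivalence of integrabilities. For the conditional expectation formula, Bayes from $P$ to $Q$ yields $E^Q[X^{2,Q} \mid \cF_s] = E[Y^1_T Z^2_T X^{2,Q} \mid \cF_s]/(Y^1_s Z^2_s)$; conditioning the numerator on $\cF^\gamma_T \vee \cF^W_s$ (using the key identity) reduces it to $Y^1_s E[Z^2_T X^{2,Q} \mid \cF_s]$, which equals $Y^1_s E[Z^2_T X^{2,Q} \mid \cF^\gamma_s]$ by independence of $\cF^W_s$ and $\cF^\gamma_T$, and a final Bayes step between $P$ and $Q^\gamma$ on $\FF^\gamma$ delivers \eqref{eqn:lem:filtration:Q}. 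Parts (b)(ii) and (b)(iii) then follow immediately: (b)(ii) by applying (b)(i) pointwise in $s$ (and to $(Y^{2,Q}_t)^2$ for square-integrability), and (b)(iii) by localization via $\FF^\gamma$-stopping times as in (a)(ii). The only genuine obstacle I expect is the rigorous Fubini-type transfer from ``$\cE(\int k(u,v) \dd W_u)$ is an $\FF^W$-martingale for each fixed $v$'' to the conditional identity above, which is standard but requires care with null sets in the $v$-variable.
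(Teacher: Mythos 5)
Your proposal is correct in substance and in most of its details, but it is organized around a different key identity than the one the paper proves. You make the $(\cF^\gamma_T\vee\cF^W_s)_s$-martingale property of $Y^1$ the central fact and derive the rest (the Bayes-type formula $E[Y^1_T X^2\mid\cF_s]=Y^1_s E[X^2\mid\cF^\gamma_s]$, the independence identity $E[Y^2_t\mid\cF_s]=E[Y^2_t\mid\cF^\gamma_s]$, parts (a) and (b)) as consequences. The paper never states the enlargement lemma explicitly; instead it proves the Bayes-type formula \eqref{eqn:lem:filtration:Bayes P} directly by a monotone class argument over the intersection-closed generator $\cC_s=\{C^W\cap\{\gamma\leq u\}:C^W\in\cF^W_s,\ u\leq s\}$ of $\cF_s$, together with the observation that on $\{\gamma\leq u\}$ with $u\leq s$ one has $Y^1_T/Y^1_s=Y^{1,0}_T/Y^{1,0}_s$ (because $k(t,v)=\hat k(t)$ for $t>v$), which splits cleanly into an $\cF^W$-factor and an $\cF^\gamma$-factor. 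That splitting is precisely the device that sidesteps the joint-measurability / stochastic-Fubini issue you flag as the ``only genuine obstacle''; your enlargement route is conceptually cleaner but does require making the transfer from ``$Y^{1,v}$ is a martingale for each fixed $v$'' to the $\omega$-wise parametrized statement rigorous, which the paper's more hands-on computation avoids. Two smaller organizational differences: for (a)(ii) you apply (a)(i) to $(Y^2)^{\tau_n}$ and then invoke optional stopping, whereas the paper does the localized conditional-expectation computation explicitly with the indicator $\1_{\{\tau>s\}}$ (both are fine, your version is slicker); for (b)(ii)--(iii) you apply (b)(i) pointwise and then localize, whereas the paper instead reduces to (a) via Bayes' theorem applied to $Y^2=Z^2Y^{2,Q}$ and $Y^1Y^2$. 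Both routes yield the same conclusions, and your derivation of the square-integrability equivalence via $X^{2,Q}=(Y^{2,Q}_T)^2$ matches the paper's remark that a martingale on a finite horizon is square-integrable iff its terminal value is.
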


\begin{proof}
First, we show that an $\cF^\gamma_T$-measurable random variable $X^2$ is integrable if and only if $Y^1_T X^2$ is so, and in this case,
\begin{align}
\label{eqn:lem:filtration:Bayes P}
\cEX{Y^1_T X^{2}}{\cF_s}
&= Y^1_s \cEX{X^{2}}{\cF^\gamma_s} \;\; P \text{-a.s.}, \quad s \in [0, T].
\end{align}
By linearity, we may assume that $X^2$ is nonnegative. Then the first assertion follows from \eqref{eqn:lem:filtration:Bayes P} for $s = 0$. To establish \eqref{eqn:lem:filtration:Bayes P}, fix $s \in [0, T]$ and set $\cC_s := \{C = C^W \cap \{\gamma \leq u\}: C^W \in \cF^W_s, u \leq s\}$. Then $\cC_s $ is an intersection-closed generator of $\cF_s$, and by the $\cF_s$-measurability of both sides of \eqref{eqn:lem:filtration:Bayes P}, the positivity of $Y^1_s$, and a monotone class argument, it suffices to show that 
\begin{align}
\label{eqn:lem:filtration:00}
\EX{\frac{Y^1_T}{Y^1_s} X^2 \1_C }
&= \EX{\cEX{X^2}{\cF^\gamma_s}\1_C} \quad \text{for all } C \in \cC_s \cup \{\Omega\}.
\end{align}
To establish \eqref{eqn:lem:filtration:00}, for fixed $v \in [0, T]$, set $Y^{1,v} = \cE\left(\int_0^\cdot k(u, v) \dd W_u \right)$. By the assumption on $k$ and Novikov's condition, each $Y^{1,v}$ is a positive $\FF$-martingale with $Y^{1,v}_0 = 1$ and hence satisfies
\begin{align}
\label{eqn:lem:filtration:01}
\EX{\frac{Y^{1,v}_T}{Y^{1,v}_s} \1_A}
&= \EX{\1_A}, \quad s \in [0, T], A \in \cF_s.
\end{align}
Moreover, by the independence of $\cF^\gamma_T = \sigma(\gamma)$ and $W$, a monotone class argument, and \eqref{eqn:lem:filtration:01},
\begin{align}
\label{eqn:lem:filtration:02}
\cEX{\frac{Y^1_T}{Y^1_s}}{\cF^\gamma_T }
&= \left.\EX{\frac{Y^{1, v}_T}{Y^{1,v}_s}}\right\vert_{v = \gamma}
= 1 \;\; \as{P}, \quad s \in [0, T].
\end{align}
Now, \eqref{eqn:lem:filtration:00} for $C = \Omega$ follows from the $\cF^\gamma_T$-measurability of $X^2$ and \eqref{eqn:lem:filtration:02} via
\begin{align*}
\EX{\frac{Y^1_T}{Y^1_s} X^2}
&= \EX{\cEX{\frac{Y^1_T}{Y^1_s}}{\cF^\gamma_T}X^2}
= \EX{X^2}
= \EX{\cEX{X^2}{\cF^\gamma_s}}.
\end{align*}
If $C = C^W \cap \{\gamma \leq u\}$, where $C^W \in \cF^W_s$ and $u \leq s$,  then $Y^1_T/Y^1_s = Y^{1,0}_T/Y^{1,0}_s$ on $C$ since $k(t,v) = \hat k(t) = k(t,0)$ for $t > v$. Moreover, $Y^{1,0}_T/Y^{1,0}_s \1_{C^W}$ is $\cF^W_T$-measurable and $X^2 \1_{\{\gamma \leq u\}}$ and $\cEX{X^2}{\cF^\gamma_t} \1_{\{\gamma \leq u\}}$ are $\cF^\gamma_T$-measurable. This, the independence of $\cF^W_T$ and $\cF^\gamma_T$, and \eqref{eqn:lem:filtration:01} for $A = C^W$ yield
\begin{align*}
\EX{\frac{Y^1_T}{Y^1_s} X^2 \1_C }
&= \EX{\frac{Y^{1,0}_T}{Y^{1,0}_s} X^2 \1_C }
= \EX{\frac{Y^{1,0}_T}{Y^{1,0}_s} \1_{C^W}} \EX{X^2 \1_{\{\gamma \leq u\}}}\\
&= \EX{\1_{C^W}}\EX{\cEX{X^2}{\cF^\gamma_s} \1_{\{\gamma \leq u\}}}
= \EX{\cEX{X^2}{\cF^\gamma_s} \1_{C^W} \1_{\{\gamma \leq u\}}}\\
&= \EX{\cEX{X^2}{\cF^\gamma_s}\1_C}.
\end{align*}

Second, we establish (a). By the first part of the proof, $X^2 := Y^2_T$ is integrable if and only if $Y^1_T Y^2_T$ is so, and in this case,
\begin{align}
\label{eqn:lem:filtration:P martingale}
\cEX{Y^1_T Y^2_T}{\cF_s}
&= Y^1_s \cEX{Y^2_T}{\cF^\gamma_s} \;\; P \text{-a.s.}, \quad s \in [0, T].
\end{align}
As $Y^1_s > 0$ $\as{P}$, \eqref{eqn:lem:filtration:P martingale} shows that $Y^2$ is an $\FF^\gamma$-martingale if and only if $Y^1 Y^2$ is an $\FF$-martingale, and for $Y^1 \equiv 1$ (i.e., for $k \equiv 0$), this implies that $Y^2$ is an $\FF^\gamma$-martingale if and only if it is an $\FF$-martingale. So we have (i). To establish (ii), let $\tau$ be a $[0, T]$-valued $\FF^\gamma$- (and a fortiori $\FF$-)stopping time. Then by the $\FF$-martingale property of $Y^1$ (which follows from Novikov's condition and the assumptions on $k$) and \eqref{eqn:lem:filtration:Bayes P} for $X^2 = \vert Y^2_\tau \vert$ and $s = 0$,
\begin{align*}
\EX{Y^1_\tau \vert Y^2_\tau\vert}
&= \EX{\cEX{Y^1_T}{\cF_\tau}  \vert Y^2_\tau \vert}
= \EX{\cEX{Y^1_T \vert Y^2_\tau \vert}{\cF_\tau}}
= \EX{Y^1_T \vert Y^2_\tau \vert}
= \EX{\vert Y^2_\tau \vert}.
\end{align*}
This implies that $Y^1_\tau Y^2_\tau$ is integrable if and only if $Y^2_\tau$ is so. Now, if the stopped process $(Y^2)^\tau$ is an $\FF^\gamma$-martingale, by the $\FF$-martingale property of $Y^1$, the $\FF^\gamma$-martingale property of $(Y^2)^\tau$, and \eqref{eqn:lem:filtration:Bayes P} for $X := (Y^2)^\tau_T$, for $s \in [0, T]$,
\begin{align*}
\cEX{Y^1_\tau Y^2_\tau  \1_{\{\tau > s\}}}{\cF_s}
&= \cEX{\cEX{Y^1_T}{\cF_{\tau \vee s}} Y^2_\tau  \1_{\{\tau > s\}}}{\cF_s}
= \cEX{\cEX{Y^1_T Y^2_\tau  \1_{\{\tau > s\}}}{\cF_{\tau \vee s}}}{\cF_s}\\
&= \cEX{Y^1_T (Y^2)^\tau_T}{\cF_s} \1_{\{\tau > s\}}
= Y^1_s \cEX{(Y^{2})^\tau_T }{\cF^\gamma_s} \1_{\{\tau > s\}}\\
&= (Y^1)^\tau_s (Y^2)^\tau_s \1_{\{\tau > s\}} \;\; \as{P}
\end{align*}
Thus, $(Y^1)^\tau (Y^2)^\tau$ is an $\FF$-martingale because
\begin{align*}
\cEX{(Y^1)^\tau_T (Y^2)^\tau_T}{\cF_s}
&= Y^1_\tau Y^2_\tau \1_{\{\tau \leq s\}} + \cEX{Y^1_\tau Y^2_\tau  \1_{\{\tau > s\}}}{\cF_s}
= (Y^1)^\tau_s (Y^2)^\tau_s \;\; \as{P}
\end{align*}
For  $Y^1 \equiv 1$ (i.e., for $k \equiv 0$), this also implies that $(Y^2)^\tau$ is an $\FF$-martingale. So if $(\tau_n)_{n \in \NN}$ is a localizing sequence for $Y^2$ in $\FF^\gamma$, it is also a localizing sequence for $Y^2$ and $Y^1 Y^2$ in $\FF$, and we have (ii).

Finally, we establish (b). For (i), set $X^2 = Z^2_T X^{2, Q}$. Then by Bayes' theorem, \eqref{eqn:lem:filtration:Bayes P} for $s = 0$, and again Bayes' theorem,
\begin{align*}
\EX[Q]{\vert X^{2, Q} \vert}
&= \EX[P]{Y^1_T Z^2_T \vert X^{2, Q} \vert}
= \EX[P]{Y^1_T \vert X^2 \vert}
= \EX[P]{\vert X^2 \vert}
= \EX[P]{ Z^2_T \vert X^{2, Q} \vert}
= \EX[Q^\gamma]{\vert X^{2, Q} \vert}\!,
\end{align*}
which shows that $X^{2, Q}$ is $Q$-integrable if and only if it is $Q^\gamma$-integrable. Now, the same argument yields \eqref{eqn:lem:filtration:Q} using \eqref{eqn:lem:filtration:Bayes P} for general $s \in [0, T]$. 

For (ii) and (iii), set $Y^2 = Z^2 Y^{2, Q}$. Then by Bayes' theorem, $Y^{2, Q}$ is a (local) $(Q^\gamma, \FF^\gamma)$-martingale if and only if  $Y^2$ is a (local) $(P, \FF^\gamma)$-martingale. Likewise by Bayes' theorem, $Y^{2, Q}$ is a (local) $(Q, \FF)$-martingale if and only if $Y^1 Y^2$ is a (local) $(P, \FF)$-martingale. Now, (ii) and (iii) follow from (a) (i) and (ii) using also the fact that $(Y^{2,Q}_T)^2$ is $Q$-integrable if and only if it is $Q^\gamma$-integrable; this follows from \eqref{eqn:lem:filtration:Q} for $s = 0$ and $X^{2, Q} = (Y^{2,Q}_T)^2$ using the fact that a martingale on a finite time horizon is square-integrable if and only if it is square-integrable at the final time.
\end{proof}

\section{Analytic results}
\label{sec:analytic results}

The main objective of this section is to show the existence and uniqueness of a solution to the integral equation \eqref{eqn:thm:main result:integral equation}. We first need several preparatory results.

\paragraph{An existence result for ODEs.}

Let $\ul{y} \in C[0,T)$ and $U := \{(t, y) \in [0, T) \times \RR : y > \ul{y}(t)\}$. Let $f: U \to \RR$ be a continuous function that is locally Lipschitz in its second variable. We consider the ordinary differential equation (ODE)
\begin{align}
\label{eqn:general ODE}
y'(t)
&= f(t, y(t)), \quad t \in [0, T).
\end{align}
A function $y \in C^1[0, T)$ with $y > \ul{y}$ is called a \emph{backward upper (lower) solution} to \eqref{eqn:general ODE}
if
\begin{align*}
y'(t)
&\leq \; (\geq) \; f(t, y(t)), \quad t \in [0, T).
\end{align*}
The function $y$ is called a \emph{solution} to \eqref{eqn:general ODE} if it is both a backward upper and a backward lower solution.

\begin{remark}
We define backward upper and lower solution without an initial condition. Moreover, note that what we call backward upper and lower solutions is called \emph{upper and lower solution to the left} in \cite{Walter1998}. Moreover, in \cite{Walter1998} strict (as opposed to weak) inequalities are considered. But as we require $f$ to be locally Lipschitz continuous in its second variable, all results hold also for the weak inequalities (see \cite[Corollary~VIII.9]{Walter1998}).
\end{remark}

The following result gives the existence of a solution to the ODE~\eqref{eqn:general ODE} via the existence of a backward lower and a backward upper solution. The proof for $U = [0, \infty) \times \RR$ can be found in \cite[Theorem and Remark~XIII.9]{Walter1998}, and it is straightforward to check that the argument carries over to our setting.

\begin{lemma}
\label{lem:general existence result}
Let $y_*, y^*  \in C^1[0, T)$ with $y_* \leq  y^*$. Suppose that $y_*$ is a backward lower and $y^*$ a backward upper solution to \eqref{eqn:general ODE}. Then there exists a solution $y \in C^1[0, T)$ to \eqref{eqn:general ODE} with $y_* \leq y \leq y^*$.
\end{lemma}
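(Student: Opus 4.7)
The plan is to construct the solution as the uniform-on-compacts limit of a sequence of backward-in-time approximations, relying on a one-sided Gr\"onwall-type comparison that exploits precisely the ``backward'' inequalities defining $y^*$ and $y_*$. I would fix a sequence $\tau_n \uparrow T$ in $(0,T)$ and, for each $n$, choose a terminal value $y_n(\tau_n) \in [y_*(\tau_n), y^*(\tau_n)]$; since $f$ is continuous and locally Lipschitz in its second argument on $U$, Picard--Lindel\"of produces a unique maximal solution $y_n$ of $y' = f(t,y)$ with this terminal value, defined on a subinterval of $[0, \tau_n]$ ending at $\tau_n$.

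The heart of the argument will be a backward comparison showing $y_*(t) \leq y_n(t) \leq y^*(t)$ throughout the domain of $y_n$. For the upper bound, if $y_n$ exceeded $y^*$ at some time $t_1$ in its domain, I would set $t_0 := \inf\{t \in [t_1, \tau_n] : y_n(t) = y^*(t)\}$, so that $w := y_n - y^* > 0$ on $[t_1, t_0)$ and $w(t_0) = 0$. Using a local Lipschitz constant $L$ for $f$ on a compact neighborhood of the closed strip $\{(t,z) : t \in [0, \tau_n],\, y_*(t) \leq z \leq y^*(t)\}$ together with the backward upper inequality $y^{*\prime} \leq f(\cdot, y^*)$, one obtains
\begin{align*}
w'(t)
&= f(t, y_n(t)) - y^{*\prime}(t)
\geq f(t, y_n(t)) - f(t, y^*(t))
\geq -L\, w(t), \quad t \in [t_1, t_0),
\end{align*}
so $(e^{Lt} w)' \geq 0$ on this interval, which forces $e^{L t_1} w(t_1) \leq e^{L t_0} w(t_0) = 0$, a contradiction. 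A symmetric argument using the backward lower inequality $y_*' \geq f(\cdot, y_*)$ gives $y_n \geq y_*$. Because the strip $[y_*, y^*]$ is compactly contained in $U$ on every $[0,b]$ with $b < T$, this a priori bound precludes finite-time blow-up and ensures that $y_n$ in fact extends to all of $[0, \tau_n]$.

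The last step is a diagonal / Arzel\`a--Ascoli extraction: on each compact $[0,b] \subset [0,T)$ and all sufficiently large $n$, the $y_n$ are uniformly bounded (they lie in the strip) and their derivatives $y_n' = f(\cdot, y_n)$ are uniformly bounded by continuity of $f$ on a compact subset of $U$. A diagonal subsequence will then converge uniformly on compact subsets of $[0,T)$ to a continuous function $y$ with $y_* \leq y \leq y^*$, and passing to the limit in the integral form $y_n(t) = y_n(0) + \int_0^t f(s, y_n(s)) \dd s$ via dominated convergence identifies $y$ as a $C^1$ solution of \eqref{eqn:general ODE}.

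The main obstacle relative to Walter's presentation on $[0,\infty) \times \RR$ is twofold: the half-open endpoint $T$ and the constraint $y > \ul{y}$ built into $U$. Both are handled by the same device: the a priori bound $y_* \leq y_n \leq y^*$ (together with $y_* > \ul{y}$) keeps the approximants inside a compact subset of $U$ on every $[0,b]$, which is exactly what is needed for the standard extension, compactness, and passage-to-the-limit arguments to go through without modification.
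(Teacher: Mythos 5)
Your proof is correct and self-contained, whereas the paper's proof consists solely of a citation to Walter's Theorem~XIII.9 together with the assertion that ``it is straightforward to check that the argument carries over to our setting.'' Your scheme---shoot backward from terminal data $y_n(\tau_n)\in[y_*(\tau_n),y^*(\tau_n)]$, trap $y_n$ in the funnel $[y_*,y^*]$ via a one-sided Gr\"onwall comparison using the backward upper/lower inequalities, use the a priori trap to rule out blow-up, and pass to the limit $\tau_n\uparrow T$ by Arzel\`a--Ascoli and the integral formulation---is precisely the kind of detail the paper elides. It is also the natural way to transport the funnel/comparison argument behind Walter's result from a compact interval to the half-open interval $[0,T)$, so the two approaches are the same in spirit; your write-up simply supplies the content of the ``straightforward to check'' remark.

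One small slip in phrasing, easily repaired: you invoke a Lipschitz constant $L$ on ``a compact neighborhood of the closed strip $\{(t,z):t\in[0,\tau_n],\,y_*(t)\le z\le y^*(t)\}$,'' but the point $(t_1,y_n(t_1))$ you are trying to control lies \emph{outside} that strip by hypothesis. This is harmless: $y_n$ is continuous on the compact interval $[t_1,t_0]$ and satisfies $y_n>y^*>\ul{y}$ there, so the set $\{(t,z):t\in[t_1,t_0],\,y^*(t)\le z\le\max_{[t_1,t_0]}y_n\}$ is a compact subset of $U$, and the local Lipschitz constant of $f$ on that set gives the estimate $f(t,y_n(t))-f(t,y^*(t))\ge -L\,w(t)$ you need. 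With this adjustment the comparison, the extension to $[0,\tau_n]$, and the diagonal extraction all go through exactly as you describe.
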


\paragraph{Properties of the auxiliary functions.}

We collect some analytic properties of the auxiliary functions $a$, $b$, $m$, and $n$ defined in \eqref{eqn:a}, \eqref{eqn:b}, \eqref{eqn:m}, and \eqref{eqn:n}. If there is no danger of confusion, we drop the dependence on $p$ in the notation. It is easy to check that $a, b, m, n \in C([0, T) \times [-1, \infty) \times (0, \infty)) \cap C^{1,2,1}([0, T) \times (-1, \infty) \times (0, \infty))$. For further reference, we note the straightforward identities
\begin{align}
 \label{eqn:da/dy}
\frac{\partial}{\partial y} a(t,y)
&= \frac{1}{p \sigma^2} \frac{\phi'(t)^2}{\kappa^G(t)}
\geq 0,\\
\label{eqn:dm/dy}
\frac{\partial}{\partial y} m(t,y)
&= (1+y)^{\frac{1}{p}} \left(\frac{1}{p} \frac{a(t,y,p)}{1+y} + \frac{\partial}{\partial y} a(t,y,p)\right)
\geq \frac{1}{p} \frac{m(t,y,p)}{1+y},\\
\label{eqn:dn/dy}
\frac{\partial}{\partial y} n(t,y)
&= \kappa^G(t) \left(\frac{1}{p}a(t,y,p) + (1+y) \frac{\partial}{\partial y} a(t,y,p)\right)
\geq \frac{1}{p} \kappa^G(t) a(t,y,p),\\
\label{eqn:n identity}
n(t,y)
&=  -\frac{1 - p}{2 p^2 \sigma^2} \mu^2 + \frac{1 - p}{2 p^2 \sigma^2} \left(\phi'(t) y - \mu\right)^2 + \frac{1}{p} \kappa^G(t)(b(t,y,1) -1).
\end{align}

In view of the integral equation \eqref{eqn:thm:main result:integral equation}, we are interested in the domain where the function $m$ is positive. To this end, define the function $\ul{y}: [0, T) \to [-1, \infty)$ by
\begin{align}
\label{eqn:yunderline}
\ul{y}(t)
&:=
\begin{cases}
-1 &\text{if } \phi'(t) = 0,\\
\max\left(-1, \frac{\mu}{\phi'(t)} - p \sigma^2 \frac{\kappa^G(t)}{\phi'(t)^2}\right) & \text{if } \phi'(t) > 0.
\end{cases}
\end{align}
Using that $\kappa^G$ is continuous and positive on $[0, T)$, it is not difficult to check that $\ul{y} \in C[0, T)$. Set
\begin{align}
\label{eqn:U}
U
&= \{(t, y) \in [0, T) \times \RR: y > \ul{y}(t)\}.
\end{align}
Then by the definition of $\ul{y}$, \eqref{eqn:dm/dy}, and \eqref{eqn:dn/dy}, 
\begin{align}
\label{eqn:auxiliary functions >0} 
a(t,y), m(t,y), \frac{\partial}{\partial y} m(t,y), \frac{\partial}{\partial y} n(t,y)
&> 0, \quad (t, y) \in U. 
\end{align}

\paragraph{An implicit function result.}

The following inverse-function-type result is the cornerstone of the subsequent analysis. In particular, it is used in Theorem~\ref{thm:integral equation} to construct backward upper and backward lower solutions for the ODE \eqref{eqn:general ODE}. Recall the definition of $\ul{y}$ in \eqref{eqn:yunderline}.
\begin{lemma}
\label{lem:implicit function}
Fix $p \in (0, \infty)$. Let $f \in C^1[0, T)$ with $f(t) > 0$, $t \in [0, T)$. Then there exists a unique function $y \in C^1[0, T)$ with $y > \ul{y}$ such that
\begin{align}
\label{eqn:lem:implicit function}
m(t, y(t))
&= f(t).
\end{align}
Moreover, if $\lim_{t \uparrow \uparrow T} f(t) = 1$, then there exist constants $\epsilon \in (0, 1]$ and $C \geq 1$ such that
\begin{align}
\label{eqn:lem:implicit function:strict local martingale inequalities}
\epsilon
&\leq 1 + y(t)
\leq C + \frac{C}{\phi'(t)} \1_{\{\kappa^G(t)  < C \phi'(t)\}}, \quad t \in [0, T).
\end{align}
In this case, if in addition $\int_0^T \left\vert \phi'(u) y(u)\right\vert  \dd u  < \infty$, then 
\begin{align}
\label{eqn:lem:implicit function:integrability condition}
\int_0^T \1_{\lbrace \Delta G(T) > 0 \rbrace} \kappa^G(u) (1+ y(u)) \dd u
&< \infty.
\end{align}
\end{lemma}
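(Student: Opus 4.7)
The plan is to treat the lemma in three stages: pointwise construction of $y$, the two-sided bound when $f \to 1$, and the integrability conclusion.

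For existence and $C^1$-regularity, I fix $t \in [0,T)$ and observe that by \eqref{eqn:auxiliary functions >0} the map $y \mapsto m(t,y)$ is strictly increasing and continuous on $(\ul y(t), \infty)$, with $m(t, \ul y(t)) = 0$ (either $1 + \ul y(t) = 0$ or $a(t, \ul y(t)) = 0$) and $m(t,y) \to \infty$ as $y \to \infty$. Since $f(t) > 0$, the intermediate value theorem gives a unique $y(t) \in (\ul y(t), \infty)$ with $m(t, y(t)) = f(t)$. Because $m \in C^{1,2,1}$ on $U$ with $\partial m/\partial y > 0$, the implicit function theorem upgrades $y$ to $C^1[0,T)$.

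Assume henceforth $\lim_{t \uparrow\uparrow T} f(t) = 1$; continuity and positivity of $f$ then yield constants $0 < c_1 \leq f(t) \leq c_2 < \infty$ on $[0,T)$. Write $\eta(t) := 1 + y(t)$, $A(t) := \phi'(t)^2/(p\sigma^2 \kappa^G(t))$, $B(t) := \phi'(t)\mu/(p\sigma^2 \kappa^G(t))$, so $a(t, y(t)) = 1 - B(t) + A(t)(\eta(t) - 1)$ and $B(t) \leq K_1 := \mu/(p\sigma^2)$. Two ancillary facts drive the bounds: (i) if $\eta \leq 1$ then $A(\eta - 1) \leq 0$ and $B \geq 0$ force $a \leq 1$, whence $f \leq \eta^{1/p}$ and $\eta \geq c_1^p$; (ii) always $a \leq \max(1, f)$, since $a > 1$ forces $A(\eta - 1) > B \geq 0$, hence $\eta > 1$ and $a = f/\eta^{1/p} < f$. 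The lower bound $\eta \geq \epsilon := \min(1, c_1^p)$ follows from (i) combined with the trivial case $\eta > 1$.

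For the upper bound, set $C_0 := \max(2,\, 2\mu/(p\sigma^2),\, (2c_2)^p)$ and split on $\delta(t) := \phi'(t)/\kappa^G(t)$. If $\delta \leq 1/C_0$ and $\eta > 1$, then $B \leq 1/2$ gives $a \geq 1/2$, whence $\eta \leq (2f)^p \leq (2c_2)^p \leq C_0$. If $\delta > 1/C_0$ and $\eta > 1$, then $A \geq \phi'/(C_0 p\sigma^2)$, and the identity $A(\eta - 1) = a - 1 + B$ combined with (ii) gives $A(\eta - 1) \leq K_3 := \max(0, c_2 - 1) + K_1$, hence $\eta \leq 1 + K_3 C_0 p\sigma^2/\phi'$. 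Choosing $C := \max(C_0,\, 1 + K_3 C_0 p\sigma^2)$, one checks that on $\{\kappa^G \geq C\phi'\}$ (contained in $\{\delta \leq 1/C_0\}$) we have $\eta \leq C_0 \leq C$, and on $\{\kappa^G < C\phi'\}$ we have $\eta \leq 1 + (C - 1)/\phi' \leq C + C/\phi'$ (using $C \geq 1$). The main technical hurdle is that a naive single-region estimate yields a multiplicative constant comparable to $K_3 C_0 p\sigma^2$ that cannot be absorbed into the target bound; the two-region split followed by inflating $C$ to dominate $1 + K_3 C_0 p\sigma^2$ is what closes the inequality. Finally, for \eqref{eqn:lem:implicit function:integrability condition}, if $\Delta G(T) = 0$ the indicator vanishes. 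Otherwise $\int_0^T \kappa^G(u) \dd u < \infty$ by Proposition~\ref{prop:hazard rate}, and the upper bound yields
\begin{align*}
\int_0^T \kappa^G(u)(1 + y(u)) \dd u
&\leq C \int_0^T \kappa^G(u) \dd u + C \int_0^T \frac{\kappa^G(u)}{\phi'(u)} \1_{\{\kappa^G(u) < C\phi'(u)\}} \dd u \\
&\leq C \int_0^T \kappa^G(u) \dd u + C^2 T < \infty,
\end{align*}
since $\kappa^G/\phi' < C$ wherever the indicator is nonzero.
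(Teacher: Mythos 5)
Your proof is correct, and it follows the same overall skeleton as the paper's (pointwise monotonicity of $y \mapsto m(t,y)$ on $[\ul{y}(t),\infty)$, the implicit function theorem, pointwise bounds on $y$, then integration), but the bound step is packaged quite differently. The paper establishes two pointwise inequalities --- $y(t)\leq(2f(t))^p$ on $\{\phi'\leq\tfrac{p\sigma^2}{2\mu}\kappa^G\}$ and $y(t)\leq\max(f(t)^p,\mu/\phi'(t))$ on $\{\phi'>0\}$ --- by contradiction arguments and then assembles the constant $C$; you instead work directly with the decomposition $a = 1-B+A(\eta-1)$, bound $a\leq\max(1,f)$ (your fact (ii)), and split on $\delta$. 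Both are valid, and your lower bound $\eta\geq c_1^p$ via fact (i) is more quantitative and cleaner than the paper's compactness/subsequence contradiction for $\liminf_{t\uparrow T}y(t)>-1$. Two remarks. First, you pass from ``$y\in C^1(0,T)$ by the implicit function theorem'' to ``$y\in C^1[0,T)$'' in one word; the paper devotes a separate paragraph to this endpoint, showing $y'$ is bounded near $0$, that $\lim_{t\downarrow\downarrow 0}y(t)$ exists and equals the pointwise $y(0)$, and that $\lim_{t\downarrow\downarrow 0}y'(t)$ exists. The standard IFT on the open set $U$ only gives smoothness on $(0,T)$; extending $m$ smoothly past $t=0$ (possible since $\phi\in C^2[0,T)$, $G\in C^2[0,T)$) or citing a one-sided IFT closes this, but it should at least be acknowledged. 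Second, your route to \eqref{eqn:lem:implicit function:integrability condition} is actually a slight sharpening: you derive it directly from the pointwise bound \eqref{eqn:lem:implicit function:strict local martingale inequalities} using $\kappa^G/\phi'<C$ on the indicator set, which makes no use of the additional hypothesis $\int_0^T|\phi'(u)y(u)|\dd u<\infty$; the paper's proof partitions $[0,T)$ into $A=\{\phi'\leq\tfrac{p\sigma^2}{2\mu}\kappa^G\}$ and $A^c$, and on $A^c$ estimates $\1_{A^c}|y|\kappa^G\leq\tfrac{2\mu}{p\sigma^2}|y|\phi'$, which does invoke that hypothesis.
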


\begin{proof}
First, for fixed $t \in [0, T)$, by \eqref{eqn:m} and \eqref{eqn:dm/dy}, $y \mapsto m(t, y)$ is continuous and increasing\footnote{We emphasize that we use qualifiers like ``increasing'', ``decreasing'', ``positive'', ``negative'' in the \emph{strict} sense; the corresponding wide-sense notions are ``nondecreasing'', ``nonincreasing'', ``nonnegative'', ``nonpositive''.} on $[\ul{y}(t), \infty)$ with $m(t,\ul{y}(t)) = 0$ and $\lim_{y \to \infty} m(t,y) = +\infty$. Thus, there exists a unique function $y : [0, T) \to \RR$ with $y > \ul{y}$ satisfying \eqref{eqn:lem:implicit function}. Moreover, $y \in C^1(0, T)$ by the implicit function theorem. 

Second, for fixed $t \in [0, T)$, we claim that
\begin{align}
\label{eqn:lem:implicit function:pf:01} 
y(t)
&\leq (2 f(t))^p
&&\text{if } \phi'(t) \leq \frac{p \sigma^2}{2 \mu} \kappa^G(t),\\
\label{eqn:lem:implicit function:pf:02}
y(t)
&\leq  \max\left(f(t)^p, \frac{\mu}{\phi'(t)}\right)
&&\text{if } \phi'(t) > 0.
\end{align}
Indeed, fix $t \in [0, T)$. If $\phi'(t) \leq \frac{p \sigma^2}{2 \mu} \kappa^G(t)$, then $a(t,0) = 1 - \frac{\mu}{p \sigma^2} \frac{\phi'(t)}{\kappa^G(t)} \geq \frac{1}{2}$. Seeking a contradiction, suppose that $y(t) > (2 f(t))^p > 0$. Then by the definitions of $m$ and $y(t)$ and the monotonicity of $a$ in the second variable, 
\begin{align*}
f(t)
&= m(t,y(t))
= (1+  y(t))^{\frac{1}{p}} a(t, y(t))
> (y(t))^{\frac{1}{p}} a(t,0)
\geq f(t),
\end{align*}
which is absurd. If $\phi'(t) >0$, then $a\left(t, \frac{\mu}{\phi'(t)}\right) = 1$, and \eqref{eqn:lem:implicit function:pf:02} follows from a similar argument.

Third, by the implicit function theorem and \eqref{eqn:dm/dy}, for $t \in (0, T)$,
\begin{align*}
\left\vert y'(t)\right\vert
&= \left\vert\frac{f'(t) - \frac{\partial}{\partial t} m(t, y(t))}{\frac{\partial}{\partial y} m(t, y(t))}\right\vert
\leq  p(1+y(t)) \frac{\left\vert f'(t) - \frac{\partial}{\partial t} m(t, y(t))\right\vert}{m(t, y(t))}\\
&= p(1+y(t)) \frac{\left\vert f'(t) - \frac{\partial}{\partial t} m(t, y(t))\right\vert}{f(t)}.
\end{align*}
Now, fix $t_0 \in (0, T)$ and let $C > 0$ be such that $y(t) \leq C$ for all $t \in [0, t_0]$. (This is possible by \eqref{eqn:lem:implicit function:pf:01}, \eqref{eqn:lem:implicit function:pf:02} and the facts that $f$ is continuous and $\kappa^G$ is continuous and positive). Then the positivity and continuity of $f$ in $[0, t_0]$, the continuity of $f'$ in $[0, t_0]$, and the continuity of $\frac{\partial}{\partial t} m(t, y)$ in $[0, t_0] \times [-1, C]$ together with the fact that $-1 \leq y(t) \leq C$ for $t \in [0, t_0]$ show that $y'$ is uniformly bounded in $(0, t_0]$. Moreover, by the fundamental theorem of calculus and the fact that $y \in C^1(0,T)$,
\begin{align*}
y(t)
&= y(t_0) - \int_t^{t_0} y'(u) \dd u, \quad t \in (0, t_0].
\end{align*}
Thus, by dominated convergence, $\lim_{t \downarrow \downarrow 0} y(t)$ exists in $\RR$. The continuity of $f$ and $m$ and \eqref{eqn:lem:implicit function} give
\begin{align*}
m(0, y(0))
&= f(0)
= \lim_{t \downarrow \downarrow 0} f(t)
= \lim_{t \downarrow \downarrow 0} m(t, y(t))
= m(0, \lim_{t \downarrow \downarrow 0} y(t)),
\end{align*}
and so by the uniqueness of $y$ on $[0, T)$, $\lim_{t \downarrow \downarrow 0} y(t) = y(0) > \ul{y}(0) \geq -1$. This together with the continuity of $\frac{\partial}{\partial y}m$ and $\frac{\partial}{\partial t}m$ on $[0, T) \times (-1, \infty)$, the continuity of $f'$ on $[0, T)$, and the identity $y'(t)= \left(f'(t) - \frac{\partial}{\partial t} m(t, y(t))\right)/\frac{\partial}{\partial y} m(t, y(t))$ for $t \in (0, T)$ (by the implicit function theorem) shows that the limit $\lim_{t \downarrow \downarrow 0} y'(t)$ exists in $\RR$. So $y \in C^1[0, T)$.

Fourth, assume $\lim_{t \uparrow \uparrow T} f(t) = 1$. Set
\begin{align}
\label{eqn:lem:implicit function:pf:C}
C
&:= 1+ \max\left(\sup_{t \in [0, T)} (2 f(t))^p, \mu, \frac{2 \mu}{p \sigma^2}\right). 
\end{align}
Then $1 \leq C < \infty$. Fix $t \in [0, T)$. If $\kappa^G(t) \geq C \phi'(t)$, then $\phi' (t) \leq \frac{1}{C} \kappa^G(t) \leq \frac{p \sigma^ 2}{2 \mu} \kappa^G(t)$, and so $1 + y(t) \leq C$ by \eqref{eqn:lem:implicit function:pf:01} and the definition of $C$.
Otherwise, if $\kappa^G(t) < C \phi'(t)$, then $\phi'(t) > 0$, and so $1 + y(t) \leq C +\frac{C}{\phi'(t)}$ by \eqref{eqn:lem:implicit function:pf:02} and the definition of $C$. For the left inequality in \eqref{eqn:lem:implicit function:strict local martingale inequalities}, by the continuity of $y$ in $[0, T)$ and the fact that $y > \ul y \geq -1$ on $[0, T)$, it suffices to show that $\liminf_{t \uparrow \uparrow T} y(t) > -1$.  Seeking a contradiction, suppose there is a sequence $(t_n)_{n \in \NN} \subset [0, T)$ increasing to $T$ such that $\lim_{n \to \infty} y(t_n) = -1$. Passing to a subsequence if necessary, we may assume that $y(t_n) \leq 0$ for all $n \in \NN$. As $\phi' \geq 0$ by  \eqref{eqn:standing assumption}, the definition of $a$ in \eqref{eqn:a} gives $a(t_n,y(t_n)) \leq 1$ for all $n \in \NN$. Now, using the definition of $m$ in \eqref{eqn:m}, we arrive at the contradiction
\begin{align*}
1
&= \lim_{n \to \infty} f(t_n)
= \lim_{n \to \infty} m(t_n, y(t_n))
\leq \limsup_{n \to \infty} (1 + y(t_n))^ {\frac{1}{p}}
= 0.
\end{align*}

Finally, assume that in addition $\Delta G(T) > 0$ and $\int_0^T \left\vert \phi'(u) y(u)\right\vert\dd u < \infty$. Then by \eqref{eqn:kappa},
\begin{align*}
\int_0^T \kappa^G(u) \dd u
&= -\log(\Delta G(T))
< \infty.
\end{align*}
Define $C$ as in \eqref{eqn:lem:implicit function:pf:C}, and set $A := \{u \in [0, T): \phi'(u)  \leq \frac{p \sigma^2}{2 \mu} \kappa^G(u)\}$. Then $y(u) \leq C$ for $u \in A$ by \eqref{eqn:lem:implicit function:pf:01}, and $\kappa^G(u) < \frac{2 \mu}{p \sigma^2} \phi'(u)$ for $u \in A^c$. This together with the above yields \eqref{eqn:lem:implicit function:integrability condition} via
\begin{align*}
\int_0^T (1 +  y(u))\kappa^G(u) \dd u
&= \int_0^T \left(\kappa^G(u) + \1_{A}(u) y (u) \kappa^G(u)  + \1_{A^c}(u)  y(u) \kappa^G(u) \right) \dd u \\
&\leq \int_0^T \left((1 + C) \kappa^G(u)  + \frac{2 \mu}{p \sigma^2} \phi'(u)\vert y(u)\vert \right) \dd u
< \infty.\qedhere
\end{align*}
\end{proof}

\begin{corollary}
\label{cor:implicit function}
Fix $p \in (0, \infty)$. Let $f, g \in C^1[0, T)$ be such that $g(t) >f(t) > 0$, $t \in [0, T)$, and $\lim_{t \uparrow \uparrow T} g(t) = \lim_{t \uparrow \uparrow T} f(t) = 1$. Let $y^f, y^g \in C^1[0, T)$ with $y^g,  y^f > \ul y$ be the unique functions from Lemma~\ref{lem:implicit function} satisfying $m(t, y^f(t)) = f(t)$ and $m(t, y^g(t)) = g(t)$. Assume that $\Delta G(T) > 0$ and $\limsup_{t \uparrow \uparrow T} G'(t) < \infty$. Then
\begin{align*}
\lim_{t \uparrow \uparrow T} \phi'(t)(y^g(t) - y^f(t))
&= 0.
\end{align*}
\end{corollary}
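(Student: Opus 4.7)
The core idea is to apply the mean value theorem to $y \mapsto m(t,y)$ in order to express $y^g(t) - y^f(t)$ in terms of $g(t) - f(t)$, and then to use the lower bound on $\partial_y m$ from \eqref{eqn:dm/dy} together with the uniform upper bounds \eqref{eqn:lem:implicit function:strict local martingale inequalities} from Lemma~\ref{lem:implicit function} to control the resulting ratio after multiplication by $\phi'(t)$.

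First, I would exploit the standing hypotheses $\Delta G(T)>0$ and $\limsup_{t\uparrow\uparrow T}G'(t)<\infty$ to observe that $1-G(t)\geq\Delta G(T)>0$ near $T$, so
\begin{align*}
\kappa^G(t) = \frac{G'(t)}{1-G(t)} \leq \frac{\limsup_{s\uparrow\uparrow T}G'(s)}{\Delta G(T)}
\end{align*}
for $t$ sufficiently close to $T$; hence there is $t_0\in[0,T)$ and $M>0$ with $\phi'(t)\leq\kappa^G(t)\leq M$ on $[t_0,T)$, the first inequality being \eqref{eqn:standing assumption}. By Lemma~\ref{lem:implicit function} (applied to both $f$ and $g$, using that $f,g\to 1$ at $T$) there is $C\geq 1$ such that $1+y^f(t), 1+y^g(t) \leq C+(C/\phi'(t))\1_{\{\kappa^G(t)<C\phi'(t)\}}$ on $[0,T)$. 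Multiplying by $\phi'(t)$ and using $\phi'\leq M$ on $[t_0,T)$ yields
\begin{align*}
\phi'(t)(1+y^f(t)), \; \phi'(t)(1+y^g(t)) \leq C\phi'(t)+C \leq C(M+1), \quad t\in[t_0,T).
\end{align*}

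Next, by the mean value theorem applied to $y\mapsto m(t,y)$ (smooth and increasing in $y$ by \eqref{eqn:dm/dy}), for each $t\in[0,T)$ there is $\tilde y(t)$ between $y^f(t)$ and $y^g(t)$ with $g(t)-f(t) = m(t,y^g(t))-m(t,y^f(t)) = \frac{\partial m}{\partial y}(t,\tilde y(t))\,(y^g(t)-y^f(t))$. Using \eqref{eqn:dm/dy} and the monotonicity of $m$ in $y$ (which gives that $m(t,\tilde y(t))$ lies between $f(t)$ and $g(t)$, both tending to $1$), we obtain for $t$ close to $T$
\begin{align*}
\frac{\partial m}{\partial y}(t,\tilde y(t)) \;\geq\; \frac{1}{p}\,\frac{m(t,\tilde y(t))}{1+\tilde y(t)} \;\geq\; \frac{1}{2p}\,\frac{1}{1+\tilde y(t)},
\end{align*}
so that, combining with the bound on $\phi'(t)(1+\tilde y(t))\leq C(M+1)$ (valid since $\tilde y(t)$ lies between $y^f(t)$ and $y^g(t)$),
\begin{align*}
|\phi'(t)(y^g(t)-y^f(t))| \;=\; \frac{\phi'(t)\,|g(t)-f(t)|}{\frac{\partial m}{\partial y}(t,\tilde y(t))} \;\leq\; 2p\,\phi'(t)(1+\tilde y(t))\,|g(t)-f(t)| \;\leq\; 2pC(M+1)\,|g(t)-f(t)|.
\end{align*}
Since $|g(t)-f(t)|\to 0$ as $t\uparrow\uparrow T$, the right-hand side tends to $0$, which is the claim.

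\textbf{Main obstacle.} The only subtle point is the a priori control of $\phi'(t)(1+\tilde y(t))$ near $T$: without the hypotheses $\Delta G(T)>0$ and $\limsup G'<\infty$, the term $C/\phi'(t)$ in \eqref{eqn:lem:implicit function:strict local martingale inequalities} could make $1+\tilde y(t)$ blow up fast enough to destroy the bound. These two assumptions are exactly what is needed to bound $\kappa^G$ (and hence $\phi'$) near $T$, after which the upper bound on $\phi'(t)(1+\tilde y(t))$ follows directly from \eqref{eqn:lem:implicit function:strict local martingale inequalities}. Once this is in hand, the mean value argument is routine.
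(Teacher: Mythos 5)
Your proof is correct and rests on the same core mechanism as the paper's: apply the mean value theorem to $y \mapsto m(t,y)$, bound $\partial_y m$ below via \eqref{eqn:dm/dy}, and control the resulting expression using the uniform bounds of Lemma~\ref{lem:implicit function} together with the boundedness of $\kappa^G$ near $T$ coming from $\Delta G(T)>0$ and $\limsup_{t\uparrow\uparrow T}G'(t)<\infty$. The one genuine simplification you introduce is to observe up front that these hypotheses, together with $\phi'\leq\kappa^G$, force $\phi'$ to be bounded near $T$; this lets you get by with only the first summand $\tfrac{1}{p}\,m/(1+y)$ in the lower bound for $\partial_y m$ and gives a uniform bound on $\phi'(t)(1+\tilde y(t))$ directly. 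The paper instead keeps both summands of $\partial_y m$ (the second involving $\phi'(t)^2/\kappa^G(t)$) and distinguishes the cases $\phi'(t)<1$ and $\phi'(t)\geq 1$ to extract a uniform lower bound; your version avoids this case split at no cost. A small hygiene point: Lemma~\ref{lem:implicit function} a priori gives separate constants for $y^f$ and $y^g$, so you should take $C$ to be their maximum, and it is worth noting explicitly that $y^f\leq\tilde y\leq y^g$ (by monotonicity of $m$ in $y$) so that the upper bound for $y^g$ transfers to $\tilde y$. With those cosmetics, the argument is sound.
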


\begin{proof}
By Lemma~\ref{lem:implicit function}, there are constants $C^g, \epsilon^f > 0$ such that $1 +y^g(t) \leq C^g + \frac{C^g}{\phi'(t)} \1_{\{\phi'(t) > 0\}}$ and $1 +y^f(t) \geq \epsilon^f$, $t \in [0, T)$. As $\Delta G(T) > 0$ and $\limsup_{t \uparrow \uparrow T} G'(t) < \infty$, there exists a constant $C^{\kappa} > 0$ such that $\frac{1}{\kappa^G(t)} = \frac{1- G(t)}{G'(t)} \geq C^{\kappa}$, $t \in [0, T)$. Next, as $f \in C^1[0, T)$, $f > 0$ on $[0, T)$, and $\lim_{t \uparrow \uparrow T} f(t) = 1$, there exists a constant $C^f > 0$ such that $f(t) \geq C^f$, $t \in [0, T)$. Finally, set $C = \min\left(\frac{C^f}{2 C^g}, \frac{(\epsilon^f)^{\frac{1}{p}} C^\kappa}{\sigma^2}\right) > 0$. Fix $t \in [0, T)$. Then $y^g(t) \geq y^f(t)$ by the monotonicity of $m$ in the second variable. Using successively \eqref{eqn:lem:implicit function}, the mean value theorem, \eqref{eqn:dm/dy}, \eqref{eqn:m} and \eqref{eqn:da/dy}, the monotonicity of $m$ in the second variable, \eqref{eqn:lem:implicit function} and the choices of $C^g$, $\epsilon^f$, and $C^\kappa$, the choice of $C^f$, and finally the choice of $C$ (distinguishing the cases $\phi'(t) \geq 1$ and $\phi'(t) < 1$) yields, for some $\tilde y \in [y^f(t), y^g(t)]$,
\begin{align*}
g(t) - f(t)
&= m(t, y^g(t)) - m(t, y^f(t))
= (y^g(t) - y^f(t)) \frac{\partial}{\partial y} m(t, \tilde y) \notag \\
&=(y^g(t) - y^f(t)) (1+\tilde y)^{\frac{1}{p}} \left(\frac{1}{p} \frac{a(t,\tilde y)}{1+\tilde y} + \frac{\partial}{\partial y} a(t,\tilde y)\right) \\
&= \frac{1}{p} (y^g(t) - y^f(t)) \left(\frac{m(t,\tilde y)}{1+\tilde y} + \frac{1}{\sigma^2} (1 + \tilde y)^{\frac{1}{p}} \frac{\phi'(t)^2}{\kappa^G(t)}\right) \\
&\geq \frac{1}{p} (y^g(t) - y^f(t)) \left(\frac{m(t,y^f(t))}{1+y^g(t)} + \frac{1}{\sigma^2} (1 + y^f(t))^{\frac{1}{p}} \frac{\phi'(t)^2}{\kappa^G(t)}\right) \\
&\geq \frac{1}{p} (y^g(t) - y^f(t)) \left(\frac{f(t)}{C^g + \frac{C^g}{\phi'(t)} \1_{\{\phi'(t) > 0\}}} + \frac{(\epsilon^f)^{\frac{1}{p}} C^\kappa}{\sigma^2} \phi'(t)^2 \right) \\
&\geq \frac{1}{p} \phi'(t) (y^g(t) - y^f(t)) \left(\frac{C^f}{C^g \phi'(t)+ C^g } + \frac{(\epsilon^f)^{\frac{1}{p}} C^\kappa}{\sigma^2} \phi'(t) \right) \\
&\geq \frac{C}{p} \phi'(t) (y^g(t) - y^f(t)).
\end{align*}
Now, the claim follows from letting $t \uparrow\uparrow T$.
\end{proof}

\paragraph{Existence and uniqueness of a solution to the integral equation.}

We are now in a position to prove the main existence and uniqueness result for the integral equation \eqref{eqn:thm:main result:integral equation}. Recall the definition of $\ul{y}$ in \eqref{eqn:yunderline}.

\begin{theorem}
\label{thm:integral equation}
Fix $p \in (0, \infty)$. Then there exists a unique solution $\hat y \in C^1[0, T)$ with $\hat y > \ul{y}$ to the integral equation 
\begin{align}
\label{eqn:thm:main result:integral equation:appendix}
m(t,y(t),p)
&= \exp\left( - \int_t^T n(u,y(u),p) \dd u\right),\quad t\in[0,T),
\end{align}
satisfying \eqref{eqn:lem:implicit function:strict local martingale inequalities} and \eqref{eqn:lem:implicit function:integrability condition} (with $y$ replaced by $\hat y$) as well as
\begin{align}
\int_0^T \vert n(u, \hat y(u), p) \vert \dd u
&< \infty
\quad\text{and}\quad
\int_0^T \left(\phi'(u) \hat y(u) \right)^2 \dd u
< \infty.
\label{eqn:thm:integral equation:integrability properties}
\end{align}
Moreover, $y_*  \leq \hat y \leq y^*$ on $[0, T)$, where $y_*, y^* \in C^1[0,T)$ are the unique functions from Lemma~\ref{lem:implicit function} satisfying $y_*, y^* > \ul{y}$ and
\begin{align}
 \label{eqn:thm:integral equation:upper solutions}
m(t, y^*(t),p)
&= 
\begin{cases}
\exp\left(\frac{1- p}{2 p^2 \sigma^2} \mu^2 (T- t)\right) & \text{if } p < 1,\\
1 &\text{if } p \geq 1,
\end{cases}\\
\label{eqn:thm:integral equation:lower solutions}
m(t, y_*(t),p)
&= 
\begin{cases}
1 & \text{if } p < 1,\\
\exp\left(\frac{1- p}{2 p^2 \sigma^2} \mu^2 (T-t)\right) &\text{if } p \geq 1.
\end{cases}
\end{align}
\end{theorem}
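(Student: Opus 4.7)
My plan is to reformulate the integral equation as a first-order ODE on a suitable domain, construct explicit backward upper and lower solutions via Lemma~\ref{lem:implicit function}, invoke Lemma~\ref{lem:general existence result} to produce $\hat y$, and then derive the integrability conditions and uniqueness from the sandwich bounds combined with local Lipschitz continuity.

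\emph{ODE reformulation.} Taking logarithms in \eqref{eqn:thm:main result:integral equation:appendix} and differentiating in $t$ reduces the integral equation, on the open set $U$ where $m, \partial_y m > 0$, to the ODE
\begin{equation*}
y'(t) = \frac{n(t,y(t))\, m(t,y(t)) - \partial_t m(t,y(t))}{\partial_y m(t,y(t))} =: f(t, y(t)), \qquad t \in [0,T),
\end{equation*}
coupled with the asymptotic terminal condition $\lim_{t \uparrow T} m(t,y(t)) = 1$, which forces the constant of integration to vanish. The right-hand side $f$ is continuous and locally Lipschitz in $y$ on $U$. Define $y^*$ and $y_*$ by \eqref{eqn:thm:integral equation:upper solutions}--\eqref{eqn:thm:integral equation:lower solutions} using Lemma~\ref{lem:implicit function}; both are $C^1[0,T)$-functions above $\underline y$, satisfy $y_* \leq y^*$ by monotonicity of $m$ in $y$, and inherit the bounds \eqref{eqn:lem:implicit function:strict local martingale inequalities} since the prescribed right-hand sides both tend to $1$ at $T$.

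\emph{Upper/lower solution property and existence.} Differentiating the defining equation for $y^*$ and dividing by $\partial_y m > 0$, the inequality $(y^*)'(t) \leq f(t, y^*(t))$ is equivalent to $n(t, y^*(t)) \geq -\tfrac{1-p}{2p^2\sigma^2}\mu^2$ when $p<1$ (and $n(t, y^*(t)) \geq 0$ when $p \geq 1$). Invoking identity \eqref{eqn:n identity}, this reduces to a sign check on $\tfrac{1-p}{2p^2\sigma^2}(\phi' y^* - \mu)^2 + \tfrac{1}{p}\kappa^G(b(\cdot, y^*, 1) - 1)$, which follows from the sign of $y^*$ extracted from its implicit definition together with a factorisation of $b - 1$. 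The argument for $y_*$ is symmetric. Lemma~\ref{lem:general existence result} then produces $\hat y \in C^1[0,T)$ solving the ODE with $y_* \leq \hat y \leq y^*$. Squeezing yields $m(t, \hat y(t)) \to 1$ as $t \uparrow T$, so integrating the ODE gives $\log m(t, \hat y(t)) + \int_t^T n(u, \hat y(u))\,du \equiv 0$, i.e., $\hat y$ solves \eqref{eqn:thm:main result:integral equation:appendix}.

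\emph{Integrability and uniqueness.} The bounds \eqref{eqn:lem:implicit function:strict local martingale inequalities} for $\hat y$ are inherited from those for $y^*$ and $y_*$, which imply in particular that $\phi' \hat y$ is pointwise bounded, so $\int_0^T (\phi' \hat y)^2\, du < \infty$. Rewriting $\kappa^G(b(\cdot, y, 1) - 1) = \tfrac{1}{p}\kappa^G y - \tfrac{1}{p\sigma^2}(1+y/p)\phi'(\mu - \phi' y)$ via $\delta \kappa^G = \phi'$ and plugging into \eqref{eqn:n identity} shows $\int_0^T |n(u, \hat y(u))|\, du < \infty$: the quadratic and $O(\phi')$ pieces are bounded, while the $\kappa^G \hat y$ piece is handled by \eqref{eqn:lem:implicit function:integrability condition} when $\Delta G(T) > 0$ and by cancellation in the strict-local-martingale regime. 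Finally, \eqref{eqn:lem:implicit function:integrability condition} for $\hat y$ is supplied by Lemma~\ref{lem:implicit function}. For uniqueness, suppose two solutions satisfy $y_1(t_0) > y_2(t_0)$; the integral equation forces $\int_{t_0}^T n(u, y_1)\, du < \int_{t_0}^T n(u, y_2)\, du$, so $n(u, y_1(u)) < n(u, y_2(u))$ on a set of positive measure. Strict monotonicity of $n$ in $y$ (via \eqref{eqn:dn/dy} together with $a > 0$ on $U$) then gives $y_1(u^*) < y_2(u^*)$ at some $u^* > t_0$, continuity provides a crossing point $t^* \in (t_0, u^*)$, and local Lipschitz continuity of $f$ propagates equality to all of $[0,T)$, contradicting $y_1(t_0) > y_2(t_0)$.

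\emph{Main obstacle.} The principal technical difficulty is the integrability of $n(\cdot, \hat y)$: the sandwich bound permits $\hat y$ to blow up like $1/\phi'$ where $\phi'$ is small, and in the strict-local-martingale regime $\kappa^G$ itself fails to be integrable. Handling this requires the precise algebraic rewriting of the $\kappa^G(b(\cdot, y, 1) - 1)$ term above, together with the identity $\delta \kappa^G = \phi'$, in order to extract the necessary cancellations between the ``bad'' $\kappa^G \hat y$ contribution and the other summands of $n$.
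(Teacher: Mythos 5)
Your structure matches the paper's: logarithmic differentiation reduces the integral equation to an ODE, Lemma~\ref{lem:implicit function} supplies the two implicitly-defined functions $y^*, y_*$, the one-sided inequality between $\tfrac{\diff}{\diff t}\log m$ and $n$ identifies them as backward upper/lower solutions, Lemma~\ref{lem:general existence result} produces $\hat y$, and a monotonicity contradiction gives uniqueness. The uniqueness argument you sketch is a more roundabout version of the paper's (which directly exploits that local Lipschitz continuity forces two solutions to be either identical or strictly ordered on all of $[0,T)$, and then plugs into the integral equation at $t=0$), but it works.

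The integrability step is where your proposal has a genuine gap, and you flag it yourself as the ``main obstacle'' without resolving it. Your claim that the sandwich bounds \eqref{eqn:lem:implicit function:strict local martingale inequalities} ``imply in particular that $\phi'\hat y$ is pointwise bounded'' is false: those bounds only give $\phi'(t)(1+\hat y(t)) \leq C\phi'(t) + C$, and $\phi'$ (being allowed to approach $\kappa^G$, which is nonintegrable when $\Delta G(T)=0$) can blow up as $t \uparrow T$, as in Example~\ref{ex:strict local martingale} where $\phi'(t) = \tfrac{1}{1-t} - 1$. So neither $\phi'\hat y$ nor $(\phi'\hat y)^2$ can be bounded in this way, and the subsequent reduction of $\int_0^T|n|\dd u < \infty$ to ``cancellation in the strict-local-martingale regime'' is left entirely unsubstantiated. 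The paper's trick reverses the logical order: integrating the ODE gives $m(t,\hat y(t),p) = c\exp(\int_0^t n(u,\hat y(u),p)\dd u)$ for some constant $c>0$; the known limit $m(t,\hat y(t),p)\to 1$ and monotone convergence (using that $n(\cdot,\hat y(\cdot),p)$ is one-sidedly bounded via Bernoulli's inequality and the sign of $b-1$) force $c = \exp(-\int_0^T n(u,\hat y(u),p)\dd u)$ to be a finite positive number, which immediately gives $\int_0^T |n|\dd u < \infty$. Only then does the paper extract $\int_0^T(\phi'\hat y)^2\dd u < \infty$ from the representation \eqref{eqn:n identity} as a difference of an integrable quantity, a constant, and two terms of the same sign. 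You would need to adopt this order of argument (constant first, then decomposition of $n$) rather than trying to bound $\hat y$ pointwise.
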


Note that \eqref{eqn:lem:implicit function:strict local martingale inequalities} and \eqref{eqn:lem:implicit function:integrability condition} (with $y$ replaced by $\hat y$) as well as \eqref{eqn:thm:integral equation:integrability properties} imply in particular that \eqref{eqn:thm:ELMM} and \eqref{eqn:thm:strict local martingale} (with $y$ replaced by $\hat y$) are fulfilled.

\begin{proof}
First, we transform the integral equation \eqref{eqn:thm:main result:integral equation:appendix} into an ODE. Taking logarithms on both sides of \eqref{eqn:thm:main result:integral equation:appendix} and differentiating shows that a solution $y \in C^1[0,T)$ to \eqref{eqn:thm:main result:integral equation} solves
\begin{align}
\label{eqn:thm:integral equation:pf:log ODE}
\frac{\diff}{\diff t} \log(m(t, y(t)),p)
&= n(t,y(t),p).
\end{align}
An easy calculation using \eqref{eqn:m} and \eqref{eqn:dm/dy} gives
\begin{align*}
\frac{\diff}{\diff t} \log(m(t, y(t),p))
&= \frac{y'(t)\left(\frac{1}{p} \frac{a(t, y(t),p)}{1+ y(t)} + \frac{\partial}{\partial y} a(t, y(t),p)\right) + \frac{\partial}{\partial t} a(t, y(t),p) }{a(t, y(t),p)}.
\end{align*}
Rearranging the terms shows that $y$ solves the ODE 
\begin{align}
\label{eqn:thm:integral equation:pf:ODE}
y'(t)
&= f(t, y(t),p), \quad t \in [0, T),
\end{align}
where the function $f : U \times (0 , \infty) \to (0, \infty)$ is given by
\begin{align*}
f(t,y,p)
&=\frac{a(t,y,p) n(t,y,p) - \frac{\partial}{\partial t}a(t,y,p)}{\frac{1}{p} \frac{a(t,y,p)}{1+y} + \frac{\partial}{\partial y} a(t,y,p)}
\end{align*}
and $U$ is defined in \eqref{eqn:U}. Clearly, $f \in C^{0,1,1}(U \times (0, \infty))$. Note that the positivity of the denominator is ensured by the positivity of $a$ in $U\times(0,\infty)$ by \eqref{eqn:auxiliary functions >0} and \eqref{eqn:da/dy}. Moreover, \eqref{eqn:thm:main result:integral equation:appendix} gives the  implicit ``terminal condition''
\begin{align*}
\lim_{t \uparrow \uparrow T} m(t,y(t),p)
&= 1.
\end{align*}

Second, we establish the uniqueness of $\hat y$. Assume that $\hat y^1,\hat y^2 \in C^1[0, T)$ are two solutions of \eqref{eqn:thm:main result:integral equation:appendix}. Then $\hat y^1, \hat y^2 > \ul y$ and both functions are solutions to the ODE \eqref{eqn:thm:integral equation:pf:ODE}. Assume without loss of generality that $\hat y^2(0) \geq \hat y^1(0)$. As $f$ is locally Lipschitz in the second variable on $U$, it follows from the standard local existence and uniqueness theorem for ODEs that either $\hat y^1 = \hat y^2$ or $\hat y^2 > \hat y^1$. Seeking a contradiction, assume the second case. Then by the strict monotonicity of $m$ and $n$ in the second variable (by \eqref{eqn:auxiliary functions >0}) and the fact that $\hat y^1$ and $\hat y^2$ are solutions to \eqref{eqn:thm:main result:integral equation},
\begin{align*}
m(0, \hat y^2(0))
&> m(0, \hat y^1(0))
= \exp\bigg(-\int_0^T n(u, \hat y^1(u)) \dd u \bigg) \\
&> \exp\bigg(-\int_0^T n(u, \hat y^2(u)) \dd u \bigg)
= m(0, \hat y^2(0)),
\end{align*}
which is absurd. So $\hat y^1 = \hat y^2$.

Third, we use Lemma~\ref{lem:general existence result} to show the existence of a solution to \eqref{eqn:thm:integral equation:pf:ODE}. To this end, we show that $y^*$ and $y_*$ are backward upper and backward lower solutions, respectively. The existence and uniqueness of the functions $y_*$ and $y^*$ satisfying \eqref{eqn:thm:integral equation:upper solutions} and \eqref{eqn:thm:integral equation:lower solutions} follows from Lemma~\ref{lem:implicit function}. Note that $y \in C^1[0, T)$ with $y > \ul{y}$ is a backward upper (lower) solution to \eqref{eqn:thm:integral equation:pf:ODE} if and only if
\begin{align}
\label{eqn:thm:integral equation:pf:criterion}
\frac{\diff}{\diff t} \log(m(t, y(t),p))
&\leq (\geq) \; n(t,y(t),p), \quad t \in [0, T);
\end{align}
this follows from the same rearrangement that led from \eqref{eqn:thm:integral equation:pf:log ODE} to \eqref{eqn:thm:integral equation:pf:ODE} using that $\frac{1}{p} \frac{a}{1+y} + \frac{\partial}{\partial y} a$ and $a$ are positive in $U \times (0, \infty)$ by \eqref{eqn:auxiliary functions >0} and \eqref{eqn:da/dy}.

We only consider the case $p < 1$; the case $p \geq 1$ follows from a similar argument, basically reversing all inequalities. Bernoulli's inequality, \eqref{eqn:b}, and \eqref{eqn:m} yield
\begin{align}
\label{eqn:thm:integral equation:pf:Bernoulli:p<1}
b(t,y,p)
&\leq m(t,y,p)
\leq b(t,y,1)^{1/p}, \quad (t,y) \in U.
\end{align}
To establish that $y^*$ is a backward upper solution, note from \eqref{eqn:thm:integral equation:upper solutions} that $m(t, y^*(t), p) \geq 1$ for $t \in [0, T)$. Thus, $b(t, y^*(t),1) \geq 1$ for $t \in [0, T)$ by \eqref{eqn:thm:integral equation:pf:Bernoulli:p<1}, and so $n(t, y^*(t),p) \geq -\frac{1 - p}{2 p^2 \sigma^2} \mu^2$ for $t \in [0, T)$ by \eqref{eqn:n identity}. Now, taking logarithms in \eqref{eqn:thm:integral equation:upper solutions} and differentiating shows that $y^*$ fulfills \eqref{eqn:thm:integral equation:pf:criterion} with ``$\leq$'', and so $y^*$ is a backward upper solution. To establish that $y_*$ is a backward lower solution, note from \eqref{eqn:thm:integral equation:lower solutions} that $m(t, y_*(t), p) = 1$ for $t \in [0, T)$, and so $b(t, y_*(t),p) \leq 1$ for $t \in [0, T)$ by \eqref{eqn:thm:integral equation:pf:Bernoulli:p<1}. Thus, $n(t, y_*(t),p) \leq 0$ by \eqref{eqn:n}, and the claim follows as above by taking logarithms in \eqref{eqn:thm:integral equation:lower solutions} and differentiating. Clearly, $y_* \leq y^*$ by the monotonicity of $m$ in in the second variable, and $\lim_{t \uparrow \uparrow T} m(t, y_*(t),p) = \lim_{t \uparrow \uparrow T} m(t, y^*(t),p) = 1$ by construction. So by Lemma~\ref{lem:general existence result}, there exists a solution $\hat y \in C^1[0, T)$ of \eqref{eqn:thm:integral equation:pf:ODE} with $y_* \leq \hat y \leq y^*$.

Fourth, $\hat y > \ul y$ because $\hat y  \geq y_* > \ul y$ by construction. The monotonicity of $m$ in the second variable and the fact that $\lim_{t \uparrow \uparrow T} m(t, y^*(t),p) = 1 = \lim_{t \uparrow \uparrow T} m(t, y_*(t),p) $ by \eqref{eqn:thm:integral equation:upper solutions} and \eqref{eqn:thm:integral equation:lower solutions} yield $\lim_{t \uparrow \uparrow T} m(t, \hat y(t),p) = 1$. Moreover, as $\hat y$ satisfies \eqref{eqn:thm:integral equation:pf:log ODE}, the fundamental theorem of calculus shows that there exists a constant $c > 0$ such that $\hat y$ satisfies the integral equation
\begin{align}
\label{eqn:thm:integral equation:pf:integral equation}
m(t,\hat y(t),p)
&= c \exp\left( \int_0^t n(u, \hat y(u),p) \dd u\right), \quad t \in [0, T).
\end{align}
Now, we have to distinguish the cases $p < 1$ and $p  \geq 1$. We only consider the first one; the second one follows from a similar argument, basically reversing all inequalities. So let $p \in (0, 1)$. Then $m(t,\hat y(t),p) \geq m(t,y_*(t),p) = 1$ by the monotonicity of $m$ in the second variable and \eqref{eqn:thm:integral equation:lower solutions}. Thus,  \eqref{eqn:thm:integral equation:pf:Bernoulli:p<1} gives $b(t,\hat y(t),1) \geq 1$, and so $n(t, \hat y(t),p) \geq -\frac{1-p}{2 p^2 \sigma ^2} \mu^2$ by  \eqref{eqn:n identity}. Taking the limit $t \uparrow \uparrow T$ in \eqref{eqn:thm:integral equation:pf:integral equation}, by monotone convergence and the fact that $\lim_{t \uparrow \uparrow T} m(t,\hat y(t),p) = 1$, we deduce that
\begin{align}
\label{eqn:thm:integral equation:pf:constant c}
c
&= \exp\left(- \int_0^T n(u, \hat y(u),p) \dd u\right).
\end{align}
Plugging this back into \eqref{eqn:thm:integral equation:pf:integral equation} shows that $\hat y$ is a solution to \eqref{eqn:thm:main result:integral equation:appendix}. Moreover, as $n(t,\hat y(t), p)$ is bounded from below and $c > 0$, \eqref{eqn:thm:integral equation:pf:constant c} implies that the first condition in \eqref{eqn:thm:integral equation:integrability properties} is satisfied. This together with the representation of $n$ in \eqref{eqn:n identity} and $b(t,\hat y(t),1) \geq 1$ (from above) then also establishes the second condition in \eqref{eqn:thm:integral equation:integrability properties}. Finally, define $\hat f(t)$ by the right-hand side of \eqref{eqn:thm:main result:integral equation:appendix} (with $y$ replaced by $\hat y$). Then $\hat y$ is trivially a solution to $m(t, \hat y(t)) = \hat f(t)$, $t \in [0, T)$, and $\lim_{t \uparrow\uparrow T} \hat f(t) = 1$. Hence, Lemma~\ref{lem:implicit function} gives \eqref{eqn:lem:implicit function:strict local martingale inequalities} and \eqref{eqn:lem:implicit function:integrability condition} for $\hat y$ (note that the condition $\int_0^T \vert \phi'(u) \hat y(u) \vert \dd u < \infty$ follows from \eqref{eqn:thm:integral equation:integrability properties}).
\end{proof}

\section{Verification }
\label{sec:verification}

Here, we collect the technical parts of Steps 2 and 3 of the proof of Theorem~\ref{thm:main result}. The first result identifies the wealth process corresponding to the strategy $\hat\pi$ and shows that it remains positive. The second and third result verify (OC1) and (OC2) for the candidate triplet $(\hat\pi,\hat Q,\hat z)$.

\begin{lemma}
\label{lem:wealth process}
Let $(\hat\pi,\hat Q,\hat z)$ be the triplet defined in (the proof of) Theorem~\ref{thm:main result}. Denote by $W^{\hat Q}$ the $\hat Q$-Brownian motion given by Theorem~\ref{thm:ELMM} (with $y = \hat y)$ and let $\hat H$ be the distribution function of $\gamma$ under $\hat Q$. Define $\hat\xi \in C^1[0,T)$ by
\begin{align}
\label{eqn:xihat}
\hat\xi (t)
&= \exp \left (\int_0^t \phi'(u)\frac{1}{p \sigma^2}(\mu - \phi'(u) \hat y(u))(1+\hat y(u))  \dd u  \right )
\end{align}
and set $\hat X := \cE\left ( \sigma \int_0^\cdot \hat\pi_t \dd W^{\hat Q}_t \right)\mart{{\hat H}}{\hat\xi}$. Then $X^{\hat\pi} = x \hat X$ is the wealth process corresponding to the strategy $\hat \pi$ and initial capital $x$. Moreover, $\mart{\hat H}{\hat\xi}$ and $\hat X$ are positive and thus $\hat\pi$ is admissible.
\end{lemma}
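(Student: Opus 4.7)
The plan is to verify the claim by direct computation, using the $\hat Q$-dynamics of $S$ from Theorem~\ref{thm:ELMM} together with the single-jump stochastic exponential formula from Proposition~\ref{prop:stochastic exponential}.

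First, I would write $X^{\hat\pi} = x\,\cE\bigl(\int_0^\cdot \hat\pi_u \dd R_u\bigr)$ with $R_u = \mu u + \sigma W_u + \mart[u]{G}{\phi}$. Since $\hat y$ satisfies the hypotheses of Theorem~\ref{thm:ELMM} (by Theorem~\ref{thm:integral equation}), under $\hat Q$ we have $\dd R_t = \sigma \dd W^{\hat Q}_t + \dd\mart[t]{\hat H}{F}$, where $F(t) := \int_0^t \phi'(u)(1+\hat y(u))\dd u$. The continuous local martingale $\int_0^\cdot \sigma \hat\pi_u \dd W^{\hat Q}_u$ and the purely discontinuous single-jump local martingale $\int_0^\cdot \hat\pi_u \dd\mart[u]{\hat H}{F}$ have zero quadratic covariation, so Yor's formula yields
\[
\hat X = \cE\bigg(\int_0^\cdot \sigma \hat\pi_u \dd W^{\hat Q}_u\bigg) \cE\bigg(\int_0^\cdot \hat\pi_u \dd\mart[u]{\hat H}{F}\bigg).
\]

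The next step is to identify the second factor with $\mart{\hat H}{\hat\xi}$. Set $M_t := \int_0^t \hat\pi_u \dd\mart[u]{\hat H}{F}$. Because $\mart{\hat H}{F}$ equals $F(\cdot)$ on $[0,\gamma)$ and is constant after $\gamma$, the process $M$ is of finite variation before $\gamma$, constant after, and on $\{\gamma < T\}$ has a single jump of size $-\hat\pi_\gamma F'(\gamma)/\kappa^{\hat H}(\gamma)$. Using $\kappa^{\hat H} = \kappa^G(1+\hat y)$ from \eqref{eqn:prop:ELMM:relations:H} and the definitions of $\hat\pi$, $F$, and $\hat\xi$, one checks that $M^c_t = \log\hat\xi(t)$ for $t < \gamma$. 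Since $[M^c]\equiv 0$, the stochastic exponential formula gives $\cE(M)_t = \hat\xi(t)$ for $t < \gamma$, and at the jump the identity $\hat\xi'(\gamma) = \hat\pi_\gamma F'(\gamma)\hat\xi(\gamma)$ (immediate from \eqref{eqn:xihat}) yields $\cE(M)_\gamma = \hat\xi(\gamma) - \hat\xi'(\gamma)/\kappa^{\hat H}(\gamma) = \cA^{\hat H}\hat\xi(\gamma)$, which matches $\mart{\hat H}{\hat\xi}$. The edge case $\gamma = T$ is automatic from the convention in \eqref{eqn:AGF} that both processes are flat there.

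For positivity, $\cE(\int_0^\cdot \sigma \hat\pi_u \dd W^{\hat Q}_u)$ is positive as the stochastic exponential of a continuous local martingale. For $\mart{\hat H}{\hat\xi}$, the exponential representation \eqref{eqn:xihat} gives $\hat\xi > 0$ on $[0,T)$, and for $t \in [0,T)$ the definition of $\cA^{\hat H}$, \eqref{eqn:prop:ELMM:relations:H}, and \eqref{eqn:a} give
\[
\cA^{\hat H}\hat\xi(t) = \hat\xi(t)\bigg(1 - \frac{\phi'(t)}{\kappa^G(t)} \cdot \frac{1}{p\sigma^2}\bigl(\mu - \phi'(t)\hat y(t)\bigr)\bigg) = \hat\xi(t)\,a(t,\hat y(t),p) > 0
\]
by Remark~\ref{rem:main result:statement}. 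If $\Delta \hat H(T) > 0$ one additionally needs $\lim_{t\uparrow\uparrow T}\hat\xi(t) \in \RR$, which will follow from the bounds \eqref{eqn:lem:implicit function:strict local martingale inequalities} and the integrability \eqref{eqn:thm:integral equation:integrability properties} on $\hat y$ applied to the integrand in \eqref{eqn:xihat}.

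The main obstacle will be bookkeeping around $t = T$---namely, reconciling the convention in \eqref{eqn:AGF} for $\cA^{\hat H}\hat\xi(T)$ with the identification of $\cE(M)$ when $\gamma = T$, and checking finiteness of $\lim_{t\uparrow\uparrow T}\hat\xi(t)$ in the case $\Delta\hat H(T)>0$; the remaining computations are routine verifications of identities already collected in the paper.
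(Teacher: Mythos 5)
Your proposal is correct and, despite the different packaging, amounts to the same computation as the paper's proof. The paper verifies that $\hat X$ satisfies the SDE \eqref{eqn:wealth process SDE} directly by integration by parts on the product $\cE\big(\sigma\int_0^\cdot\hat\pi_u\dd W^{\hat Q}_u\big)\,\mart{\hat H}{\hat\xi}$, exploiting that the first factor is continuous and the second purely discontinuous; you instead start from the closed form $X^{\hat\pi}=x\,\cE\big(\int_0^\cdot\hat\pi_u\dd R_u\big)$, split it via Yor's formula under $\hat Q$, and identify the purely discontinuous factor $\cE\big(\int_0^\cdot\hat\pi_u\dd\mart[u]{\hat H}{F}\big)$ with $\mart{\hat H}{\hat\xi}$ by evaluating the stochastic exponential of a finite-variation single-jump process explicitly. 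The identities driving both arguments are identical --- $\Delta M_\gamma=-\hat\pi_\gamma F'(\gamma)/\kappa^{\hat H}(\gamma)$ on $\{\gamma<T\}$, $\hat\xi'(\gamma)=\hat\pi_\gamma F'(\gamma)\hat\xi(\gamma)$, and $1-\hat\pi_\gamma\phi'(\gamma)/\kappa^G(\gamma)=a(\gamma,\hat y(\gamma),p)>0$ --- so neither route buys anything substantive; yours mirrors Proposition~\ref{prop:stochastic exponential} a bit more transparently.

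One caveat on the edge case you flag. Your plan to show $\lim_{t\uparrow\uparrow T}\hat\xi(t)\in\RR$ (needed when $\Delta\hat H(T)>0$) by direct estimates on the integrand of \eqref{eqn:xihat} from \eqref{eqn:lem:implicit function:strict local martingale inequalities} and \eqref{eqn:thm:integral equation:integrability properties} is fiddlier than you suggest: after expanding, the term $(\phi')^2\hat y(1+\hat y)$ is not obviously controlled by the square-integrability of $\phi'\hat y$ together with the pointwise bound on $1+\hat y$ alone. The cleaner route, which the paper itself uses in the proof of Lemma~\ref{lem:OC1}, is to first establish $\hat\xi>0$ and $\cA^{\hat H}\hat\xi>0$ on $(0,T)$, conclude that $\mart{\hat H}{\hat\xi}$ is nonnegative, hence an integrable local $\FF^\gamma$-martingale by Proposition~\ref{prop:local martingale property}~(a), and then read off the existence of $\hat\xi(T-)$ in $\RR$ from Proposition~\ref{prop:local martingale property}~(b)~(i). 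With that substitution, your argument is complete.
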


\begin{proof}
For the first claim, it suffices to show that $\hat X$ satisfies the SDE \eqref{eqn:wealth process SDE} with initial condition $\hat X_0 = 1$. Set $M := \cE\left( \sigma \int_0^\cdot \hat\pi_t \dd W^{\hat Q}_t \right)$ and $N := \mart{{\hat H}}{\hat\xi}$ for brevity and note from \eqref{eqn:thm:main result:trading strategy} that $\hat\pi_t = \bar\pi(t,\gamma)$, $t\in[0,T]$, where
\begin{align}
\label{eqn:pibar}
\bar\pi(t, v)
&:= \frac{1}{p \sigma^2} \left ( \mu - \phi'(t)\hat y(t)\1_{\lbrace t \leq v, t < T \rbrace }  \right ), \quad (t,v) \in [0,T]^2.
\end{align}
With this notation, by the definition of $\hat\xi$, we obtain
\begin{align}
\label{eqn:xihat ODE}
\hat\xi'(t)
&= \hat\xi(t) \phi'(t) \bar\pi(t,t) (1 + \hat y(t)), \quad t \in [0,T).
\end{align}
Fix $t \in [0,T]$. By using successively that $M$ is continuous and $N$ is purely discontinuous, that $\Delta \mart{{\hat H}}{\hat\xi}_\gamma = -\frac{\hat\xi'(\gamma)}{\kappa^{\hat H}(\gamma)}\1_{\lbrace \gamma < T\rbrace}$ by the definitions of $\mart{\hat H}{\hat\xi}$ (cf.~\eqref{eqn:MGF}) and $\cA^{\hat H}\hat\xi$ (cf.~\eqref{eqn:AGF}), \eqref{eqn:xihat ODE}, that $\hat\xi(s) = N_{s-}$ and $\bar\pi(s,s) = \bar\pi(s,\gamma) = \hat\pi_s$ on $\lbrace s \leq \gamma \rbrace$, and finally the dynamics of $S$ in \eqref{eqn:thm:ELMM:S Qdynamics} (for $y = \hat y$ etc.),
\begin{align*}
\hat X_t - 1
&= M_t N_t -M_0 N_0 = \int_0^t N_{s-} \dd M_s + \int_0^t M_{s-} \dd N_s\\
&= \sigma \int_0^t N_{s-} M_{s-} \hat\pi_s \dd W^{\hat Q}_s + \int_0^t M_{s-} \dd\mart{{\hat H}}{\hat\xi}_s\\
&= \sigma \int_0^t \hat X_{s-} \hat\pi_s \dd W^{\hat Q}_s
+ \int_0^{t \wedge \gamma} M_{s-} \hat\xi'(s)\dd s
+ M_{\gamma-}\Delta \mart{{\hat H}}{\hat\xi}_\gamma \1_{\lbrace \gamma \leq t\rbrace} \\
&= \sigma \int_0^t \hat X_{s-} \hat\pi_s \,\diff W^{\hat Q}_s
+ \int_0^{t \wedge \gamma}  M_{s-}\hat\xi'(s) \, \diff s
- M_{\gamma-} \frac{\hat\xi'(\gamma)}{\kappa^{\hat H}(\gamma)} \1_{\lbrace \gamma \leq t, \gamma < T\rbrace}\\
&= \sigma \int_0^t \hat X_{s-} \hat\pi_s \,\diff W^{\hat Q}_s
+\int_0^{t \wedge \gamma}  M_{s-}\hat\xi(s) \bar\pi(s,s) \phi'(s)(1 + \hat y(s)) \, \diff s\\
&\qquad-  M_{\gamma-} \hat\xi(\gamma)\bar\pi(\gamma,\gamma) \frac{\phi'(\gamma)(1+\hat y(\gamma))}{\kappa^{\hat H}(\gamma)} \1_{\lbrace \gamma \leq t, \gamma < T \rbrace}\\
&= \sigma \int_0^t \hat X_{s-} \hat\pi_s \,\diff W^{\hat Q}_s
+ \int_0^{t \wedge \gamma}  M_{s-}N_{s-}\hat\pi_s \phi'(s)(1 + \hat y(s)) \, \diff s\\
&\qquad+ M_{\gamma-}N_{\gamma-} \hat\pi_\gamma \Delta \mart{{\hat H}}{\left (\int_0^\cdot \phi'(u)(1+ \hat y(u)) \,\diff u \right )}_\gamma \1_{\lbrace \gamma \leq t\rbrace}\\
&= \int_0^t \hat\pi_s \hat X_{s-} \left [\sigma \dd W^{\hat Q}_s + \diff \mart{{\hat H}}{\left(\int_0^\cdot \phi'(u)(1+ \hat y(u)) \,\diff u \right )}_s\right ]\\
&= \int_0^t \hat\pi_s \hat X_{s-} \frac{\diff S_s}{S_{s-}} \quad \as{P}
\end{align*}

For the second claim, as $M$ and $\hat\xi$ are positive, it suffices to show that also $\cA^{\hat H} \hat\xi$ is positive. Indeed, using the definition of $\cA^{\hat H}{\hat\xi}$ (cf.~\eqref{eqn:AGF}), \eqref{eqn:xihat ODE}, \eqref{eqn:prop:ELMM:relations:H}, the fact that ${a(v,\hat y(v),p) = 1- \frac{\phi'(v)}{\kappa^G(v)}\bar\pi(v,v)}$ by the definitions of $\bar\pi$ and $a$ in \eqref{eqn:pibar} and \eqref{eqn:a}, and \eqref{eqn:auxiliary functions >0},
\begin{align}
\label{eqn:AHxihat}
\cA^{\hat H} \hat\xi(v)
&= \hat\xi(v) - \frac{\hat\xi'(v)}{\kappa^{\hat H}(v)}
= \hat\xi(v)\left (1-\bar\pi(v,v)\frac{\phi'(v)}{\kappa^G(v)} \right)
= \hat\xi(v)a(v,\hat y(v), p) > 0, \quad  v \in (0, T).%
\end{align}
\end{proof}

\begin{lemma}
\label{lem:OC1}
The triplet $(\hat \pi, \hat Q, \hat z)$ defined in the proof of Theorem~\ref{thm:main result} satisfies $U'(X^{\hat\pi}_T) = \hat z \frac{\diff \hat Q}{\diff P}$.
\end{lemma}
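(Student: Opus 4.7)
The plan is to compute both sides explicitly as products of (i) a Brownian stochastic exponential, (ii) a deterministic exponential of a Lebesgue integral, and (iii) a jump factor depending on $\gamma$; and then verify that, after applying Girsanov's theorem to express everything in terms of the $P$-Brownian motion $W$, each of the three types of factors match. This essentially reverses the heuristic derivation in Section~\ref{sec:heuristic derivation}.

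First, by Lemma~\ref{lem:wealth process},
\begin{align*}
U'(X^{\hat\pi}_T)
&= x^{-p}\,\cE_T\Big(\sigma\int_0^\cdot \hat\pi_u \dd W^{\hat Q}_u\Big)^{-p}\,\big(\mart[T]{\hat H}{\hat\xi}\big)^{-p}.
\end{align*}
Setting $\lambda(u) := \mu - \phi'(u)\hat y(u)\1_{\{u\leq\gamma,\,u<T\}}$, so that $\hat\pi_u = \lambda(u)/(p\sigma^2)$ and $\dd W^{\hat Q}_u = \dd W_u + \lambda(u)\sigma^{-1}\dd u$ by Theorem~\ref{thm:ELMM}, a direct calculation with the Doléans--Dade formula gives
\begin{align*}
\cE_T\Big(\sigma\int_0^\cdot \hat\pi_u \dd W^{\hat Q}_u\Big)^{-p}
&= \cE_T\Big(-\int_0^\cdot \sigma^{-1}\lambda(u)\dd W_u\Big)\,\exp\Big(\tfrac{1-p}{2p\sigma^2}\int_0^T \lambda(u)^2 \dd u\Big).
\end{align*}
The first factor on the right-hand side is precisely the continuous component of the density process $Z$ in Theorem~\ref{thm:ELMM} (with $y=\hat y$). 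Recalling also that $\hat z\,\diff\hat Q/\diff P = \hat z\,Z^1_T\,\mart[T]{G}{\zeta}$, with $\zeta$ from \eqref{eqn:thm:ELMM:zeta} and $Z^1$ the Brownian part of $Z$, the stochastic exponentials cancel and it remains to establish the $P$-a.s.\ identity
\begin{align}
\label{eqn:plan:residual}
x^{-p}\exp\Big(\tfrac{1-p}{2p\sigma^2}\int_0^T\lambda(u)^2\dd u\Big)\big(\mart[T]{\hat H}{\hat\xi}\big)^{-p}
&= \hat z\,\mart[T]{G}{\zeta}.
\end{align}

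Now I split according to $\{\gamma<T\}$ and $\{\gamma=T\}$. On $\{\gamma<T\}$, $\mart[T]{\hat H}{\hat\xi} = \cA^{\hat H}\hat\xi(\gamma) = \hat\xi(\gamma) a(\gamma,\hat y(\gamma),p)$ by \eqref{eqn:AHxihat}, and $\mart[T]{G}{\zeta} = \cA^G\zeta(\gamma) = \zeta(\gamma)(1+\hat y(\gamma))$ by \eqref{eqn:prop:ELMM:relations:AGzeta}. Expanding $\int_0^T\lambda(u)^2\dd u = \mu^2 T + \int_0^\gamma \phi'\hat y(\phi'\hat y - 2\mu)\dd u$, the factor $\exp(\tfrac{1-p}{2p\sigma^2}\mu^2 T)$ combines with the definition \eqref{eqn:zhat} of $\hat z$ so that, after taking logarithms and raising to $-1/p$, \eqref{eqn:plan:residual} becomes equivalent to
\begin{align*}
\log m(\gamma,\hat y(\gamma),p) - \log m(0,\hat y(0),p)
&= \int_0^\gamma\!\! \Big[\tfrac{1-p}{2p^2\sigma^2}\phi'\hat y(2\mu-\phi'\hat y)
-\tfrac{\phi'(1+\hat y)}{p\sigma^2}(\mu-\phi'\hat y) + \tfrac{\kappa^G\hat y}{p}\Big]\dd u
\end{align*}
after using \eqref{eqn:xihat} for $\hat\xi$ and \eqref{eqn:thm:ELMM:zeta} for $\zeta$. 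Differentiating the integral equation \eqref{eqn:thm:main result:integral equation} gives $\tfrac{\diff}{\diff t}\log m(t,\hat y(t),p) = n(t,\hat y(t),p)$, and expanding $n$ via \eqref{eqn:n} and \eqref{eqn:b} shows that the integrand on the right-hand side above equals $n(u,\hat y(u),p)$. The identity then follows by the fundamental theorem of calculus.

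On $\{\gamma=T\}$ (relevant only if $\Delta G(T)>0$), one uses the endpoint values $\cA^{\hat H}\hat\xi(T)=\hat\xi(T-)$ and $\cA^G\zeta(T)=\zeta(T-)$ together with the ``terminal condition'' $\lim_{t\uparrow\uparrow T}m(t,\hat y(t),p)=1$ from Theorem~\ref{thm:integral equation}; the same algebraic identity then closes the argument by letting $\gamma\uparrow\uparrow T$ in the calculation above. The main obstacle is purely the bookkeeping in the penultimate step: one has to track several $\mu^2 T$ and integral-of-$\lambda^2$ contributions and verify that the resulting integrand algebraically equals $n(u,\hat y(u),p)$, which is exactly the ODE satisfied by $\hat y$.
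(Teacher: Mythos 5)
Your approach is essentially the paper's: both begin from Lemma~\ref{lem:wealth process}, peel off and cancel the Brownian stochastic exponentials via a Dol\'eans--Dade manipulation, and reduce everything to a scalar identity in $\gamma$ which is then verified using the integral equation \eqref{eqn:thm:main result:integral equation} together with the algebraic identity that the paper records as \eqref{eqn:lem:OC1:pf:computation}. The only cosmetic difference is organizational: you split on $\{\gamma<T\}$ versus $\{\gamma=T\}$ immediately, whereas the paper states an intermediate functional identity \eqref{eqn:lem:OC1:pf:xihat zetahat}--\eqref{eqn:lem:OC1:pf:xihat zetahat T} valid for all $v$ and then substitutes $v=\gamma$; the $\{\gamma=T\}$ case is handled by passing to the limit $v\uparrow\uparrow T$ in the same way you describe.

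One sign slip: in your displayed reduction the first bracketed term should read $\tfrac{1-p}{2p^2\sigma^2}\phi'\hat y(\phi'\hat y-2\mu)$, not $\tfrac{1-p}{2p^2\sigma^2}\phi'\hat y(2\mu-\phi'\hat y)$. You correctly expand $\int_0^T\lambda^2\dd u = \mu^2 T + \int_0^\gamma\phi'\hat y(\phi'\hat y-2\mu)\dd u$, and since this quantity enters the residual identity with a $+$ sign (coming from $\exp(\tfrac{1-p}{2p\sigma^2}\int_0^T\lambda^2\dd u)$, after raising to $-1/p$ and then moving it to the other side when isolating $\log m(\gamma,\cdot)-\log m(0,\cdot)$, it reappears with its original sign), the sign of $(\phi'\hat y-2\mu)$ should be preserved. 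With that correction the integrand does indeed equal $n(u,\hat y(u),p)$ by \eqref{eqn:n} and \eqref{eqn:b}; as written, it differs from $n$ by $\tfrac{1-p}{p^2\sigma^2}\phi'\hat y(2\mu-\phi'\hat y)$. This is a transcription error, not a flaw in the plan.
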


\begin{proof}
By Lemma~\ref{lem:wealth process} and the fact that $U'(x) = x^{-p}$,
\begin{align}
\label{eqn:lem:OC1:pf:first step}
U'(X^{\hat\pi}_T)
&= x^{-p} \hat X_T^{-p}
= x^{-p} \cE_T\left ( \sigma \int_0^\cdot \hat\pi_t \dd W^{\hat Q}_t \right)^{-p} \left (\mart[T]{\hat H}{ \hat\xi}\right )^{-p}.
\end{align}
First, a standard calculation gives
\begin{align}
\label{eqn:lem:OC1:pf:first factor}
\cE_T\left ( \sigma \int_0^\cdot \hat\pi_t \dd W^{\hat Q}_t \right)^{-p} &= \cE_T \left ( -p\sigma \int_0^\cdot \hat\pi_t \dd W_t \right ) \exp \left ((1-p)\frac{p \sigma^2}{2}  \int_0^T \hat\pi_t^2 \dd t \right ).
\end{align}
To compute the second factor, we claim that for $v \in [0,T)$,
\begin{align}
\label{eqn:lem:OC1:pf:xihat zetahat}
\hat\xi (v)m(v,\hat y(v),p)
&= x^{-1} \hat z^{-\frac{1}{p}}\exp\left((1-p) \frac{\sigma^2}{2} \int_0^T \bar\pi(u,v)^2 \dd u\right ) \hat\zeta(v)^{-\frac{1}{p}},
\end{align}
where $\bar\pi$ is defined in \eqref{eqn:pibar} and $\hat\zeta$ is given in \eqref{eqn:thm:ELMM:zeta} (with $y$ replaced by $\hat y$). Moreover, in the case $\Delta G(T) >0$, we claim that
\begin{align}
\label{eqn:lem:OC1:pf:xihat zetahat T}
\hat\xi (T-) 
&= x^{-1} \hat z^{-\frac{1}{p}}\exp\left ((1-p) \frac{\sigma^2}{2} \int_0^T \bar\pi(u,u)^2 \dd u\right ) \hat\zeta(T-)^{-\frac{1}{p}}.
\end{align}
Then, by \eqref{eqn:AHxihat}, the definition of $m$ in \eqref{eqn:m}, \eqref{eqn:lem:OC1:pf:xihat zetahat}, and \eqref{eqn:prop:ELMM:relations:AGzeta}, on $\{\gamma < T\}$,
\begin{align}
\left (\mart[T]{\hat H}{\hat\xi} \right )^{-p}
&= \left (\cA^{\hat H} \hat\xi(\gamma)\right )^{-p} = \hat\xi(\gamma)^{-p} a(\gamma,\hat y(\gamma),p)^{-p} \notag\\
&= \left (\hat\xi(\gamma) m(\gamma,\hat y(\gamma),p)\right )^{-p}(1+\hat y(\gamma))\notag\\
&= x^p \hat z \exp\left ((1-p) \frac{\sigma^2}{2} \int_0^T \hat\pi_t^2 \dd t \right)^{-p} \hat\zeta(\gamma) (1+\hat y(\gamma)) \notag\\
\label{eqn:lem:OC1:pf:second factor}
&= x^p \hat z\exp\left(-(1-p) \frac{p\sigma^2}{2} \int_0^T \hat\pi_t^2 \dd t \right)\cA^G \hat\zeta(\gamma).
\end{align}
If $\Delta G(T) > 0$, then $\cA^{\hat H} \hat\xi(\gamma) = \hat\xi(T-)$ and $\cA^G \hat\zeta(\gamma) = \hat\zeta(T-)$ on $\lbrace \gamma = T \rbrace$. This together with \eqref{eqn:lem:OC1:pf:xihat zetahat T} shows that \eqref{eqn:lem:OC1:pf:second factor} holds on $\{\gamma = T\}$, too.

Finally, plugging \eqref{eqn:lem:OC1:pf:first factor} and \eqref{eqn:lem:OC1:pf:second factor} into \eqref{eqn:lem:OC1:pf:first step} yields by the definitions of $\hat\pi$ in \eqref{eqn:thm:main result:trading strategy} and $\frac{\diff \hat Q}{\diff P}$ in Theorem~\ref{thm:ELMM} (cf.~\eqref{eqn:thm:ELMM:Z}) that
\begin{align*} 
U'(X^{\hat\pi}_T)
&= \hat z \cE_T \left ( -p\sigma \int_0^\cdot \hat\pi_t \dd W_t \right ) \cA^G \hat\zeta(\gamma) = \hat z \cE_T \left ( -p\sigma \int_0^\cdot \hat\pi_t \dd W_t \right ) \mart[T]{G}{\hat\zeta} \\
&= \hat z\cE_T\left ( -\int_0^\cdot \frac{1}{\sigma} \left(\mu- \phi'(t)\hat y(t)\1_{\lbrace t \leq \gamma, t < T\rbrace}\right) \dd W_t \right ) \mart[T]{G}{\hat\zeta}
= \hat z \frac{\diff \hat Q}{\diff P}.
\end{align*}

It remains to show \eqref{eqn:lem:OC1:pf:xihat zetahat} and \eqref{eqn:lem:OC1:pf:xihat zetahat T}. First, an easy but tedious calculation using the definitions of $\hat\pi$ and $n$ in \eqref{eqn:pibar} and \eqref{eqn:n} shows that for $u \in [0,T)$,
\begin{align}
\label{eqn:lem:OC1:pf:computation}
\phi'(u) \bar\pi(u,u) (1+\hat y(u)) + n(u, \hat y(u),p)
&= \frac{1-p}{2 p^2 \sigma^2} \phi'(u)\hat y(u)\left(\phi'(u)\hat y(u)-2\mu \right) +\frac{1}{p}\kappa^G(u)\hat y(u).
\end{align}
Next, fix $v \in [0,T)$. Using first that $\hat y$ is a solution to the integral equation \eqref{eqn:thm:main result:integral equation} and the definitions of $\hat\xi$ in \eqref{eqn:xihat} and $\bar\pi$ in \eqref{eqn:pibar}, then \eqref{eqn:lem:OC1:pf:computation}, and finally again the definition of $\bar\pi$ and the definition of $\hat \zeta$ in \eqref{eqn:thm:ELMM:zeta} (with $y$ replaced by $\hat y$),
\begin{align*}
\hat\xi (v) \frac{m(v,\hat y(v),p)}{m(0,\hat y(0),p)}
&= \exp\left(\int_0^v \left(\phi'(u)\bar\pi(u,u) (1 + \hat y(u)) + n(u,\hat y(u),p)\right) \dd u  \right ) \\
&= \exp\left(\int_0^v \left(\frac{1-p}{2 p^2 \sigma^2} \phi'(u)\hat y(u)\left(\phi'(u)\hat y(u)-2\mu \right) +\frac{1}{p}\kappa^G(u)\hat y(u) \right) \dd u \right)\\
&= \exp\left ( -\frac{1-p}{2 p^2 \sigma^2} \mu^2 T + (1-p) \frac{\sigma^2}{2} \int_0^T \bar\pi(u,v)^2 \dd u\right ) \hat\zeta(v)^{-\frac{1}{p}},
\end{align*}
and using the definition of $\hat z$ in \eqref{eqn:zhat} gives \eqref{eqn:lem:OC1:pf:xihat zetahat}.

Finally, assume that $\Delta G(T) > 0$. Then also $\Delta \hat H(T) > 0$ since $\hat Q \approx P$. Moreover, by the proof of Theorem~\ref{thm:ELMM}, $\mart{G}{\hat \zeta}$ is positive, and by Lemma~\ref{lem:wealth process}, $\mart{\hat H}{\hat \xi}$ is positive. This together with Proposition~\ref{prop:local martingale property}~(a) and (b)~(i) implies that the limits $\hat\zeta(T-)$ and $\hat\xi(T-)$ exist in $\RR$.  Moreover, $\lim_{v \uparrow \uparrow T} m(v,\hat y(v),p) = 1$ by \eqref{eqn:thm:main result:integral equation} (recall that Theorem~\ref{thm:integral equation} shows that $\hat y$ is a solution to the integral equation) and thus \eqref{eqn:lem:OC1:pf:xihat zetahat T} follows from taking the limit $v \uparrow\uparrow T$ in \eqref{eqn:lem:OC1:pf:xihat zetahat}; the exchange of limit and integration on the right-hand side is justified by dominated convergence using the estimate $\vert \bar\pi(u,v)\vert \leq \frac{1}{p \sigma^2} ( \mu + \vert \phi'(u) \hat y(u) \vert)$ and \eqref{eqn:thm:integral equation:integrability properties}; also note that for $u\in[0,T)$, $\lim_{v \uparrow\uparrow T} \bar\pi(u,v) = \frac{1}{p \sigma^2}(\mu - \phi'(u)\hat y(u)) = \bar\pi(u,u)$.
\end{proof}

\begin{lemma}
\label{lem:OC2}
The triplet $(\hat \pi, \hat Q, \hat z)$ defined in the proof of Theorem~\ref{thm:main result} satisfies $\EX[\hat Q]{X^{\hat \pi}_T}=x$.
\end{lemma}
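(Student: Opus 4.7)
The plan is to show that $X^{\hat\pi}$ is a true $\hat Q$-martingale; since $X^{\hat\pi}_0 = x$ this yields (OC2) immediately. By Lemma~\ref{lem:wealth process},
\begin{align*}
X^{\hat\pi}
&= x\,\cE\bigg(\sigma\int_0^\cdot \hat\pi_u \dd W^{\hat Q}_u\bigg)\mart{\hat H}{\hat\xi},
\end{align*}
and $\hat\pi_t = \bar\pi(t,\gamma)$ with $\bar\pi$ as in \eqref{eqn:pibar}. I would apply Corollary~\ref{cor:ELMM:structure} with $y = \hat y$, $k(t,v) := \sigma\bar\pi(t,v) = \tfrac{\mu}{p\sigma} - \tfrac{\phi'(t)\hat y(t)}{p\sigma}\1_{\{t\leq v,\,t<T\}}$, and $\eta = \hat\xi$: the coefficients $k_1 \equiv \mu/(p\sigma)$ and $k_2 = -\phi'\hat y/(p\sigma)$ lie in $L^2[0,T]$ by \eqref{eqn:thm:integral equation:integrability properties}, $\hat\xi \in C^1[0,T)$, and the integrability $\int_0^T |\cA^{\hat H}\hat\xi(u)|\hat H'(u)\dd u < \infty$ follows from the nonnegativity of $\mart{\hat H}{\hat\xi}$ (Lemma~\ref{lem:wealth process}) via Proposition~\ref{prop:local martingale property}~(a) (applied under $\hat Q^\gamma$ with $\hat H$ in place of $G$). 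The corollary then reduces the claim to showing that $\mart{\hat H}{\hat\xi}$ is a $\hat Q$-martingale, and by Lemma~\ref{lem:filtration}~(b)~(ii) this is equivalent to its being a $(\hat Q^\gamma,\FF^\gamma)$-martingale.

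If $\Delta G(T) > 0$, then $\Delta \hat H(T) > 0$ by equivalence of measures, and Proposition~\ref{prop:local martingale property}~(b)~(i) (applied under $\hat Q^\gamma$ with $\hat H$ in place of $G$) delivers the martingale property at once.

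If $\Delta G(T) = 0$, then $\Delta \hat H(T) = 0$, and Proposition~\ref{prop:local martingale property}~(b)~(ii) reduces the task to verifying $\lim_{t\uparrow\uparrow T}\hat\xi(t)(1-\hat H(t)) = 0$. Combining \eqref{eqn:prop:ELMM:relations:zeta}, \eqref{eqn:thm:ELMM:zeta}, \eqref{eqn:xihat}, and the identity $\frac{\phi'}{p\sigma^2}(\mu - \phi'\hat y) = \kappa^G(1 - a(\cdot, \hat y, p))$ from \eqref{eqn:a}, a direct computation yields
\begin{align*}
\hat\xi(t)(1-\hat H(t))
&= \exp\bigg(-\int_0^t \kappa^G(u)(1+\hat y(u))\,a(u, \hat y(u), p)\dd u\bigg),
\end{align*}
so it suffices to show that the integral in the exponent diverges as $t\uparrow\uparrow T$. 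Rewriting the integrand as $\kappa^G(u)\,m(u, \hat y(u), p)(1+\hat y(u))^{1-1/p}$ via $m = (1+y)^{1/p}a$, one observes: (i) $m(u,\hat y(u),p)$ is bounded below by a positive constant since $\hat y$ solves \eqref{eqn:thm:main result:integral equation:appendix} and $\int_0^T|n(u,\hat y(u),p)|\dd u < \infty$ by \eqref{eqn:thm:integral equation:integrability properties}; (ii) for $p \geq 1$, the lower bound $1+\hat y\geq\epsilon$ from \eqref{eqn:lem:implicit function:strict local martingale inequalities} gives $(1+\hat y)^{1-1/p}\geq \epsilon^{1-1/p}>0$, so the divergence $\int_0^T\kappa^G\dd u = \infty$ (Proposition~\ref{prop:hazard rate}) finishes the argument; (iii) for $p < 1$, the bound $1+\hat y\leq C + C^2/\kappa^G$ derived from \eqref{eqn:lem:implicit function:strict local martingale inequalities} (noting that $\kappa^G<C\phi'$ forces $1/\phi'<C/\kappa^G$) yields $1+\hat y \leq C + C^2$ on the set $A := \{\kappa^G \geq 1\}$, and since $\int_{A^c}\kappa^G\dd u \leq T$ but $\int_0^T\kappa^G\dd u = \infty$, one has $\int_A\kappa^G \dd u = \infty$, which gives the divergence.

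The main obstacle is the case $\Delta G(T) = 0$ with $p < 1$: there the upper bound on $1+\hat y$ from \eqref{eqn:lem:implicit function:strict local martingale inequalities} is essential and must be combined with the fact that $\kappa^G$ is unbounded near $T$ (as $\int_0^T\kappa^G = \infty$ on a finite interval) to preserve the divergence of the relevant integral.
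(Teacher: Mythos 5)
Your proposal is correct and follows the paper's overall scaffold (reduce to $\mart{\hat H}{\hat\xi}$ being a $\hat Q$-martingale via Corollary~\ref{cor:ELMM:structure}, pass to $(\hat Q^\gamma,\FF^\gamma)$ via Lemma~\ref{lem:filtration}~(b)~(ii), dispose of $\Delta G(T)>0$ via Proposition~\ref{prop:local martingale property}~(b)~(i), reduce the case $\Delta G(T)=0$ to $\lim_{t\uparrow\uparrow T}\hat\xi(t)(1-\hat H(t))=0$, and arrive at the same representation $\hat\xi(t)(1-\hat H(t)) = \exp(-\int_0^t \kappa^G(1+\hat y)a\dd u)$). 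The final divergence argument, however, is genuinely different. The paper bounds the integrand below via $(1+\hat y)a \geq p\,b(\cdot,\hat y,p)$ (and, for $p\geq 1$, first passes to the $1/p$-th power and works with $b(\cdot,\hat y,1)$), then deduces $\int_0^T\kappa^G|b-1|<\infty$ from the representation of $n$ in \eqref{eqn:n} or \eqref{eqn:n identity} together with \eqref{eqn:thm:integral equation:integrability properties}, and finally concludes from $\int_0^T\kappa^G=\infty$. You instead rewrite the integrand as $\kappa^G\,m(\cdot,\hat y,p)\,(1+\hat y)^{1-1/p}$, get a uniform positive lower bound on $m(\cdot,\hat y,\cdot)$ directly from the integral equation together with $\int_0^T|n|<\infty$, and then control $(1+\hat y)^{1-1/p}$ from the two-sided bound \eqref{eqn:lem:implicit function:strict local martingale inequalities} --- using the lower bound for $p\geq 1$ and converting the upper bound (via $\kappa^G<C\phi'\Rightarrow 1/\phi'<C/\kappa^G$) into $1+\hat y\leq C+C^2/\kappa^G$ for $p<1$, then restricting to $A=\{\kappa^G\geq 1\}$ where $\int_A\kappa^G=\infty$. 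This is a clean and slightly more unified treatment of the two cases: it bypasses the auxiliary function $b$ entirely and makes the role of the integral equation (lower bound on $m$) and of the a~priori bounds on $\hat y$ explicit. The paper's route is marginally shorter since the integrability of $\kappa^G|b-1|$ handles both cases with essentially the same one-line argument, but both are valid.
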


\begin{proof}
It suffices to show that $X^{\hat\pi}$ is a $\hat Q$-martingale. Lemma~\ref{lem:wealth process} shows that $X^{\hat\pi}$ is of the form \eqref{eqn:cor:ELMM:structure}. Therefore, by Corollary~\ref{cor:ELMM:structure}, it suffices to prove that $\int_0^T \vert\cA^{\hat H} \hat \xi(u)\vert \hat H'(u) \dd u < \infty$ and that $\mart{\hat H}{\hat\xi}$ is a $\hat Q$-martingale. The first assertion follows directly from Proposition~\ref{prop:local martingale property}~(a) noting that $\mart{\hat H}{\hat\xi}$ is positive by Lemma~\ref{lem:wealth process}. For the second assertion, we note that by Lemma~\ref{lem:filtration}~(b)~(ii), it is enough to show that $\mart{\hat H}{\hat\xi}$ is a $(Q^\gamma,\FF^\gamma)$-martingale. To this end, by Proposition~\ref{prop:local martingale property}~(a) and (b), we may assume that $\Delta G(T) = 0$ (using that $\hat Q \approx P$) and have to check that $\lim_{t \uparrow \uparrow T} \hat \xi (t)(1- \hat H(t)) = 0$. We distinguish two cases.

First, let $p \geq 1$ and fix $t \in [0,T)$. Then as $1 - \hat H(t) \leq 1$,
\begin{align*}
0
&\leq \hat \xi (t)(1- \hat H(t))
\leq \hat \xi (t)(1- \hat H(t))^{1/p}
\end{align*}
and it suffices to show that the right-hand side converges to $0$ as $t\uparrow\uparrow T$.
Using first the definitions of $\hat\xi$ and $\hat H$ in \eqref{eqn:xihat} and \eqref{eqn:thm:ELMM:H}, then the definition of $a(\cdot, \cdot, 1)$ in \eqref{eqn:a}, and finally the definition of $b(\cdot,\cdot,1)$ in \eqref{eqn:b}, 
\begin{align}
\hat \xi (t)(1- \hat H(t))^{1/p}
&= \exp \left (\int_0^t \Big( \phi'(u)\frac{1}{p \sigma^2}(\mu - \phi'(u) \hat y(u))(1+\hat y(u)) - \frac{\kappa^G (u)}{p}(1 + \hat y(u)) \Big) \dd u \right) \notag\\
&= \exp \left (-\int_0^t \frac{\kappa^G (u)}{p} (1 + \hat y(u))\Big( 1 - \frac{\phi'(u)}{\kappa^G(u)} \frac{1}{\sigma^2}(\mu - \phi'(u) \hat y(u)) \Big) \dd u \right)\notag\\
&= \exp \left (-\int_0^t \frac{\kappa^G (u)}{p} (1 + \hat y(u))a(u,\hat y(u),1) \dd u \right) \notag\\
\label{eqn:lem:OC2:pf:high p:calculation}
&= \exp \left (-\int_0^t \frac{\kappa^G (u)}{p}b(u,\hat y(u),1) \dd u \right).
\end{align}
By the representation of $n$ in \eqref{eqn:n identity}, we have for $u \in [0,T)$,
\begin{align*}
n(u,\hat y(u),p)
&=  -\frac{1 - p}{2 p^2 \sigma^2} \mu^2 + \frac{1 - p}{2 p^2 \sigma^2} \left(\phi'(u) \hat y(u) - \mu\right)^2 + \frac{1}{p} \kappa^G(u)(b(u,\hat y(u),1) -1).
\end{align*}
As the left-hand side as well as the first two summands on the right-hand side are integrable on $(0,T)$ by \eqref{eqn:thm:integral equation:integrability properties}, we infer that $\int_0^T \kappa^G(u)\vert b(u,\hat y(u),1)- 1 \vert\dd u < \infty$. But $\Delta G(T) = 0$ implies that
\begin{align}
\label{eqn:kappa:nonintegrable}
\int_0^T \kappa^G(u) \dd u
&= -\log (\Delta G(T))
= \infty,
\end{align}
and so the right-hand side of \eqref{eqn:lem:OC2:pf:high p:calculation} converges to $0$ as $t \uparrow\uparrow T$.

Second, let $p < 1$ and fix $t\in[0,T)$. Using first the definitions of $\hat\xi$ and $\hat H$ in \eqref{eqn:xihat} and \eqref{eqn:thm:ELMM:H}, and then the definition of $a$ in \eqref{eqn:a},
\begin{align*}
\hat \xi (t)(1- \hat H(t))
&= \exp \left (\int_0^t \Big( \phi'(u)\frac{1}{p \sigma^2}(\mu - \phi'(u) \hat y(u))(1+\hat y(u)) - \kappa^G (u)(1 + \hat y(u)) \Big) \dd u \right)\\
&= \exp \left (-\int_0^t \kappa^G (u) (1 + \hat y(u))\Big( 1 - \frac{\phi'(u)}{\kappa^G(u)} \frac{1}{p \sigma^2}(\mu - \phi'(u) \hat y(u)) \Big) \dd u \right) \\
&= \exp\left(- \int_0^t \kappa^G(u)(1 + \hat y(u))a(u, \hat y(u), p) \dd u \right).
\end{align*}
Using the estimate
\begin{align*}
(1 + \hat y(u))a(u, \hat y(u), p)
&\geq p \left(1 + \frac{1}{p} \hat y(u)\right) a(u, \hat y(u), p)
= p b(u, \hat y(u), p), \quad u \in [0, T),
\end{align*}
we obtain
\begin{align}
0
&\leq \hat \xi (t)(1- \hat H(t))
\leq \exp\left(- p\int_0^t \kappa^G(u)b(u, \hat y(u), p) \dd u \right).
\label{eqn:lem:OC2:pf:low p:estimate}
\end{align}
By the definition of $n$ in \eqref{eqn:n}, we have for $u \in [0,T)$,
\begin{align*}
n(u,\hat y(u),p)
= -\frac{1-p}{2 p^2 \sigma^2} \left(\phi'(u) \hat y(u) \right)^2 + \kappa^G(u)(b(u,\hat y(u),p) -1).
\end{align*}
As the left-hand side as well as the first summand on the right-hand side are integrable on $(0,T)$ by \eqref{eqn:thm:integral equation:integrability properties}, we infer that $\int_0^T \kappa^G(u)\vert b(u,\hat y(u),p)- 1 \vert\dd u < \infty$. Combining this with \eqref{eqn:kappa:nonintegrable} shows that the right-hand side of \eqref{eqn:lem:OC2:pf:low p:estimate} converges to $0$ as $t\uparrow\uparrow T$.
\end{proof}

\small
\providecommand{\bysame}{\leavevmode\hbox to3em{\hrulefill}\thinspace}
\providecommand{\MR}{\relax\ifhmode\unskip\space\fi MR }
\providecommand{\MRhref}[2]{%
  \href{http://www.ams.org/mathscinet-getitem?mr=#1}{#2}
}
\providecommand{\href}[2]{#2}

\end{document}